\documentclass[manuscript,nonacm]{acmart}
\usepackage[nameinlink,capitalise]{cleveref}
\usepackage{xspace}
\usepackage{amsmath,dsfont,amsthm}
\usepackage[shortlabels]{enumitem}
\usepackage{tikz}

\setcopyright{none}
\copyrightyear{2021}
\acmYear{2021}
\acmDOI{}
\acmConference[]{}{}{}
\acmBooktitle{}
\acmPrice{}
\acmISBN{}
 
 \def\E{\mathbf{E}}
\renewcommand{\Pr}[1]{\mathds{P}\left(#1\right) }
\def\Prob{\mathds{P}}
\newcommand{\Ex}[1]{\mathds{E}\left(#1\right) }

\newcommand{\Ind}{\mathds{1}}
\newcommand{\ind}{\Ind}
\newcommand{\given}{\middle|}


\newcommand{\ignore}[1]{}

 \newcommand{\var}{\mathds{V}ar}

\newcommand{\minority}{\textsc{Diversification}\xspace}

\newcommand{\eps}{\varepsilon}

\begin{document}

\title{Diversity, Fairness, and Sustainability in Population Protocols}
\author{Nan Kang}
\affiliation{%
\institution{King's College London}
 \country{UK}   
 }
\author{Frederik Mallmann-Trenn }
\affiliation{%
\institution{King's College London}
 \country{UK}   
 }
\author{Nicol\'as Rivera}
\affiliation{%
\institution{IMDF, and Universidad de Valparaíso}
 \country{Chile}   
 }


\begin{abstract}
   Over the years, population protocols with the goal of reaching consensus have been studied in great depth. However, many systems in the real-world do not result in all agents eventually reaching consensus, but rather in the opposite: they converge to a state of rich diversity. 
    Consider for example task allocation in ants. If eventually all ants perform the same task, then the colony will perish (lack of food, no brood care, etc.). Then, it is vital for the survival of the colony to have a diverse set of tasks and enough ants working on each task. What complicates matters is that ants need to switch tasks periodically to adjust the needs of the colony; e.g., when too many foragers fell victim to other ant colonies. A further difficulty is that not all tasks are equally important and maybe they need to keep certain proportions in the distribution of the task. How can ants keep a healthy and balanced allocation of tasks?
    
    To answer this question, we propose a simple population protocol for $n$ agents on a complete graph and an arbitrary initial distribution of $k$ colours (tasks). In this protocol we assume that each colour $i$ has an associated weight (importance or value) $w_i \geq 1$. By denoting $w$ as the sum of the weights of different colours, we show that the protocol converges in $O(w^2 n \log n)$ rounds to a configuration where the number of agents supporting each colour $i$ is concentrated on the fair share $w_in/w$ and will stay concentrated for a large number of rounds, w.h.p. 
    
    Our protocol has many interesting properties: agents do not need to know other colours and weights in the system, and our protocol requires very little memory per agent. Furthermore, the protocol guarantees fairness meaning that over a long period each agent has each colour roughly a number of times proportional to the weight of the colour. Finally, our protocol also fulfils sustainability meaning that no colour ever vanishes. All of these properties still hold when an adversary adds agents or colours.
 
\end{abstract}


\maketitle

\section{Introduction}
Diversification,
i.e., the action of diversifying something, is omnipresent:
companies balancing the ratio of male to female employees,
farmers sowing a large variety of crops,  investment funds diversifying their portfolios to be more resilient,
bees reallocating tasks to guarantee the survival of the hive, etc.

In order to understand how diversification works on a global scale  through \emph{local choices}, we study the problems from a population protocols perspective: through simple local choices. Consider a population of $n$ agents, where each agent has one of $k$ colours and each colour has an associated weight greater than or equal to $1$. The main question we ask ourselves is: Assuming limited interactions between the agents, does there exist a population protocol that converges to a state where each colour has a support which is, roughly, proportional to its weight?

Before answering such a question, it is relevant to explain why this problem is not trivial. At first we may wonder why we just don't trivially choose a random colour with probability proportional to its weight? There are a couple of reasons for not doing this. First, in order to carry out such simple protocol, each agents needs to know the normalisation constant of the distribution, and thus we need to know all the different colours presented in the system and their respective weights, which requires too much memory and computation. Second, if one colour is removed from the system (e.g. an external agent recolours all red agents blue\footnote{This could arise when a task is fulfilled and no longer necessary.}), other agents will not be notified of this change. The same happens if nature changes the colour of an agent by a completely new one (e.g. an ant notices that the nest temperature is too hot and starts fanning). Therefore, the trivial protocol is not robust to changes in the structure of the population, and therefore it is not good to solve the diversification task. In order to make this protocol more robust, agents would need to constantly broadcast their list of colours, implying a lot of time, memory, local computation, and large messages, likely exceeding the capabilities of simple organisms such as ants. 

In this work, we propose a much more biologically plausible protocol, the \minority protocol, where each agent only requires one extra bit of memory (besides remembering its own colour and its corresponding weight), and agents only need to communicate their own colour and weight. We will show that the \minority protocol achieves \emph{diversity}, and we will also show that our protocol is \emph{fair} and \emph{sustainable}. 
By fairness, we mean that over a long enough period, each agent $u$ has each colour $i\in [k]$ a portion of time which is roughly proportional to the weight $w_i \geq 1, w_i \in \mathbb{R}$ associated to the colour. For example, in the context of task-allocation, this means that all agents perform every task for roughly the same amount of time.

Sustainability refers to the process guaranteeing (with probability $1$) that no colour ever vanishes. Our protocol is also robust to structural changes: even when an adversary adds agents and colours, the protocol quickly returns into a state  of diversity and fairness. Note that none of these properties implies one another.

Our \minority process is not only simple, but also very efficient in converging, as it takes $O(w^2n \log n)$ time-steps to converge, where $w = \sum_{i=1}^k w_i$ is the sum of the $k$ weights of the different colours in the system.  Assuming that the total weight $w$ does not depend on the size of the population, the protocol is even asymptotically optimal, as for example, if a colour is supported by exactly one agent out of $n$, then such colour has to propagate to at least a set of agents of size $\Theta(n)$, which takes at least time $\Omega(n \log n)$ in a population protocol by simple broadcasting arguments.

Interestingly, the questions of diversity, fairness and sustainability  have not been asked in the context of  population protocols, yet.\footnote{There has been some work by Yasumi et al. on finding equi-sized partitions in deterministic population protocol settings. See \cref{sec:related} for more details. }
Instead, research has focused  almost uniquely on understanding the opposite question: how can the population agree on one colour?

\subsection{Related work}\label{sec:related}

 The closest works to ours are \cite{yasumi1,yasumi2,yasumi3} by Yasumi et al.,  where they study population protocols designed to obtain equi-sized partitions. In these works  the authors focused on deterministic  scheduling (worst-case), and investigated several space-complexity problems under several assumptions on, e.g., initial states, power of the (deterministic) adversarial scheduling, etc, where the main objective is to find protocols that use the least number of states. Our approach is rather different as we assume a random schedule, where no adversary is trying to boycott our protocol by slowing down interactions between agents, and thus we focus on finding a fast and simple protocol.
 
Diversity can be seen as an opposite problem of consensus, which is the most studied problem in the field of population protocols. Here $n$ agents start with one of $k$ distinct colours and the goal is to converge quickly to a single colour. The required time is called the \emph{consensus time}.
This finds application in understanding spreading phenomena, for example the
spread of infectious diseases, rumours and opinions in societies, or
viruses/worms in computer network. 
Arguably, the simplest  model is the so-called Voter model, where each agent simply samples a neighbour uniformly at random and adopts its opinion. See \cite{CEOR13,KMS19,OP19} for the state-of-the-art. Other well studied consensus population protocols include
2-Choices and 3-Majority to reach consensus. In the former each agent $u$ samples two other agents in each time-step and updates its colour if both other agents have the same colour. In the latter, $u$ samples two agents and if together with $u$'s own colour there is a majority, then $u$ picks the corresponding colour; otherwise it picks one uniformly at random among the three colours.
See \cite{CER14, BCNPS15,CERRS15, BCNPT16,Cooper2017Fast, MMM18, CNNS18,  Kang19} for references of the state-of-the-art. While 2-Choices and 3-Majority are not exactly a population protocol in the sense that interactions are not pairwise, those protocols have been extended to actual population protocols, with the help of extra memory per agent \cite{DBLP:conf/soda/BecchettiCNPS15}.
The state-of-the-art for population protocols reaching consensus is \cite{BerenbrinkGK20} (with elaborate communication and timing strategies). Other related spreading/consensus processes include the \textsc{Moran} process~\cite{LHN05, L14},
contact processes, and other classic epidemic processes~\cite{BG90, L99, nature16}.

Another protocol closely related to ours is the anti-voter model, where each agent starts with one of two \emph{opposite} colours. In this protocol agents interact in the same way as the Voter model, but agents adopt the \emph{opposite} colour of the sampled neighbour. It is proven that in the long term agents reach an equilibrium and that agents are constantly changing colours \cite{aldousFill14, Rinott97}. However, it is worth noting that this protocol works only for two colours and that agents know the colours, thus it is not straightforward to extend this protocol to more colours, without the use of extra memory by the agents.

Another related type of protocols are averaging processes \cite{MR2908618}, in which upon a pairwise interaction, both agents adopt the average of their opinions. In \cite{wanka}, the authors considered various average processes, in the context of load balancing. Their most related process is the so-called diffusion load-balancing process in the matching model, in which each node starts with some load and in every round the nodes are randomly matched and average their load. The authors obtained bounds on the convergence time and the load discrepancy over time. Recently, authors of \cite{MMP19} considered the noisy-averaging population protocol. Here, each agent has an initial value and the goal is to agree on the average. The catch is that the communication is noisy and the value communicated by the agents can be altered. The authors studied the behaviour of the simple averaging protocol in which two agents are chosen uniformly at random and both set their values to the average of the received value and their own value.

\medskip

Population protocols have also been used in the context of community detection. Most of the research here has focused on the stochastic block model where each agent belongs to one of two communities. The agents from the same community have an edge with probability $p$ and agents from different communities have an edge with probability $q$, where $q<p$. The goal is to recover the hidden partition.
See
\cite{MMM18,FriendorFoe,CNS19} for more details.

There has also been a large body of research on task allocation. Note that our problem is a special case of task allocation where each task has a demand proportional to $w_i$. See 
\cite{alex, ants,radeva2017costs} for ant-inspired research on task allocation. The main difference from our work is that this line of research assumes that the ants receive  feedback from the tasks indicating whether there are too many or too few ants working on the task. In our setting, there is no such explicit feedback; agents receive implicit feedback through sampling the current colour of other agents.

\subsection{Model and protocol}
We study the following population protocol.
Initially, there are $n$ agents, each of which has one colour out of $k$ colours; and each colour $i \in [k]$ has an associated weight $w_i \geq 1$.  The system evolves as follows: at every time-step an agent $u$ is scheduled u.a.r. (uniformly at random). The scheduled agent $u$ then samples another agent $v$ u.a.r, and \textit{observes} the colour of $v$, as well as its weight, and then processes such information, leading to a possible change of colour and the associated weight. We denote by $c_u(t)$ the colour of agent $u$ after the $t$-th time-step, and $c_u(0)$ its initial colour. Additionally, we denote by $C_i(t)$ the number of agents with colour $i$ at time-step $t$, i.e., $C_i(t) = |v\in [n]: c_v(t) = i|$.

We now introduce the properties we will study in this paper: diversity, fairness, and sustainability.

\begin{definition}\label{def:props}\text{}

\begin{enumerate}
    \item {\bf{Diversity}}.  A protocol is diverse if there exists a $t_0\geq 0$ and $T = \Omega(n^{\alpha})$ such that w.h.p.\footnote{We use w.h.p. to denote ``with high probability" meaning w.p. at least $1-1/n$.} 
\begin{align}\label{eqn:defiDiver}
\left|\frac{C_i(t)}{n}- \frac{w_i}{w}\right | = \tilde O\left(\frac{1}{\sqrt{n}}\right)\,,
\end{align}    
for any $t \in \{t_0,\ldots, T\}$  and $i\in [k]$. In words, the population stabilises in a subset of configurations such that each colour  $i$ appears in the population proportionally to its weight $w_i$ for $\Omega(n^{\alpha})$ time-steps. Note that if $w_i=1$ for all colour $i$, then achieving diversity is equivalent to finding a uniform partition of the vertices.

    \item {\bf  Fairness}.   A protocol is fair  if there exists $T = \Omega(n^{\beta})$ such that for any $T'>T$, it holds that  w.h.p  \[  \frac{\left|\{  t' \in[0, T'] \colon c_{u}(t')=i \}\right|}{T'}
    =(1\pm o(1)) \frac{w_i}{w} \,,\] 
    for any $i\in [k]$ and any agent $u$. In  words, each agent has, roughly, colour $i$ for a proportion $w_i/w$ of the times.
    \item {\bf Sustainability}.
    A protocol is sustainable if for all $t>0$ and any colour $i$, there exists at least one agent with colour $i$ at time $t$. In other words, no colour ever vanishes.
\end{enumerate}
If a protocol is diverse, fair, and sustainable, it is called \textbf{good}.
\end{definition}

\paragraph{\minority Protocol}We proceed to define the \minority protocol, which satisfies all the above properties. In the \minority protocol all agents have an extra bit of memory, which can be observed by other agents when they interact. This bit represents the degree of confidence they have in their current colour. Agents whose bit is 1 will not change their current colour, whereas agents whose bit is 0 are open to change their current colour. Pictorially, we call colours with bit 0 \emph{light}, and those with bit 1 \emph{dark}, both of which are also referred to as \emph{shade}. Therefore, a dark colour needs to become light before changing into a completely different one. Let $b_u(t)$ represents the extra bit of memory of agent $u$ after the $t$-th iteration. We assume that $b_u(0) = 1$ for all agents $u$. The \minority protocol is defined as follows: suppose that in the $(t+1)$-th time-step agent $u$ is scheduled and it samples another agent $v$ u.a.r. If $u$ has a light colour, and $v$ has a dark colour, then $u$ adopts the colour of $v$ and its shade. If both $u$ and $v$ have the same dark colour $i$, then $u$ changes its colour to light (i.e. it changes the value of the extra bit to $0$) with probability $1/w_i$\,, where $w_i \geq 1$ is the associated weight of colour $i$\,. Formally, the changes occur according to the following rule:
\begin{align}\label{eqn:definitionDiver}
(c_{u}(t+1),b_u(t+1)) = \begin{cases} 
      (c_v(t),1) & \text{if } b_u(t) = 0 \text{ and } b_v(t) =1\,, \\
      (c_u(t),0) ~ w.p.~\frac{1}{w_{c_u(t)}} & \text{if } b_u(t)=b_v(t) = 1\,, \text{ and } c_u(t) = c_v(t)\,, \\
      (c_u(t),b_u(t)) &  \text{otherwise}. 
   \end{cases}
\end{align}

Note that when all weights are equal to $1$, our protocol gives a deterministic protocol for the uniform partition problem.

The intuition of the protocol, from the agent point of view, is that if agent $u$ observes another agent $v$ with the same colour, then this suggests that this colour is over-represented. Of course, using only one observation to decide if one colour is over-represented is very crude.
Nonetheless, in expectation this approach works: the colour distribution approaches the target distribution (e.g., uniform distribution if all weights are $1$).
Now, since we have $n$ agents performing the same protocol over long periods, emergent behaviour appears:
the colour distribution will be concentrated around its expected value, the target distribution.

To ensure that the protocol indeed converges towards the target distribution, two rules are important. 
The first one is that only light shaded colours can change their colour (first line of \cref{eqn:definitionDiver}), the second rule is that dark shaded colours change their shade to light if they observe another same colour with dark shade with probability equals to the inverse of the weight of the colour, otherwise they keep the shade (second line of \cref{eqn:definitionDiver}). 
This causes heavy weighted colours to change less often than light weighted ones, and thus we expect to see more of the former than the latter.

A rough idea of why this protocol works is that for each colour $i$, the 'rate' at which colours $i$ decrease by 1 is $\frac{1}{n^2}C_i(t)^2/(w_i)$, whereas the 'rate' $i$ increases by $1$ is $\frac{1}{n^2} C_i(t) \sum_{j=1}^{k}C_j(t)/w_j$, hence equilibrium is achieved when all values $C_i(t)/w_i$ are roughly the same, that is $C_i \approx w_i n/w$, leading to diversity. 
Fairness is obtained by noting that an agent changes its colour to colour $j$ with probability proportional to $C_j(t)$, and since the protocol eventually reaches a configuration with $C_j(t) \approx w_j n/w$, we have that for large $t$, an agent has colour $j$ with probability, approximately, a $w_j/w$ fraction of the time. Sustainability simply follows from the fact that an agent with a dark shade of a colour changes its colour only when the agent encountered another agent with the same dark shaded colour. Since only one agent changes its colour at a given time step, no dark shaded colour can ever vanishes .Moreover, Sustainability is preserved if new agents are added to the system, or new colours are added, as long as the new colours are initially dark, and they do not replace the last dark version of another colour.

In the next section we will formalise these ideas, and we will show that the \minority protocol satisfies the Diversity and Fairness properties. 

\paragraph{Derandomisation} Before finishing this section, we describe a derandomised version of our protocol that avoids the sampling procedure in the second line of \cref{eqn:definitionDiver}. Here we shall assume that all weights are non-negative integers. Then, the derandomised \minority protocol is described as following: Instead of having light and dark shades, we will have $1+w_i$ different shades of grey  which are enumerated from $0$ (light) to $w_i$ (dark). Whenever agent $u$ that has colour $i$ and shade greater than 0 is scheduled and it chooses another agent with the same colour and shade greater than $0$, then $u$ reduces the shade of its colour by $1$. If an agent $u$ has colour $i$ and shade $0$ chooses an agent $v$ with shade greater than 0, then $u$ adopts the colour of $v$, say $j$, and sets its shade to $w_j
$. In all other interactions, the agents do nothing. Note this algorithm requires $\lceil \log_2(1+w_i)\rceil$ extra bits of memory when it adopts colour $i$\,.
\subsection{Results}
For simplicity, in this work we assume that $k$ and $w$ are constants, however, we state most of the intermediate results in terms of $k$ and $w$, but we do not attempt to optimise the terms involving $k$ or $w$.

\begin{theorem}\label{thm:main}
For constant $k$ and $w$, the \minority protocol is good, i.e., it achieves diversity, fairness and sustainability.
\end{theorem}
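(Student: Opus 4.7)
The plan is to establish sustainability, diversity, and fairness in turn, each by an essentially independent argument. Sustainability is almost immediate from the update rule: an agent with bit $b=1$ cannot change its colour (only its bit can flip to $0$, via the second case of \cref{eqn:definitionDiver}), so writing $D_i(t)$ for the number of dark agents of colour $i$ at time $t$, $D_i$ can decrease only when a scheduled dark $i$-agent samples \emph{another} dark $i$-agent. In particular, once $D_i=1$, the unique dark $i$-agent cannot sample itself, so $D_i$ stays $\geq 1$. Since every agent starts dark, $D_i(0)\geq 1$ for every initially-present colour, and by induction $D_i(t)\geq 1$ for all $t$. The same invariant handles robustness to an adversary that inserts new agents or colours, provided each new colour arrives with at least one dark agent.

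For diversity I would set up a drift/potential argument on the signed deviations $\xi_i(t)=C_i(t)-w_i n/w$. A direct case analysis of the update rule yields
\begin{align*}
\E\!\left[C_i(t+1)-C_i(t)\given \mathcal F_t\right]=\frac{D_i L - L_i D}{n(n-1)},
\end{align*}
where $L_i(t),D_i(t)$ count light and dark $i$-agents, $L=\sum_j L_j$, $D=\sum_j D_j$; this drift vanishes at the configuration $D_i= w_i L$, $L_i=w_i L/w$ singled out by also zeroing the drift of $D_i$, and these relations together force $C_i=w_i n/w$. After a burn-in that brings the shade partition close to $D_i\approx w_i L$, I would introduce two potentials, $\Phi(t)=\sum_i \xi_i(t)^2$ for the colour deviations and $\Psi(t)=\sum_i(D_i(t)-w_i L(t))^2$ for the shade deviations, and prove a joint contraction
\begin{align*}
\E\!\left[\Phi(t+1)+\lambda\Psi(t+1)\given \mathcal F_t\right]\le \left(1-\tfrac{c}{w^2 n}\right)\bigl(\Phi(t)+\lambda\Psi(t)\bigr)+O(1),
\end{align*}
for suitable constants $\lambda,c>0$. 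Integrating yields $\E[\Phi(t_0)]=O(w^2 n)$ at $t_0=\Theta(w^2 n\log n)$; a Freedman-type martingale concentration, using that a single step changes $\Phi$ by at most $O(\sqrt n)$, then upgrades this to $|\xi_i(t)|=\tilde O(\sqrt n)$ w.h.p., and a union bound over $t\in[t_0,T]$ for polynomial $T$ extends the estimate to the whole window, establishing diversity.

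Fairness follows from diversity by an agent-level mixing argument. Conditional on the diversity event, the transition kernel seen by a fixed agent $u$ upon being scheduled lies within $\tilde O(1/\sqrt n)$ total variation of the distribution that assigns mass $w_j/w$ to colour $j$. Viewing $c_u(\cdot)$ as an almost-stationary inhomogeneous Markov chain on $[k]$ with spectral gap of order $1$, standard Chernoff/Azuma-type concentration for Markov chains yields $|\{t'\le T': c_u(t')=i\}|/T'=(1\pm o(1))w_i/w$ w.h.p., for any $T'$ a sufficiently large polynomial in $n$. The main obstacle is the joint contraction for $(\Phi,\Psi)$: the drift of $\xi_i$ depends on the shade partition $(L_i,D_i)$ rather than on the colour counts alone, so the two potentials must be coupled through a carefully chosen $\lambda$ to kill the cross terms and to handle the non-symmetric dependence on the weights $w_i$; once this contraction is in hand, diversity follows, and fairness and sustainability are comparatively easy corollaries.
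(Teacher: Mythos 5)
Your overall architecture matches the paper's (sustainability from the update rule; a potential-function contraction plus concentration for diversity; a single-agent Markov-chain approximation for fairness), your sustainability argument is exactly the paper's, and your one-step drift formula for $C_i$ is correct. However, two steps conceal genuine difficulties. First, the proposed contraction $\E\left[\Phi(t+1)+\lambda\Psi(t+1)\given \mathcal F_t\right]\le(1-c/(w^2n))(\Phi+\lambda\Psi)+O(1)$ cannot hold from an arbitrary initial configuration: the restoring coefficient on $\xi_i$ is of order $L(t)w/n^2$, which is only $\Theta(1/n)$ once the shade partition is already near equilibrium, and if some colour is supported by a single dark agent the relevant transition rates are $\Theta(1/n^2)$, so the potential barely moves. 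The ``burn-in'' you dispatch in one clause is the entire Phase~1 of the paper (\cref{lemma:part1}, \cref{lemma:part2}, \cref{thm:phase1thm}), proved by coupling $a(t)$ and each $A_i(t)$ with biased random walks over $O(w^2 n\log n)$ steps; it is the part of the argument that a Lyapunov computation cannot replace. Relatedly, your claim that one step changes $\Phi$ by $O(\sqrt n)$ is false in general: the increment is $O(\max_i|\xi_i|)=O(\sqrt{\Phi})$, which is $\Theta(n)$ at the start of the descent. The paper circumvents this with a level-set/halving scheme in which the increment bound $\gamma\sim\sqrt m$ shrinks with the current level $m$ (\cref{lemma:processConcentration} applied inside \cref{lemma:firstPotentialSmall}); a single Freedman application with a fixed $O(\sqrt n)$ increment bound is not available.

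The fairness step has a more substantive gap. You condition only on the diversity event, i.e.\ on control of the colour counts $C_i$, but the transition kernel seen by a single agent is governed by the dark/light split: a light agent adopts dark colour $j$ with probability $A_j(t)/(n(n-1))$, and a dark $i$-agent turns light with probability $A_i(t)/(w_i n(n-1))$; none of these quantities is determined by $C_i=A_i+a_i$. This is exactly why the paper needs Phase~3 --- the potential $\sigma^2(t)=(A(t)/w-a(t))^2$ and \cref{thm:approximation} --- to pin down $A_i$ and $a_i$ separately before the kernel approximation in \cref{eqn:approxMo1} and \cref{eqn:approxMo2} is meaningful. (Your $\Psi$ would supply this information if you carried its conclusion into the fairness argument, but as written you do not.) Two further corrections: the limiting chain has off-diagonal entries of order $1/n$, so its spectral gap is $\Theta(1/n)$, not $\Theta(1)$ as you assert (harmless for $T'\ge n^8$, but it enters the Markov-chain Chernoff bound through the mixing time, cf.\ \cref{thm:MCchernoff}); and $c_u(\cdot)$ is not an inhomogeneous Markov chain, since its transition probabilities depend on the random configuration, which is correlated with the agent's own past --- the paper replaces ``standard concentration'' by an explicit stochastic domination between $N^o_{D_\ell}$ and the hit counts of the perturbed chains $P^{\pm}_{D_\ell}$, and some version of that coupling is needed in your argument as well.
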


To prove the previous theorem we just need to prove that the protocol is diverse and fair as we already argued that the protocol achieves sustainability. For diversity, we rely heavily on the following theorem, which directly gives the desired result.
\begin{theorem}\label{thm:easy}
There exists $T = O(w^2n\log n)$, such that w.h.p. it holds that for every colour $i$ we have
\[ \sum_{i=1}^k \sum_{j=1}^k\left|\frac{C_i(t)}{w_i}-\frac{C_j(t)}{w_j}\right|^2 = O(wn\log n)\,,\]
for all $t$ in the interval $[T, n^8]$.
\end{theorem}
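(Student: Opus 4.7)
The plan is to treat the quantity
\[
\Phi(t) = \sum_{i=1}^k\sum_{j=1}^k \left(\frac{C_i(t)}{w_i} - \frac{C_j(t)}{w_j}\right)^2
\]
as a Lyapunov function, establish a one-step drift inequality of the form $\E[\Delta \Phi \mid \mathcal{F}_t] \leq -c_1 \Phi(t)/(w^2 n) + c_2 w/n$ for suitable constants $c_1,c_2>0$, and then combine a Lyapunov/mixing argument on the first $T = O(w^2 n \log n)$ steps with a concentration argument to control $\Phi$ on the remainder of $[T,n^8]$.

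To compute the drift I would first refine the state by splitting agents by shade: let $C_i^D(t), C_i^L(t)$ denote the number of dark, respectively light, agents with colour $i$, and $D=\sum_i C_i^D$, $L=\sum_i C_i^L$. Enumerating the three cases in \cref{eqn:definitionDiver} yields
\[
\E[\Delta C_i \mid \mathcal{F}_t] = \frac{1}{n^2}\bigl(C_i^D L - C_i^L D\bigr),
\]
together with explicit expressions for $\E[(\Delta C_i)^2]$ and $\E[\Delta C_i \Delta C_j]$. The algebraic core of the drift bound is the identity that, when the shade distribution is at its conditional equilibrium $C_i^D = w_i L$ (which forces $L = n/(w+1)$), one obtains $\E[\Delta(C_i/w_i)\mid \mathcal{F}_t] = \tfrac{w}{n(w+1)}\bigl(n/w - C_i/w_i\bigr)$, i.e., a linear contraction of $C_i/w_i$ towards the target $n/w$. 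Feeding this back into the quadratic expansion of $\Phi$ and summing over $(i,j)$ produces the required negative drift; the error from the shade distribution not being exactly at its conditional equilibrium is absorbed either into the additive noise constant $c_2$, or into a second auxiliary potential measuring shade imbalance, which itself equilibrates on a faster timescale of $O(wn\log n)$ by the same type of argument.

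Given the drift inequality, standard Lyapunov/supermartingale analysis gives $\E[\Phi(t)] = O(wn)$ as soon as $t \geq T = O(w^2 n\log n)$, starting from the trivial bound $\Phi(0) \leq k^2 n^2$. Markov's inequality combined with a boosting step (e.g., checking the process at $\Theta(\log n)$ spaced times and union-bounding) upgrades this to $\Phi(T) = O(wn\log n)$ w.h.p. To extend the bound to every $t\in[T,n^8]$, I would use the fact that $|\Delta \Phi| = O(n)$ a.s., and that the drift stays negative whenever $\Phi \geq C\, wn\log n$ for large enough $C$; a supermartingale/Azuma-type tail bound, together with a union bound over at most $n^8$ time steps, then keeps $\Phi(t) = O(wn\log n)$ throughout the interval.

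The main obstacle is the drift calculation itself: $\E[\Delta C_i \mid \mathcal{F}_t]$ depends on the unobserved shade decomposition $(C_i^D, C_i^L)$ rather than on $C_i$ directly, so either one must prove a separation-of-timescales result showing that the shades equilibrate in $O(wn\log n)$ steps and then argue on a post-burn-in process, or one must design a joint potential in the variables $(C_i, C_i^D)$ whose drift is negative in a single unified computation. Matching the claimed $w$-dependence and the multiplicative constants in the drift is the technically demanding step; once the inequality is in place the remainder of the argument is a standard drift-plus-concentration template.
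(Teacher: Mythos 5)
Your overall template (drift plus concentration) is the right shape, and you correctly identify the central difficulty --- that $\E[\Delta C_i\mid\mathcal F_t]$ depends on the shade split $(C_i^D,C_i^L)$ and not on $C_i$ --- but neither of your proposed resolutions survives scrutiny, and this is where the paper does real work. Absorbing the shade error into the additive constant $c_2$ fails because the error enters the drift \emph{multiplicatively}: a per-colour shade imbalance of size $\epsilon_i$ perturbs $\E[\Delta(C_i/w_i)]$ by $\Theta(\epsilon_i/n)$, which is of the same order as the restoring drift itself whenever $\epsilon_i$ is comparable to the deviation of $C_i$, so the sign of the drift is not controlled. The ``auxiliary shade potential on a faster timescale'' is closer to the truth, but note that your conditional-equilibrium condition $C_i^D=w_iL$ for all $i$ is exactly the statement that all $A_i/w_i$ coincide --- i.e., controlling the shade split per colour is the same problem in disguise. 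The paper therefore never works with $C_i$ at all: it runs a cascade on $\phi(t)=\sum_{i,j}(A_i/w_i-A_j/w_j)^2$ (whose drift is self-contained given only a multiplicative-constant control on $a$ and the $A_i$) and then on $\psi(t)=\sum_{i,j}(a_i/w_i-a_j/w_j)^2$, whose drift has a source term bounded by $\sqrt{\psi\phi\, n/k}$ and hence is only negative once $\psi\geq 16\phi$ and $\phi$ is already small; the theorem then follows from $C_i=A_i+a_i$ and $\sum_{i,j}(C_i/w_i-C_j/w_j)^2\leq 2\phi+2\psi$. The ordering of the two subphases is essential and is not recoverable from your sketch. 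You also cannot start the drift argument from an arbitrary configuration: at $t=0$ all agents are dark, so no $C_i$ can change at all and your claimed drift inequality is false; the paper spends all of Phase~1 (biased random-walk couplings, $O(w^2n\log n)$ steps) just to reach the multiplicative-constant regime on which the drift computations are conditioned.

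The concentration step is also quantitatively insufficient as written. Markov's inequality from $\E[\Phi(t)]=O(wn)$ plus a union bound over $\Theta(\log n)$ checkpoints does not yield a w.h.p.\ bound (the failure probability per checkpoint is only $1/\mathrm{polylog}$), and an Azuma bound with the crude increment $|\Delta\Phi|=O(n)$ over a window of $\Theta(wn)$ steps gives fluctuations of order $n^{3/2}\sqrt{w}$, far above the target $wn\log n$; a union bound over $n^8$ times makes this worse. The paper needs the finer facts that $\var(\Delta\Phi\mid\mathcal F_t)=O(\Phi/w+1)$ and $|\Delta\Phi|=O(k\sqrt{\Phi}+k)$, feeds them into a Freedman/Chung--Lu-type inequality (its Lemma~\ref{lemma:processConcentration}), and bootstraps by halving the bound on $\Phi$ in $O(\log n)$ stages, each stage's variance bound improving because $\Phi$ is already smaller. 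Without a variance-sensitive inequality and this stage-wise self-improvement, you do not reach $O(wn\log n)$ with probability $1-n^{-r}$ uniformly over $[T,n^8]$.
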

Dividing by $2k^2$ both sides of the previous equation, expanding the squared, and rearranging some terms yields
\begin{align}\label{eqn:varversion}
    \frac{1}{k}\sum_{i=1}^k \left(\frac{C_i(t)}{w_i}- x\right)^2 = O(wn(\log n)/k^2)\,,
\end{align}

where $x = (1/k)\sum_{i=1}^k C_i(t)/w_i$. From here we will deduce that
\begin{align}\label{eqn:after2}
    \frac{C_i(t)}{w_i} = \frac{n}{w}+O(k\sqrt{wn\log n})
\end{align}
 holds after $O(n\log n)$ time-steps for at least  $\Omega(n^8)$ time-steps, 
with high probability. To derive \cref{eqn:after2} we just need to verify that $x = n/w + O(k\sqrt {w n \log n})$. For that notice that \cref{eqn:varversion} yields
\begin{align}
    C_i(t) = w_ix + w_i O(\sqrt{wk n\log n})
\end{align}
hence, by using that $\sum_{i=1}^k C_i(t) = n$, we get
\begin{align}
    n = wx + w O(\sqrt{wk n\log n})
\end{align}
from which we deduce that $x = \frac{n}{w}+ O(\sqrt{wk n\log n})$, and thus \cref{eqn:after2} holds. 

The previous result implies that the \minority protocol is diverse. While we proved that $\alpha$ in the definition of diversity can be chosen as $\alpha = 8$, a simple inspection of our proof shows that $\alpha$ can be chosen as an arbitrarily large constant. The proof of  \cref{thm:easy} can be found in \cref{sec:mainproof}\,.

The proof of fairness is provided in \cref{sec:props}\,, which is built on the results proved in  \cref{sec:mainproof}\,. The main idea is that after $O(n\log n)$ time-steps the whole system stabilises for at least $\Omega(n^8)$ time-steps in configurations where colours (including light and dark versions) are almost perfectly distributed among the agents (according to the appropriate weight values). Therefore, the plan is that, instead of following the colours of agents in the \minority protocol, we follow agents in a new system where the (dark and light) colours are perfectly distributed among the agents.

\subsection{Summary of contributions and main techniques}
The main contribution of this paper is to propose and demonstrate that the \minority protocol achieves diversification, fairness, and sustainability. The analysis of the protocol is mainly probabilistic, and it is divided into three phases. In the first phase the main technique is to couple certain statistics of our protocol with biased random walks on the integers in order to achieve rapid convergence (despite the fact that this is the slowest part of the process as the initial configurations of colours are arbitrary). In the second phase we introduce two potential functions that decrease over time, and introduce a general concentration inequality to analyse those potentials. Finally, in the third and final phase we couple the trajectory of the states of agents, that is $(c_u(t), b_u(t))_{t\geq 0}$ with a Markov chain $P$ that represents the system in ``perfect equilibrium", and show that both processes roughly hit every colour  the same amount of times. To prove our coupling is correct we make use of Chernoff's Bounds for Markov Chains.

\section{Analysis of the \minority protocol}\label{sec:mainproof}


We begin by setting some notations that will be used throughout the analysis of our protocol.
For simplicity, we assume that the colours are enumerated from $1$ to $k$ with corresponding weights $w_1, \ldots, w_k$\,, and we set $w = \sum_{i=1}^k w_i$\,. Recall that $c_u(t)$ denotes the colour of agent $u$ after $t$-th iterations, and $b_u(t)$ its corresponding shade indicator, 1 for dark shade, 0 for light shade. We denote by $A_i(t)$ and $a_i(t)$ the number of agents having colour $i$ at time $t$ with dark shade (bit value $1$) and light shade (bit value $0$), respectively, i.e. for $t > 0$\,,
\begin{align*}
a_i(t) &= |\{u: c_u(t) = i, b_u(t) = 0\}|,
\end{align*}
and
\begin{align*}
A_i(t)&= |\{u: c_u(t) = i, b_u(t) = 1\}|\,.
\end{align*}

Define $A(t) = \sum_{i=1}^k A_i(t)$ and $a(t) = \sum_{i=1}^k a_i(t)$, and denote by $\xi(t)$ the process containing all the information at time $t$, that is,
\begin{align*}
\xi(t)= (A_1(t),\ldots, A_k(t),a_1(t),\ldots, a_k(t))\,.
\end{align*}
Let $\Omega = \{(A_1,\ldots,A_k, a_1,\ldots, a_k): A_i\geq 1, a_i\geq 0, \sum_{i=1}^k(A_i+a_i) = n\}$ be the space state where the process $\xi(t)$ takes values. Clearly the initial state is such that $A_i(0) \geq 1$ for all colours $i$. 

As is usually done in population protocol, our analysis is divided in several phases. In each of these phases the configuration of the system (i.e., the colours of the agents) will be attaining properties that are maintained for long periods of time (much larger than the duration of those phases), and each successive phase will make use of the previously achieved properties. The main idea revolves around showing that perfect equilibrium is attained at the values
\begin{align}\label{eqn:goalaAA}
a_i(t)/n =\frac{w_i/w}{1+w} \;\text{ and }\;A_i(t)/n = \frac{w_i}{1+w}\,,
\end{align}
for all $i \in \{1,\ldots, k\}$. 
The analysis of our protocol requires three phases which we proceed to describe.

\textbf{Phase 1} starts at time 0 and ends at time $\tau_1 = O(w^2 n \log n)$. In this phase we will show that w.h.p. the equality in \cref{eqn:goalaAA} can be attained up to multiplicative constants (see \cref{thm:phase1thm}). This property is shown to hold for at least $\Omega(n^{10})$ time-steps, enough to be carried to the next phases of the analysis. The analysis of this phase is in \cref{sec:phase1}.

\textbf{Phase 2} starts exactly at the end of Phase 1, and lasts for $\tau_2 = O(wn \log n)$ time-steps. This phase is divided in two consecutive subphases. The first subphase lasts for $\tau_{2,1} = O(wn \log n)$ time steps, while the second subphase lasts for $\tau_{2,2} = O(wn \log n)$ time steps. Clearly $\tau_2 = \tau_{2,1}+\tau_{2,2}$\,.

In this phase we will analyse two potential functions $\phi$ and $\psi$\,, given by
\begin{align*}
\phi(t) = \sum_{i=1}^k\sum_{j=1}^k \left(\frac{A_i(t)}{w_i}-\frac{A_j(t)}{w_j} \right)^2,
 \text{ and } 
\psi(t)= \sum_{i=1}^k \sum_{j=1}^k \left( \frac{a_i(t)}{w_i}-\frac{a_j(t)}{w_j}\right)^2.
\end{align*}

Then, in the first subphase we show that the potential $\phi(t)$ decreases to $O(wn\log n)$ w.h.p. (\cref{lemma:firstPotentialSmall}), and in the second subphase, $\psi(t)$ is shown to decrease to $O(w n\log n)$ w.h.p. (\cref{lemma:secondPotentialSmall}). These properties hold for at least $n^8$ time steps (see \cref{thm:easyprima}). The analysis of Phase 2 is in \cref{sec:phase2}.

The analysis of the first two phases is enough to prove diversity, however, to prove fairness we need an extra phase.

\textbf{Phase 3} starts exactly after the end of Phase 2, and lasts for $\tau_3=O(wn \log n)$ time-steps. In this phase we show that the system is very close to perfect equilibrium, indeed, \cref{eqn:goalaAA} holds up to a small additive error, and such property holds up to time $n^8$ whp (see \cref{thm:approximation}). To get such approximation, we will show that the potential function $\sigma^2$ given by
\begin{align*}
    \sigma^2(t) = (A(t)/w - a(t))^2
\end{align*}
decreases to $O(w n \log n)$, which combined with the fact that the potential functions $\phi(t)$ and $\psi(t)$ are already of the same size, gives us enough information to obtain the desired approximation. Phase 3 is analysed in \cref{sec:props}.

For quick reference, \cref{fig:figuree} below shows a short summary of the three phases.

\begin{figure}[H]
    \centering
    \begin{tikzpicture}[x=0.75pt,y=0.75pt,yscale=-1,xscale=1]

\draw    (7,132.67) -- (107.67,132.67) ;
\draw [shift={(110.67,132.67)}, rotate = 180] [fill={rgb, 255:red, 0; green, 0; blue, 0 }  ][line width=0.08]  [draw opacity=0] (8.93,-4.29) -- (0,0) -- (8.93,4.29) -- cycle    ;
\draw [shift={(4,132.67)}, rotate = 0] [fill={rgb, 255:red, 0; green, 0; blue, 0 }  ][line width=0.08]  [draw opacity=0] (8.93,-4.29) -- (0,0) -- (8.93,4.29) -- cycle    ;
\draw [line width=2.25]  [dash pattern={on 6.75pt off 4.5pt}]  (108,8) -- (110,125) ;
\draw [line width=2.25]  [dash pattern={on 6.75pt off 4.5pt}]  (287,6.5) -- (289,123.5) ;
\draw [line width=2.25]  [dash pattern={on 6.75pt off 4.5pt}]  (398,7) -- (400,124) ;
\draw    (291.67,131.67) -- (397.67,131.67) ;
\draw [shift={(400.67,131.67)}, rotate = 180] [fill={rgb, 255:red, 0; green, 0; blue, 0 }  ][line width=0.08]  [draw opacity=0] (8.93,-4.29) -- (0,0) -- (8.93,4.29) -- cycle    ;
\draw [shift={(288.67,131.67)}, rotate = 0] [fill={rgb, 255:red, 0; green, 0; blue, 0 }  ][line width=0.08]  [draw opacity=0] (8.93,-4.29) -- (0,0) -- (8.93,4.29) -- cycle    ;
\draw    (113.67,132.66) -- (196.6,132.41) ;
\draw [shift={(199.6,132.4)}, rotate = 539.8299999999999] [fill={rgb, 255:red, 0; green, 0; blue, 0 }  ][line width=0.08]  [draw opacity=0] (8.93,-4.29) -- (0,0) -- (8.93,4.29) -- cycle    ;
\draw [shift={(110.67,132.67)}, rotate = 359.83] [fill={rgb, 255:red, 0; green, 0; blue, 0 }  ][line width=0.08]  [draw opacity=0] (8.93,-4.29) -- (0,0) -- (8.93,4.29) -- cycle    ;
\draw    (202.6,132.38) -- (285.67,131.69) ;
\draw [shift={(288.67,131.67)}, rotate = 539.53] [fill={rgb, 255:red, 0; green, 0; blue, 0 }  ][line width=0.08]  [draw opacity=0] (8.93,-4.29) -- (0,0) -- (8.93,4.29) -- cycle    ;
\draw [shift={(199.6,132.4)}, rotate = 359.53] [fill={rgb, 255:red, 0; green, 0; blue, 0 }  ][line width=0.08]  [draw opacity=0] (8.93,-4.29) -- (0,0) -- (8.93,4.29) -- cycle    ;
\draw [line width=1.5]  [dash pattern={on 1.69pt off 2.76pt}]  (198,45) -- (199,122) ;
\draw    (111.67,152.65) -- (284.67,151.68) ;
\draw [shift={(287.67,151.67)}, rotate = 539.6800000000001] [fill={rgb, 255:red, 0; green, 0; blue, 0 }  ][line width=0.08]  [draw opacity=0] (8.93,-4.29) -- (0,0) -- (8.93,4.29) -- cycle    ;
\draw [shift={(108.67,152.67)}, rotate = 359.68] [fill={rgb, 255:red, 0; green, 0; blue, 0 }  ][line width=0.08]  [draw opacity=0] (8.93,-4.29) -- (0,0) -- (8.93,4.29) -- cycle    ;
\draw    (0,121.67) -- (480.67,121.67) ;
\draw [color={rgb, 255:red, 208; green, 2; blue, 27 }  ,draw opacity=1 ]   (182.45,40.92) -- (156,57.33) ;
\draw [shift={(185,39.33)}, rotate = 148.17] [fill={rgb, 255:red, 208; green, 2; blue, 27 }  ,fill opacity=1 ][line width=0.08]  [draw opacity=0] (8.93,-4.29) -- (0,0) -- (8.93,4.29) -- cycle    ;
\draw [color={rgb, 255:red, 208; green, 2; blue, 27 }  ,draw opacity=1 ]   (213.53,40.94) -- (241,58.33) ;
\draw [shift={(211,39.33)}, rotate = 32.35] [fill={rgb, 255:red, 208; green, 2; blue, 27 }  ,fill opacity=1 ][line width=0.08]  [draw opacity=0] (8.93,-4.29) -- (0,0) -- (8.93,4.29) -- cycle    ;

\draw (469,121) node [anchor=north west][inner sep=0.75pt]   [align=left] {time};
\draw (21,3) node [anchor=north west][inner sep=0.75pt]  [font=\large] [align=left] {\textbf{Phase 1}};
\draw (45,134) node [anchor=north west][inner sep=0.75pt]   [align=left] {$ \tau_{1}$};
\draw (10,55) node [anchor=north west][inner sep=0.75pt]  [font=\scriptsize] [align=left] {$O( 1)$ multiplicative-\\error approximation\\of Eq. (3)};
\draw (341,134) node [anchor=north west][inner sep=0.75pt]   [align=left] {$ \tau_{3}$};
\draw (140,134) node [anchor=north west][inner sep=0.75pt]   [align=left] {$\tau _{2,1}$};
\draw (222.74,134) node [anchor=north west][inner sep=0.75pt]   [align=left] {$\tau _{2,2}$};
\draw (138,78) node [anchor=north west][inner sep=0.75pt]  [font=\scriptsize] [align=left] {$ \phi(t)$};
\draw (121,92) node [anchor=north west][inner sep=0.75pt]  [font=\scriptsize] [align=left] {decreases to\\O($ wn\ \log n)$};
\draw (228,78) node [anchor=north west][inner sep=0.75pt]  [font=\scriptsize] [align=left] {$  \psi(t)$};
\draw (211,93) node [anchor=north west][inner sep=0.75pt]  [font=\scriptsize] [align=left] {decreases to\\O($ wn\ \log n)$};
\draw (20,22) node [anchor=north west][inner sep=0.75pt]  [color={rgb, 255:red, 0; green, 117; blue, 255 }  ,opacity=1 ] [align=left] {\textbf{(Thm~\ref{thm:phase1thm}.)}};
\draw (166,3) node [anchor=north west][inner sep=0.75pt]  [font=\large] [align=left] {\textbf{Phase 2}};
\draw (165,22) node [anchor=north west][inner sep=0.75pt]  [color={rgb, 255:red, 0; green, 117; blue, 255 }  ,opacity=1 ] [align=left] {\textbf{(Thm~\ref{thm:easyprima}.)}};
\draw (122,60) node [anchor=north west][inner sep=0.75pt]  [font=\small,color={rgb, 255:red, 208; green, 2; blue, 27 }  ,opacity=1 ] [align=left] {\textbf{Lemma ~\ref{lemma:firstPotentialSmall}}};
\draw (212,60) node [anchor=north west][inner sep=0.75pt]  [font=\small,color={rgb, 255:red, 208; green, 2; blue, 27 }  ,opacity=1 ] [align=left] {\textbf{Lemma~\ref{lemma:secondPotentialSmall}}};
\draw (305,4) node [anchor=north west][inner sep=0.75pt]  [font=\large] [align=left] {\textbf{Phase 3}};
\draw (302,23) node [anchor=north west][inner sep=0.75pt]  [color={rgb, 255:red, 0; green, 117; blue, 255 }  ,opacity=1 ] [align=left] {\textbf{(Thm~\ref{thm:approximation}.)}};
\draw (302,57) node [anchor=north west][inner sep=0.75pt]  [font=\scriptsize] [align=left] {$ o( 1)$ additive-error\\approximation\\of Eq. (3)};
\draw (424,39) node [anchor=north west][inner sep=0.75pt]  [font=\scriptsize] [align=left] {W.h.p\\All previous\\properties\\hold up to \\time $ n^{8}$};
\draw (190,154) node [anchor=north west][inner sep=0.75pt]   [align=left] {$ \tau _{2}$};

\end{tikzpicture}
    \caption{Summary of the three phases of the analysis.}
    \label{fig:figuree}
\end{figure}
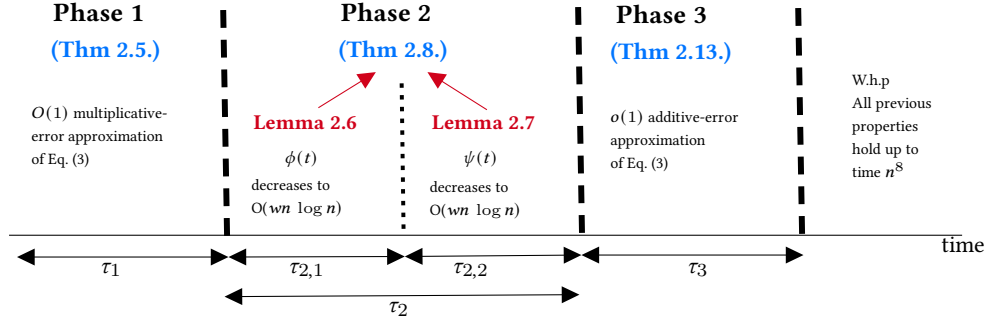

\subsection{Phase 1: The rise of the minorities }\label{sec:phase1}

In the first phase of the analysis, we handle colours that are largely over-represented or largely under-represented, and we will show that the system stabilises in a configuration such that for any $i \in[k] \,,$
\begin{align}\label{eqn:goalaA}
a_i(t)/n \approx\frac{w_i/w}{1+w} \;\text{ and }\;A_i(t)/n \approx \frac{w_i}{1+w}\,,
\end{align}
where the approximation holds up to a multiplicative constant. For this, we will demonstrate that the process, little by little, starts improving its configuration, in the sense that we are closer to achieve \cref{eqn:goalaA} over time. At the beginning the process is rather slow at making progress as some colours may have only one agent supporting them, but as long as the configuration improves, the speed at which the process improves the configuration increases as well. To discuss this in detail, we first show that $a(t)=\sum_{i=1}^k a_i(t)$ quickly increases to roughly $ (1-\eps)n /(1+w)$ (recall that $a(0) = 0$). We achieve this by coupling the process $a(t)$ with a biased random walk. To see why $a(t)$ has bias towards increasing its value, note that at the beginning  all agents have dark colours, and therefore interactions between agents of the same colour are likely to be between agents of the same shade, thus enlarging the chance of increasing the value of $a(t)$. After obtaining a healthy proportion of light-coloured agents (which are available to change their current colour to any other colour), we can show that under-represented colours start growing by another biased random walk argument. At first, we will show that under-represented dark shaded colours, say $A_i$, have bias toward increasing their representation: At the beginning, it is unlikely for an agent to sample an under-represented colour $i$ and therefore it is unlikely that an agent switches to $i$, however, it is even less likely that an agent of colour $i$ switches to a different colour, so at the beginning, $A_i(t)$ grows very slowly. Then, as the popularity of colour $i$ grows, it becomes more likely to sample $i$, thus $A_i(t)$ will start to grow increasingly faster, until $A_i(t)/n$ is close to $w_i/(1+w)$\,, at which point the bias of the process start being less relevant to make significant progress. The same can be applied to each under-represented light shaded colour, say $a_i(t)$, and also to over-represented colours with dark and light shades respectively. By the end of Phase 1 the bias presented in the system is very small, and thus coupling with biased random walks are not enough to prove improvements in the approximation of \cref{eqn:goalaA}.

For the formal analysis it is convenient to define the following regions in terms of a constant parameter $\varepsilon\in (0,1/4)$:
\begin{align*}
R_1 &= \left\{\xi \in \Omega: \frac{a}{n} \geq (1-\varepsilon)\frac{1}{w+1}\right\}
\nonumber\\
S_1 &=\left\{\xi \in \Omega: \frac{a}{n} \geq (1-2\varepsilon)\frac{1}{w+1}\right\}
\nonumber\\
R_2 &= \left\{\xi \in \Omega: \forall i\in[k], ~\frac{A_i}{n} \geq (1-3\varepsilon)\frac{w_i}{1+w}\right\} \cap S_1
\nonumber\\
S_2 &= \left\{\xi \in \Omega: \forall i\in[k], ~\frac{A_i}{n} \geq (1-4\varepsilon)\frac{w_i}{1+w}\right\} \cap S_1
\nonumber\\
S_3 &= \left\{\xi \in \Omega: \forall i\in[k], ~ \frac{A_i}{n} \leq \left(1+4\varepsilon w\right)\frac{w_i}{1+w}\right\} \cap S_2
\nonumber\\
S_4 &= \left\{\xi \in \Omega: \frac{a}{n} \leq \left(1+4\varepsilon w\right)\frac{1}{1+w}\right\} \cap S_3 \,.
\end{align*}

Note that $R_j \subseteq S_j$\,. Define
$T_0 = 0$ and $T_{i} = \min\{t\geq T_{i-1}: \xi(t) \in R_i\}$\,, and define $T_i' = \min\{t\geq T_i: \xi(t) \in \Omega \setminus S_i\}$\,.

\begin{lemma}\label{lemma:part1}
Let $c>0$ be a sufficiently small constant.
For any $\xi(0) \in \Omega$\,, we have that $T_{1} = O(nw/\eps)$ with probability at least $1-\exp(c \eps n)$\,. Moreover, if $\xi(0) \in R_1$ then $T_1'>\exp(cn\eps^2/w)$ with probability at least $1-\exp(-c n\eps^2/w)$\,.
\end{lemma}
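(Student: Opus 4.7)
The plan is to analyse the evolution of the scalar $a(t)=\sum_{i=1}^k a_i(t)$, the total number of light-shaded agents, which behaves as a state-dependent biased random walk on $\{0,1,\ldots,n\}$ with equilibrium at $n/(1+w)$. A direct computation of the one-step transition probabilities yields
\begin{align*}
E[a(t+1)-a(t)\mid\xi(t)] \;=\; -\frac{a(n-a)}{n(n-1)} + \sum_{i=1}^k \frac{A_i(A_i-1)}{w_i\,n(n-1)},
\end{align*}
and Cauchy--Schwarz in the form $\sum_i A_i^2/w_i \geq (n-a)^2/w$ lower-bounds this drift by a quantity of order $(n-a)(n-a(1+w))/(n^2 w)$. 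In the sub-equilibrium region $a/n \leq (1-\eps)/(1+w)$ this is at least $c\eps/w$, and moreover both the one-step up-probability $p$ and down-probability $q$ are $\Theta(1/w)$ throughout the relevant range, so their ratio satisfies $p/q \geq 1+c'\eps$.

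For the first claim I would apply Azuma--Hoeffding directly to the martingale $M_t := a(t)-a(0)-\sum_{s<t} E[\Delta a\mid \xi(s)]$, whose increments are bounded by $2$. If $T_1>T:=Cnw/\eps$ for a large constant $C$, then throughout $[0,T]$ the drift is at least $c\eps/w$, so the cumulative expected increment is at least $Ccn$. Azuma--Hoeffding yields $|M_T|\leq Ccn/2$ with probability at least $1-\exp(-\Omega(Cn\eps/w))$, and on this event $a(T)\geq a(0)+Ccn/2>(1-\eps)n/(1+w)$, contradicting $T_1>T$. For constant $w$ this failure probability is $\exp(-c\eps n)$ as stated.

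For the second claim I would exploit the ratio estimate $p/q\geq 1+c'\eps$ inside the danger band $[(1-2\eps)n/(1+w),\,n/(1+w)]$ via the exponential supermartingale $V(a)=\beta^{-a}$ with $\beta=1+c'\eps$. A standard gambler's-ruin computation shows that any excursion of $a$ starting at $n/(1+w)$ and remaining in the band until it exits hits the lower boundary with probability at most $\beta^{-2\eps n/(1+w)}\leq \exp(-c''\eps^2 n/w)$. Starting from $a(T_1)\geq (1-\eps)n/(1+w)$ I would decompose $[T_1,T_1']$ into (i) the initial excursion within the band, also bounded by $\exp(-\Omega(\eps^2 n/w))$, and (ii) subsequent downward excursions from $n/(1+w)$; over a horizon $T=\exp(cn\eps^2/w)$ there are at most $T$ such excursions, and a union bound gives
\begin{align*}
P(T_1'\leq T)\;\leq\; T\cdot\exp(-c''\eps^2 n/w)\;\leq\;\exp(-cn\eps^2/w)
\end{align*}
for $c$ small enough relative to $c''$, which is precisely the claimed bound.

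The main technical obstacle is that $V(a)=\beta^{-a}$ is a supermartingale only where $\beta\leq p/q$, which fails above the equilibrium $n/(1+w)$ because the drift there reverses sign. My plan sidesteps this by applying the supermartingale argument only to excursions confined to the band: each excursion starts at $n/(1+w)$ (or from $a(T_1)$ in the initial phase) and terminates either upon exit at the lower boundary (failure) or upon returning to $n/(1+w)$ (success, at which point the strong Markov property allows a fresh restart). The time the walk spends strictly above $n/(1+w)$ is irrelevant to the lower-boundary hitting analysis and is absorbed into the ``at most $T$ excursions'' count in the union bound, so no global potential on $\{0,\ldots,n\}$ is needed; the delicate part is only the coupling-free verification that $p/q$ and both $p,q=\Theta(1/w)$ hold uniformly throughout the band, which is what produces the exponent $\eps^2 n/w$ in both halves of the statement.
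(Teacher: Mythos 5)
Your overall strategy coincides with the paper's: both proofs track the scalar $a(t)$, use the Jensen/Cauchy--Schwarz bound $\sum_i A_i^2/w_i \geq A^2/w$ to show that below the threshold the walk has a uniform upward bias, obtain $T_1=O(nw/\eps)$ by a Hoeffding-type concentration of the cumulative drift, and prove persistence via a gambler's-ruin estimate across a band of width $\Theta(\eps n / w)$ followed by a union bound over $\exp(c n\eps^2/w)$ steps. The only genuine difference is presentational: the paper restricts to \emph{active} steps (where $a$ changes), gets a clean $(\tfrac12+\tfrac{\eps}{3})$-biased walk, and converts back to real time via $p=\Omega(1/w)$, whereas you work with the raw per-step drift $\Omega(\eps/w)$; and the paper invokes the explicit two-barrier formula (\cref{pro:CaminataAleatoriaParcial}) where you use the exponential supermartingale $\beta^{-a}$. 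Your supermartingale phrasing is in fact slightly cleaner on one point the paper glosses over, namely that $a(t)$ is not itself a Markov chain (its transition probabilities depend on the full configuration $\xi(t)$), since the supermartingale inequality only needs the bias bound to hold configuration-by-configuration.

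There is, however, one concrete error in your second half: you assert that $p/q\geq 1+c'\eps$ holds uniformly on the band $\left[(1-2\eps)\frac{n}{1+w},\,\frac{n}{1+w}\right]$ and start your excursions at the equilibrium level $\frac{n}{1+w}$. This is false near the top of that band: at $a=\frac{n}{1+w}$ one has $A/a = w$ and hence $q/(p+q)\le \tfrac12$ with no $\eps$-gap, and more generally for $a=(1-\eps')\frac{n}{1+w}$ the relative bias is only of order $\eps'$, which is $o(\eps)$ for $\eps'\ll\eps$. So $V(a)=\beta^{-a}$ with $\beta=1+c'\eps$ is \emph{not} a supermartingale on the upper portion of your band, and the excursion bound as stated does not go through. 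The repair is exactly what the paper does: run the gambler's-ruin/supermartingale argument only on the sub-band $\left[(1-2\eps)\frac{n}{1+w},\,(1-\eps)\frac{n}{1+w}\right]$ (equivalently, start each excursion at $(1-\tfrac{3\eps}{2})\frac{n}{1+w}$), where $a\le (1-\eps)\frac{n}{1+w}$ guarantees the uniform bias; the band width is still $\Theta(\eps n/w)$, so the exit probability $\exp(-\Omega(\eps^2 n/w))$ and the final union bound are unchanged. A second, non-load-bearing inaccuracy: $q=\frac{a(n-a)}{n(n-1)}$ is not $\Theta(1/w)$ throughout the relevant range (it vanishes at $a=0$, the initial state); only $p=\Omega(1/w)$ is needed, and that is what actually drives the active-step-to-real-time conversion.
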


\begin{lemma}\label{lemma:part2}
Let $c>0$ be a sufficiently small constant.
For any $\xi(0) \in \Omega$\,, we have $T_2 = O\left( w n\log n / \eps\right)$ with probability at least $1-\exp\left(- c n\eps^2 / w \right)$\,. Additionally, if $\xi(0) \in R_2$ then $T'_{2} > \exp(cn\eps^2/w)$ with probability at least $1-\exp\left(-c n \eps^2 /w \right)$\,.
\end{lemma}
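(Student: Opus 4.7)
My plan is to use \cref{lemma:part1} to reduce to the case $\xi(0)\in R_1$, so that throughout a window of length at least $\exp(cn\eps^2/w)$ I have $a(t)/n\geq (1-2\eps)/(1+w)$ available. On this window I track each under-represented dark colour $A_i$ separately by coupling it with a biased random walk on the integers (as already done for $a(t)$ in Phase~1), and then finish with a union bound over the $k=O(1)$ colours. Directly from \cref{eqn:definitionDiver}, $A_i$ changes only when a light agent meets a dark-$i$ agent ($+1$) or when two dark-$i$ agents meet and the scheduled one blanches ($-1$), giving
\begin{align*}
    \E[A_i(t+1)-A_i(t) \mid \xi(t)] \;=\; \frac{A_i(t)}{n(n-1)}\left(a(t)-\frac{A_i(t)-1}{w_i}\right).
\end{align*}
As long as $\xi(t)\in S_1$ and $A_i(t)/n \leq (1-3\eps)w_i/(1+w)$, the bracket is at least $\eps n/(1+w)$, so $\E[\Delta A_i \mid \xi(t)] \geq c\eps A_i(t)/(nw)$. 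Crucially, at $A_i=1$ the downward probability vanishes (no second dark-$i$ partner exists), so $A_i$ cannot go extinct during the growth.

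\textbf{Bounding $T_2$: growth to $R_2$.} Conditional on $A_i$ actually moving, the up-probability is $1/2+\Omega(\eps)$, so the nontrivial $\pm 1$ moves of $A_i$ behave like a biased random walk on $\{1,2,\dots\}$ with constant bias $\Omega(\eps)$ and reflection at $1$. Writing $M_i = \lceil (1-3\eps)w_in/(1+w)\rceil$, a standard biased-walk calculation bounds the expected number of such moves to reach $M_i$ from $1$ by $O(M_i/\eps) = O(n/(w\eps))$, while at level $k$ the waiting time per move is geometric with expectation $O(nw/k)$ since $p_+ + p_- = \Theta(k/(nw))$. Summing the per-level waiting times gives a harmonic expression $\sum_{k=1}^{M_i} O(nw/k) = O(wn\log n/\eps)$, which is exactly where the $\log n$ factor comes from. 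Applying Chernoff-style concentration to both the number of moves at each level (using the constant bias) and the geometric waiting times, and then taking a union bound over the $k$ colours, yields $T_2 - T_1 = O(wn\log n/\eps)$ with failure probability $\exp(-cn\eps^2/w)$.

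\textbf{Bounding $T_2'$ and main obstacle.} For the confinement part, I observe that at the lower boundary $A_i=\lfloor(1-4\eps)w_in/(1+w)\rfloor$ of $S_2$, combined with $a(t)\geq (1-2\eps)n/(1+w)$ guaranteed by $S_1$, the drift identity above gives $\E[\Delta A_i \mid \xi(t)] \geq \Omega(\eps/w) > 0$, so the drift always pushes $A_i$ back into $S_2$. Since $|\Delta A_i|\leq 1$, Azuma-Hoeffding applied to the compensated martingale $A_i(t)-A_i(T_2)-\sum_{s<t}\E[\Delta A_i(s)\mid \xi(s)]$ shows that the $\Omega(\eps n/w)$ downward excursion required to leave $S_2$ has probability at most $\exp(-cn\eps^2/w)$ over any window of length $\exp(cn\eps^2/w)$; a union bound over the $k$ colours completes the argument. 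The main obstacle is the growth step: since the drift scales with $A_i$ itself, naive additive drift arguments yield only a polynomial hitting time, and recovering the $\log n$ factor requires the careful summation of per-level waiting times described above; producing simultaneously the $O(wn\log n/\eps)$ hitting-time bound and the $\exp(-cn\eps^2/w)$ failure probability, uniformly across the slow-start regime $A_i=O(1)$ and the fast regime $A_i=\Theta(n/w)$, is the most delicate piece of the proof.
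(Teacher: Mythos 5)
Your proposal is correct and follows essentially the same route as the paper: the same drift identity $p-q=\frac{A_i}{n(n-1)}\bigl(a-\frac{A_i-1}{w_i}\bigr)$, the same conditional up-probability $\tfrac12+\Omega(\eps)$ on active steps given $\xi\in S_1$, the same per-level waiting time $\Theta(nw/A_i)$ whose summation over levels produces the $O(wn\log n/\eps)$ bound (the paper organises this as a geometric series over doublings rather than a harmonic sum, but the computation is identical), and the same gambler's-ruin/Azuma confinement argument for $T_2'$. No gaps.
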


\begin{lemma}\label{lemma:part3}
Assume that $\xi(t) \in S_2$\,. Then, $\xi(t) \in S_3$\,.
\end{lemma}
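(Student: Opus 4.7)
The statement is purely deterministic, and I expect it to reduce to the population-size identity $\sum_{i=1}^{k} A_i(t) + \sum_{i=1}^{k} a_i(t) = n$ combined with the lower bounds that are already built into the definition of $S_2$. The plan is to fix an arbitrary colour $i$ and convert the lower bounds available on $a(t)$ and on $A_j(t)$ for every $j \neq i$ into a matching upper bound on $A_i(t)$; no probabilistic argument is needed.

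Concretely, I would start from the identity
\begin{align*}
A_i(t) \;=\; n - a(t) - \sum_{j \neq i} A_j(t),
\end{align*}
and substitute the hypotheses $a(t) \geq (1-2\eps)\, n/(1+w)$ (which follows from $S_2 \subseteq S_1$) and $A_j(t) \geq (1-4\eps)\, w_j n/(1+w)$ for every $j \neq i$ (which follows from $\xi(t) \in S_2$). After dividing by $n$, multiplying through by $(1+w)$, and using $\sum_{j\neq i} w_j = w - w_i$, the right-hand side collapses to $w_i + 2\eps + 4\eps(w - w_i)$, giving
\begin{align*}
(1+w)\,\frac{A_i(t)}{n} \;\leq\; w_i + 2\eps + 4\eps(w - w_i).
\end{align*}

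The final step is to verify that this upper bound is at most $(1+4\eps w)\, w_i$, which is exactly the condition required to place $\xi(t)$ into $S_3$. After cancelling a common factor of $2\eps$, this reduces to the elementary inequality $1 + 2w \leq 2\, w_i(1+w)$, and this is immediate from the standing assumption $w_i \geq 1$. There is no substantive obstacle here; the only point worth remarking is that the asymmetric constants $(1-4\eps)$ on the lower side of $S_2$ and $(1+4\eps w)$ on the upper side of $S_3$ are calibrated precisely so that this elementary check is tight when $w_i = 1$ (the slack is exactly $2\eps$ coming from $S_1$), and only looser for colours of larger weight; this is why one cannot replace $4\eps w$ by something independent of $w$ in the definition of $S_3$.
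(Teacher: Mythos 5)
Your proof is correct and follows essentially the same route as the paper's: both start from $A_i \le n - a - \sum_{j\neq i} A_j$, substitute the lower bounds from $S_1$ and $S_2$, and use $w_i \ge 1$ to absorb the leftover $2\eps$ term into the $(1+4\eps w)w_i/(1+w)$ bound. The paper compresses the final elementary check into one chain of inequalities, but the underlying argument is identical.
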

\begin{proof}
Using the lower bounds, we have,
$A_i \leq n - a - \sum_{j \neq i} A_j \leq n- (1-2\eps)n/(w+1) - (1-4\eps)n(w-w_i)/(1+w)\leq n \frac{w_i}{w+1} + n 4\eps\frac{w}{w+1}$\,.

\end{proof}

\begin{lemma}
Assume that $\xi(t) \in S_3$\,. Then, $\xi(t) \in S_4$\,.
\end{lemma}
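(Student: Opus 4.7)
The proof I would give is essentially the dual of the previous lemma: there we used the global lower bounds on $a$ (from $S_1\subseteq S_2$) and on the other $A_j$'s (from $S_2$) to derive an upper bound for each individual $A_i$. Here, since we already have good upper \emph{and} lower bounds on every $A_i$, I would instead use the conservation identity
\[
a(t) + \sum_{i=1}^k A_i(t) = n
\]
and the $S_2$-lower-bound on each $A_i$ to upper-bound $a$ directly.

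More precisely, the plan is as follows. Since $\xi(t)\in S_3\subseteq S_2$, for every colour $i\in[k]$ we have $A_i(t)\geq (1-4\eps)\,n w_i/(1+w)$. Summing over $i$ and recalling that $w=\sum_{i=1}^k w_i$ gives $\sum_{i=1}^k A_i(t)\geq (1-4\eps)\,nw/(1+w)$. Substituting into the conservation identity yields
\[
a(t) \;\leq\; n - (1-4\eps)\frac{nw}{1+w} \;=\; \frac{n(1+w) - (1-4\eps)nw}{1+w} \;=\; \frac{n(1+4\eps w)}{1+w},
\]
which is exactly the defining inequality of the first factor in the definition of $S_4$. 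Combined with the assumption $\xi(t)\in S_3$, this gives $\xi(t)\in S_4$.

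There is really no substantive obstacle here: the lemma is a purely algebraic consequence of the conservation law and the already-available lower bounds on the $A_i$'s, and in particular the argument does not require any probabilistic reasoning or invoke the dynamics of the protocol at all. The only thing to be slightly careful about is that we must use the $S_2$-bound $A_i/n\geq (1-4\eps)w_i/(1+w)$ rather than the weaker $S_1$-bound, so that the resulting error term matches the $(1+4\eps w)$ factor appearing in $S_4$; this matches cleanly because $(1+w)-(1-4\eps)w = 1+4\eps w$.
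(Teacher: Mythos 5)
Your proposal is correct and follows essentially the same route as the paper: both use the conservation identity $a = n - \sum_i A_i$ together with the $S_2$-lower bounds $A_i \geq (1-4\eps) n w_i/(1+w)$ (which hold since $S_3 \subseteq S_2$) to conclude $a \leq n(1+4\eps w)/(1+w)$. Your write-up is in fact slightly cleaner, as the paper's one-line chain contains a typo in the intermediate term (it should read $n - (1-4\eps)nw/(1+w)$ rather than $(1-4\eps)nw/(1+w)$), though the final bound there is the same as yours.
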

\begin{proof}
Using the lower bounds, we have,
$ a\leq n -\sum_{i=1}^k A_i \leq (1-4\eps )n w /(1+w) \leq \frac{n}{w+1} + 4\eps n w /(1+w)$\,.
\end{proof}

\bigskip
For any $\delta>0$ define the set of configurations $\mathcal E = \mathcal E(\delta) \subseteq \Omega$ by
\begin{align}\label{eqn:defi_E}
\mathcal E(\delta) = \left\{\xi \in \Omega: \frac{A_i}{w_i} \in \left[\frac{(1-\delta)n}{1+w},\frac{(1+\delta)n}{1+w} \right] \forall i \in [k], \text{ and } a \in \left[\frac{(1-\delta)n}{1+w},\frac{(1+\delta)n}{1+w} \right] \right\}\,.
\end{align}

By applying the previous lemmas in order with $\varepsilon = \frac{\delta}{4w}$ we obtain the following result.

\begin{theorem}\label{thm:phase1thm} Let $\delta>0$ be fixed, then there exists $\tau_1=O(w^2n\log n)$ such that 
\begin{align*}
\Prob\left(\bigcap_{t=\tau_1}^{\tau_1+n^{10}} \left\{\xi(t) \in \mathcal E \right\}\right)\geq 1-\exp\left(-\Omega(n/w^3)\right)\,.
\end{align*}
\end{theorem}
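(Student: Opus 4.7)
The plan is to set the free parameter $\varepsilon$ in the definitions of $R_1,S_1,R_2,S_2,S_3,S_4$ so that $S_4 \subseteq \mathcal{E}(\delta)$, then use \Cref{lemma:part2} once to hit $R_2$ within the required time, and once more (applied at the hitting time via the strong Markov property) to stay in $S_2$ for longer than $n^{10}$ steps. The deterministic containments $S_2 \subseteq S_3 \subseteq S_4$ provided by \Cref{lemma:part3} and its successor then finish the job.

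Specifically, choose $\varepsilon = \delta/(4w)$, which is a positive constant because $\delta,w$ are constants. The bounds defining $S_4$ read $a/n \in [(1-2\varepsilon)/(1+w),(1+4\varepsilon w)/(1+w)]$ and $A_i/n \in [(1-4\varepsilon)w_i/(1+w),(1+4\varepsilon w)w_i/(1+w)]$; since $4\varepsilon w=\delta$ and $2\varepsilon,4\varepsilon \leq \delta$, these bounds imply the $\delta$-bounds from the definition \eqref{eqn:defi_E} of $\mathcal{E}(\delta)$, so $S_4 \subseteq \mathcal{E}(\delta)$. With this choice of $\varepsilon$, the first part of \Cref{lemma:part2} yields $T_2 = O(wn\log n/\varepsilon) = O(w^2 n \log n)$ with probability at least $1-\exp(-\Omega(n\varepsilon^2/w)) = 1-\exp(-\Omega(n/w^3))$, starting from the arbitrary initial configuration $\xi(0)$. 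Define $\tau_1$ to be a sufficiently large constant multiple of $w^2 n \log n$ so that the event $\{T_2 \leq \tau_1\}$ is covered by this bound.

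On the event $\{T_2 \leq \tau_1\}$ we have $\xi(T_2)\in R_2$, and the strong Markov property at $T_2$ lets us apply the second part of \Cref{lemma:part2} to the chain started from $\xi(T_2)$: the process then stays in $S_2$ for at least $\exp(cn\varepsilon^2/w) = \exp(\Omega(n/w^3))$ additional steps, except on an event of conditional probability $\exp(-\Omega(n/w^3))$. Because $w$ is constant, $\exp(\Omega(n/w^3))$ dominates $n^{10}$ for $n$ large, so this stay window contains $[\tau_1,\tau_1+n^{10}]$. Chaining the inclusions $S_2 \subseteq S_3 \subseteq S_4 \subseteq \mathcal{E}(\delta)$ (the first two by \Cref{lemma:part3} and the unnamed lemma below it, the third by our choice of $\varepsilon$) shows that $\xi(t)\in \mathcal{E}(\delta)$ for every $t\in[\tau_1,\tau_1+n^{10}]$. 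A union bound over the two failure events yields a total failure probability of $\exp(-\Omega(n/w^3))$, matching the stated bound.

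The main obstacle is not in this argument but in the lemmas it invokes — in particular the coupling of $a(t)$ and each $A_i(t)$ with biased random walks to obtain both the polynomial hitting time and the exponentially long stay time in \Cref{lemma:part1,lemma:part2}. Given those lemmas, the only care needed here is (i) verifying that the parameter identifications $4\varepsilon w=\delta$ and $2\varepsilon,4\varepsilon\leq \delta$ really do turn $S_4$ into a subset of $\mathcal{E}(\delta)$, and (ii) invoking the strong Markov property at the (random) time $T_2$ to splice the hitting-time and stay-time statements together. Since $S_2\subseteq S_1$, the stay in $S_1$ from \Cref{lemma:part1} is automatically maintained throughout, so no separate argument is required for $S_1$.
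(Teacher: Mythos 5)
Your proposal is correct and matches the paper's argument exactly: the paper's own proof of \cref{thm:phase1thm} is the one-line remark that one applies \cref{lemma:part1,lemma:part2} and the containment lemmas in order with $\varepsilon=\delta/(4w)$, which is precisely the parameter choice and chaining $S_2\subseteq S_3\subseteq S_4\subseteq\mathcal E(\delta)$ that you spell out. Your additional care about the strong Markov property at $T_2$ and the union bound over the two failure events is a faithful elaboration of what the paper leaves implicit.
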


Note that $\tau_1$ in the previous theorem implicitly depends on the choice of $\delta$\,. For our purposes, we just fix $\delta$ as a small enough constant (for our analysis $\delta = 0.0001$ is more than enough). The second phase of the process starts exactly at time $\tau_1$ of \cref{thm:phase1thm}\,, where we know that w.h.p. $A_i(t)/w_i$ and $a(t)$ are equal up to multiplicative constants, and this is valid for at least $n^{10}$ time-steps.

The proof of the previous theorem and the respective lemmas can be found in \cref{sec:proofsphase1}.

\subsection{Phase 2: Reaching Equilibrium - Proving Diversity}\label{sec:phase2}

The second phase starts when the proportions $a_i/w_i$ are all equal up to a multiplicative constant, as well as the proportions $A_i/w_i$. It is convenient to restart time back to $0$ for the analysis of phase 2 of the process. Therefore we set the starting configuration of phase 2 as $\xi(0)$\,. We know that $\xi(0) \in \mathcal E$ and it has the property $\xi(t) \in \mathcal E$ for at least $n^{10}$ time-steps.

As it was mentioned at the beginning of this section, in the second phase we will analyse two potential functions $\phi$ and $\psi$\,, given by
\begin{align}\label{eqn:potential1Defi}
\phi(t) = \sum_{i=1}^k\sum_{j=1}^k \left(\frac{A_i(t)}{w_i}-\frac{A_j(t)}{w_j} \right)^2 \,,
\end{align}
and
\begin{align}\label{eqn:potential2Defi}
\psi(t)&= \sum_{i=1}^k\sum_{j=1}^k \left( \frac{a_i(t)}{w_i}-\frac{a_j(t)}{w_j}\right)^2\,.
\end{align}
Also, recall that Phase 2 is divided into two consecutive subphases, Subphase 2.1 and Subphase 2.2, lasting $\tau_{2,1}$ and $\tau_{2,2}$ time-steps, respectively. In the first one we prove that $\phi(t)$ reduces its value to $O(wn\log n)$, and in the second one we show the same for $\psi(t)$.

It is worth mentioning that both potential functions depend on each other implicitly. Potential function $\phi$ only depends on terms containing $A_i(t)$'s, however, there is an implicit dependency on light shaded colours, as only light shaded agents can transform into dark shaded ones. The same holds for the potential function $\psi$.  Fortunately, the properties proved in Phase 1 of our analysis (that hold for very long periods of time) allow us to control the implicit dependency between the potentials. In particular, we will verify that both $\phi$ and $\psi$ are approximately super-martingales, and both present a drift towards reducing their value, and indeed, we will show that their values halve every $O(wn)$ time-steps. In order to obtain strong w.h.p. bounds for $\phi$ and $\psi$, we introduce a Chung-Lu-type concentration to bound a general class of processes containing $\phi$ and $\psi$ (see \cref{lemma:processConcentration}), which may be of independent interest. Using our new bounds, we are able to show that the potentials quickly reach size $O(wn \log n)$ (\cref{lemma:firstPotentialSmall} and \cref{lemma:secondPotentialSmall}). 

\bigskip
Recall that we restart time at the beginning of Phase 2. For the analysis of Subphase 2.1 we define the event $B_t$ by
\begin{align}
B_t= \{\xi(s) \in \mathcal E\,, \text{ for all } s \in \{0,1,\ldots, t-1\}\}\,,
\end{align}
then we have the following result.

\begin{lemma}\label{lemma:firstPotentialSmall}
Given $r>10$\,, there exists $C>0$ and $\tau_{2,1} = O(w n \log n)$ such that with probability at least $n^{10-r}$\,,
\begin{align}
   \ind_{B_t} \sum_{i=1}^k\sum_{j=1}^k \left(\frac{A_i(t)}{w_i}-\frac{A_j(t)}{w_j}\right)^2 \leq Cwn\log n \,,
\end{align}
for all $t \in \{\tau_{2,1},\tau_{2,1}+1,\ldots, \tau_{2,1}+n^9\}$\,.
\end{lemma}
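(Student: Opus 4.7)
The plan is to show that on the good event $\xi(t)\in\mathcal{E}$ guaranteed by Phase 1, the potential $\phi$ behaves as an approximate supermartingale that contracts to an equilibrium of order $wn$. Concretely, I would first prove a one-step drift bound of the form
\[
  \E\!\left[\phi(t+1)-\phi(t)\mid\xi(t)\right]\ind_{\xi(t)\in\mathcal{E}} \leq -\frac{c}{nw}\phi(t) + C
\]
for absolute constants $c,C>0$, then iterate it over $\tau_{2,1} = O(wn\log n)$ steps so that $\E[\phi(\tau_{2,1})\ind_{B_{\tau_{2,1}}}] = O(wn)$, and finally upgrade to a uniform w.h.p.\ bound over the window $\{\tau_{2,1},\ldots,\tau_{2,1}+n^9\}$ via \cref{lemma:processConcentration} together with a union bound. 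The indicator $\ind_{B_t}$ stays non-trivial because \cref{thm:phase1thm} guarantees $\mathcal{E}$ is not left for $n^{10}$ steps w.h.p.

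For the drift, note that in one step at most one $A_i$ changes by $\pm 1$, so $X_i := A_i/w_i$ changes by $\epsilon_i = \pm 1/w_i$, and direct expansion yields
\[
  \Delta\phi = 4\epsilon_i\sum_{j\neq i}(X_i-X_j) + 2(k-1)\epsilon_i^2 = 4k\,\epsilon_i\,y_i + O(1),
\]
with $y_i = X_i - \bar X$ and $\bar X = k^{-1}\sum_j X_j$. The transition probabilities are $\Pr(A_i\to A_i+1) = (a/n)(A_i/n)$ and $\Pr(A_i\to A_i-1) = (A_i/n)^2/w_i$, so $\E[\epsilon_i]$ is proportional to $a - X_i$. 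Writing $a - X_i = (a-\bar X) - y_i$ and using $\sum_i y_i = 0$ together with $\sum_i y_i^2 = \phi/(2k)$, the linear drift collapses to $\tfrac{2(a - 2\bar X)}{n^2}\phi - \tfrac{4k}{n^2}\sum_i y_i^3$. On $\mathcal{E}$ both $\bar X$ and $a$ lie within $(1\pm\delta)n/(1+w)$, so $2\bar X - a = \Theta(n/w)$ and the main term becomes $-\Theta(1/(nw))\phi$, while the cubic correction is bounded by $\max_i|y_i|\cdot\phi/(2k)\cdot 4k/n^2 = O(\delta\phi/n)$, which is dominated by the main term because $\delta$ is a small constant. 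The quadratic part of $\Delta\phi$ contributes at most $O(k^2) = O(1)$ additively, giving the announced drift.

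To pass from expectation to high probability, I would introduce the stopping time $\tau = \min\{s:\xi(s)\notin\mathcal{E}\}$ and work with $\phi(t\wedge\tau)$, on which the drift inequality holds deterministically and whose increments are bounded by $O(1)$. Feeding this process into \cref{lemma:processConcentration} produces $\phi(\tau_{2,1}\wedge\tau) \leq Cwn\log n$ with probability at least $1 - n^{-r}$ for any prescribed $r$; a union bound over the $n^9$ time-steps in the window, combined with the Phase 1 guarantee $\tau > n^{10}$ w.h.p., delivers the claim. The chief obstacle is extracting the sharp contraction rate $1/(nw)$: a naive bound on the drift would lose the $\Theta(n/w)$ coming from $2\bar X - a$ and only yield an $O(\delta/n)$ rate, which would be swamped by the cubic and quadratic corrections and fail to deliver the $\log n$ factor. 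Carefully centering via $y_i$ and using Phase~1's concentration of both $\bar X$ and $a$ at $n/(1+w)$ is what makes the argument sharp enough to halve $\phi$ every $O(nw)$ steps and reach the target $O(wn\log n)$ level.
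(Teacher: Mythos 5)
Your drift computation is sound and matches what the paper proves as \cref{lemma:potential1Prop}(1) (the paper bounds $\sum_{i,j}q_{ij}^2\,(a-q_i-q_j)/n^2$ directly on $\mathcal E$ rather than centering at $\bar X$, but both routes give the contraction $-\Theta(1/(nw))\phi$). The gap is in the concentration step. You assert that the increments of $\phi(t\wedge\tau)$ are $O(1)$, but your own expansion $\Delta\phi=4k\epsilon_i y_i+O(1)$ contradicts this: on $\mathcal E$ one only has $|y_i|=O(\delta n/w)$, so $|\Delta\phi|$ can be of order $n$, and more precisely $|\Delta\phi|=O(k\sqrt{\phi})$ since $|y_i|\le\sqrt{\phi/(2k)}$. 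Likewise the one-step conditional variance is $\Theta(\phi/w)$, not $O(1)$ (these are items (2)--(3) of \cref{lemma:potential1Prop}). Because of this, a single application of \cref{lemma:processConcentration} over the whole phase with worst-case parameters fails: taking $\delta^2=\Theta(n^2/w)$, $\alpha=\Theta(1/(wn))$ and $\lambda=\Theta(wn\log n)$, the exponent in \cref{eqn:lemmaConcentrationFinal} is at most $\lambda^2\alpha/\delta^2=\Theta(w^2\log^2 n/n)=o(1)$, so the tail bound is vacuous; and you cannot instead plug in $\gamma,\delta$ depending on the \emph{current} value of $\phi$, since the lemma requires fixed parameters valid along the whole trajectory.

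The missing idea is the bootstrapping/halving scheme the paper uses: one defines events $\mathcal B_j=\{\ind_{B_s}\phi(s)\le n^2/2^j\ \forall s\ge jT\}$ with $T=\lfloor qwn\rfloor$, and at stage $j$ applies \cref{lemma:processConcentration} with parameters tied to the level $m=n^2/2^{j-1}$ guaranteed by the previous stage, namely $\gamma=\Theta(k\sqrt m)$, $\delta^2=\Theta(m/k)$, $\lambda=m/4$. This yields a tail of $\exp(-c\min(m/n,\sqrt m/k))$, which only becomes $n^{-r}$ once $m\gtrsim Cwn\log n$ --- this is precisely where the $\log n$ in the final bound comes from, and why the argument terminates at $O(wn\log n)$ rather than at the $O(wn)$ level your expectation iteration suggests. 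Iterating over $J=O(\log n)$ stages and union bounding over time and over $j$ gives the stated failure probability. Without some version of this level-by-level conditioning, the step from the expected bound $\E[\phi(\tau_{2,1})]=O(wn)$ to a uniform high-probability bound does not go through.
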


Note that in the above lemma we prove the result for $\ind_{B_t}\phi(t)$ instead of $\phi(t)$\,. This is just for the sake of the formal analysis, as we know from Phase 1 that the events $B_t$ hold true w.h.p..

Similarly, we can prove that the second potential $\psi$ also decreases very fast to a value of $O(wn)$\,. However, we can only prove this after Subphase 2.1 is over as our analysis uses the fact that $\phi(t)$ is already small. Again, it is convenient to restart time at $\tau_{2,1}$ (i.e. we start Subphase 2.1 at time 0).

Define the set of configurations $\mathcal E' \subseteq \Omega$ as
\begin{align}\label{eqn:defiE'}
\mathcal E' = \left\{\xi \in \Omega: \sum_{i=1}^k \sum_{j=1}^k \left(\frac{A_i}{w_i}-\frac{A_j}{w_j}\right)^2 \leq Cwn\right\} \cap \mathcal E \;\,,
\end{align}
where $\mathcal E$ is defined in~\cref{eqn:defi_E}\,, and define the event $B'_t = \{\xi(s) \in \mathcal E' \,, \forall s\in \{0,\ldots, t-1\}\}$,
which we know it holds true after Phase 2.1 for at least $n^9$ time-steps by~\cref{lemma:firstPotentialSmall}.

\begin{lemma}\label{lemma:secondPotentialSmall}
Given $r>9$\,, there exists $C'>0$ and $\tau_{2,2} = O(wn \log n)$ such that with probability at least $n^{9-r}$\,,
\begin{align}
    \ind_{B'_t}\sum_{i=1}^k\sum_{j=1}^k \left(\frac{a_i(t)}{w_i}-\frac{a_j(t)}{w_j}\right)^2 \leq C' wn\,,
\end{align} 
for all $t \in \{\tau_{2,2},\ldots, \tau_{2,2}+n^8\}$\,.
\end{lemma}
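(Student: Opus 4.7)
The plan is to mirror the proof of \cref{lemma:firstPotentialSmall}, but now applied to $\psi$, with the crucial extra ingredient being the indicator $\ind_{B'_t}$ (equivalently, the bound $\phi(t) \leq Cwn$ on $B'_t$) used to control the implicit dependence of the light counts $a_i$ on the dark counts $A_i$. Concretely, I would (i) compute the one-step drift of $\psi$ on $B'_t$, (ii) show it gives multiplicative contraction toward the target value $O(wn)$, (iii) invoke the Chung--Lu-type concentration of \cref{lemma:processConcentration} to obtain a w.h.p.\ bound at time $\tau_{2,2}$, and (iv) extend this bound over the window of length $n^8$ by a union-bound/recurrence argument.

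For the drift, the quantity $a_i$ changes only by $\pm 1$: it decreases at rate $a_i A/n^2$ (a light agent of colour $i$ meets any dark agent) and it increases at rate $A_i^2/(w_i n^2)$ (two dark agents of colour $i$ meet and the sampler flips shade with probability $1/w_i$). Setting $X_i = a_i/w_i$, $\bar X = (1/k)\sum_i X_i$, $D_i = X_i - \bar X$ so that $\psi = 2k\sum_i D_i^2$, and analogously $Y_i = A_i/w_i$, $E_i = Y_i - \bar Y$ so that $\phi = 2k\sum_i E_i^2$, a direct expansion gives
\begin{align*}
\mathds{E}\!\left[\psi(t+1) - \psi(t) \mid \xi(t)\right] \;=\; \frac{4k}{n^2}\sum_i \frac{D_i}{w_i}\!\left(\tfrac{A_i^2}{w_i} - a_i A\right) + O(k).
\end{align*}
Substituting $A_i = w_i(\bar Y + E_i)$ and $A = w\bar Y + \sum_j w_j E_j$, and using $\sum_i D_i = 0$, the bracketed expression has leading contribution $-(w\bar Y)\psi/(2k)$; since $\bar Y = \Theta(n/(1+w))$ on $\mathcal E$, the dominant drift is at most $-(c/(wn))\psi(t)$ for some absolute $c>0$. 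The residual cross-terms scale, after the $4k/n^2$ prefactor, as $\bar Y \sqrt{\phi\psi}/n^2$ and $\sqrt{\phi}\,\psi/n^2$; on $B'_t$, where $\phi \leq Cwn$, a weighted AM--GM $\sqrt{ab} \leq \lambda a + b/(4\lambda)$ with $\lambda$ tuned to absorb half of the main drift leaves only an $O(w^2)$ additive error, which is dominated by the drift whenever $\psi(t) \geq C'wn$ for $C'$ chosen sufficiently large.

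Equipped with this approximate supermartingale property and the bounded-increment estimate $|\psi(t+1) - \psi(t)| = O(\sqrt{k\psi(t)}+k)$, I would apply \cref{lemma:processConcentration} to the process $\ind_{B'_t}\psi(t)$ exactly as done for $\phi$ in Subphase 2.1: the multiplicative contraction drives the initial value $\psi(0) = O(n^2)$ below $C'wn$ within $\tau_{2,2} = O(wn\log n)$ steps, with probability at least $1 - n^{-r}$ for any chosen constant $r$. To propagate this bound over the whole window $\{\tau_{2,2}, \dots, \tau_{2,2} + n^8\}$, a union bound over time (taking $r = 9 + \varepsilon$) suffices, using the fact that whenever $\psi$ briefly exceeds the target, the multiplicative drift pulls it back within $O(wn\log n)$ additional steps w.h.p. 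The main obstacle is precisely the control of the cross-terms between $\psi$ and $\phi$: unlike the proof for $\phi$, the drift of $\psi$ genuinely mixes in the dark-count quantities $A_i^2/w_i - a_i A$, and a naive Cauchy--Schwarz bound for $\sum_i D_i E_i$ would reintroduce a term of order $\psi/n$ that directly competes with the desired contraction $\psi/(wn)$; the preliminary control $\phi = O(wn)$ provided by $B'_t$, together with the weighted AM--GM, is precisely what allows these cross-terms to be swept into the additive error, which is the structural reason why Subphase 2.1 must precede Subphase 2.2.
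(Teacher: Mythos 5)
Your proposal follows essentially the same route as the paper: establish the drift, variance, and bounded-difference properties of $\psi$ (the content of \cref{lemma:potential2Prop}), feed them into the concentration inequality of \cref{lemma:processConcentration}, and run the same halving-plus-union-bound scheme as in \cref{lemma:firstPotentialSmall}. The only (minor) divergence is in taming the $\sqrt{\phi\psi}$ cross-term: you absorb it via weighted AM--GM using $\phi\leq Cwn$ on $B'_t$, whereas the paper builds the condition $\psi(t)\geq 16\phi(t)$ into the conditioning event and applies Cauchy--Schwarz --- two equivalent ways of exploiting the fact that Subphase 2.1 has already made $\phi$ small.
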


Again, note that we prove the result for the process $\ind_{B_t'}\psi(t)$ instead of $\psi(t)$\,, but, as before, this is just for having simpler analysis as in practice the events $B_t'$ hold w.h.p. in Phase 2.2\,.

Combining \cref{thm:phase1thm}, \cref{lemma:firstPotentialSmall}, and \cref{lemma:secondPotentialSmall} gives the main result of this section.

\begin{theorem}\label{thm:easyprima}
Consider the \minority protocol starting from an arbitrary configuration in $\Omega$. Then for any constant $r>0$, there exists a time-step $\tau=\tau_1+\tau_{2,1}+\tau_{2,2}= O(w^2n \log n)$\, and a constant $C>0$ such that with probability at least $1-O(n^{-r})$ we have that
\begin{align*}
\sum_{i=1}^k\sum_{j=1}^k \left(\frac{A_i(t)}{w_i}-\frac{A_j(t)}{w_j}\right)^2 \leq Cwn\log n\,, \; \text{ and }\; \sum_{i=1}^k \sum_{j=1}^k \left( \frac{a_i(t)}{w_i}-\frac{a_j(t)}{w_j}\right)^2 \leq Cwn \log n\,,
\end{align*}
for all $t \in \{\tau,\tau+1,\ldots, \tau+n^8\}$\,.
\end{theorem}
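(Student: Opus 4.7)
The plan is to chain the three preceding results, restarting time at the end of each phase and applying a union bound. Set $\tau := \tau_1 + \tau_{2,1} + \tau_{2,2}$; by the stated magnitudes $\tau_1 = O(w^2 n\log n)$ and $\tau_{2,1}, \tau_{2,2} = O(wn \log n)$, so $\tau = O(w^2 n \log n)$ as required. The final bounds on $\phi$ and $\psi$ we want to show at every $t \in \{\tau,\ldots, \tau + n^8\}$ will come directly from Lemmas~\ref{lemma:firstPotentialSmall} and~\ref{lemma:secondPotentialSmall}, once the ``conditioning events'' $B_t$ and $B'_t$ inside those lemmas are shown to hold on a sufficiently long window.

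First, apply Theorem~\ref{thm:phase1thm} starting from the arbitrary initial configuration in $\Omega$: with probability at least $1 - \exp(-\Omega(n/w^3))$, the event $\{\xi(s) \in \mathcal E\}$ holds for every $s$ in a window of length $n^{10}$ starting at $\tau_1$. Now restart the time index at $\tau_1$ and apply Lemma~\ref{lemma:firstPotentialSmall} with a sufficiently large constant $r$: with probability $1 - O(n^{-(r-10)})$, the bound $\ind_{B_t}\phi(t) \leq C w n \log n$ holds for all $t$ in a window of length $n^9$ starting at $\tau_{2,1}$ (measured in the restarted clock). On the intersection of the Phase~1 event and the Phase~2.1 event, the indicator $\ind_{B_t}$ is identically $1$ throughout this window (since $\mathcal E$ is guaranteed for $n^{10} \gg \tau_{2,1} + n^9$ steps after $\tau_1$), so in fact $\phi(t) \leq C w n \log n$ holds unconditionally on these high-probability events, which is precisely the content of $\{\xi(t) \in \mathcal E'\}$ as defined in~\cref{eqn:defiE'}.

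Next, restart time at $\tau_1 + \tau_{2,1}$ and apply Lemma~\ref{lemma:secondPotentialSmall}: with probability $1 - O(n^{-(r-9)})$, we obtain $\ind_{B'_t}\psi(t) \leq C' w n$ for all $t$ in a window of length $n^8$ starting at $\tau_{2,2}$. On the intersection of all three events, $B'_t$ holds throughout the relevant range because $\mathcal E'$ has been shown to hold for $n^9 \gg \tau_{2,2} + n^8$ steps after $\tau_1 + \tau_{2,1}$, so the indicator is again identically $1$. Taking the union bound over the three failure probabilities gives an overall failure probability of $O(n^{-r})$ for any constant $r$, provided we pick the internal constants in the lemmas large enough, and both desired bounds on $\phi$ and $\psi$ then hold simultaneously throughout $\{\tau, \ldots, \tau + n^8\}$.

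The only genuine care needed is bookkeeping: the nested time windows $n^{10} \supset n^9 \supset n^8$ were chosen in the earlier lemmas precisely so that each ``good'' event of one phase outlasts the full duration of the subsequent phase plus its guaranteed window. The main obstacle, though of a bookkeeping rather than mathematical nature, is verifying this containment and ensuring that the indicators $\ind_{B_t}$ and $\ind_{B'_t}$ appearing in the lemma statements can be removed on the good events, so the bounds on $\phi(t)$ and $\psi(t)$ become genuine, unconditional bounds on the interval $\{\tau, \ldots, \tau + n^8\}$. The bound $Cwn$ on $\psi$ is stronger than the stated $Cwn\log n$ in the theorem, so enlarging $C$ to absorb both bounds into a single constant completes the argument.
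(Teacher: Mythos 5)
Your proposal is correct and follows essentially the same route as the paper, which simply states that the theorem follows by combining \cref{thm:phase1thm}, \cref{lemma:firstPotentialSmall}, and \cref{lemma:secondPotentialSmall} via exactly the chaining, window-containment, and union-bound argument you spell out. The only cosmetic caveat is that $\mathcal E'$ as written in \cref{eqn:defiE'} uses the bound $Cwn$ rather than the $Cwn\log n$ delivered by \cref{lemma:firstPotentialSmall}, a discrepancy internal to the paper that your constant-absorption remark already accommodates.
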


We note that \cref{thm:easy} is a straightforward corollary of \cref{thm:easyprima}\,. The strategy to prove \cref{lemma:firstPotentialSmall} and \cref{lemma:secondPotentialSmall} 
is to show that after $O(wn)$ time-steps the value of each potential function halves (in the respective phases). In order to prove the potentials halve, we use the following lemmas that establish some properties about $\phi$ and $\psi$ respectively. 

Denote by $\mathcal F_t $ the filtration generated by the configurations of colours up to time $t$\,, and let $G_t$ be an increasing sequence of events such that $G_{t}\in \mathcal F_{t-1}$ for $t \geq 1$ and $G_0 \in \mathcal F_0$\,. The potential functions satisfy the following properties:

\begin{lemma}\label{lemma:potential1Prop}
Let $T\geq 0$ be an integer. Suppose that event $\{\xi_{t}\in \mathcal E\}$ is contained in $G_{t+1}$ for all $t\geq T$, then
there exist constants $C_1,C_2,C_3,C_4,C_5$ and $C_6$ (independent of $T$, $k$, etc.) such that for any $t\geq T$,
\begin{enumerate}
    \item \label{eqn:ExPot1}
   $\E(\ind_{G_{t+1}}\phi(t+1)|\mathcal F_t) \leq \ind_{G_{t+1}}\phi(t)\left(1-\frac{C_1}{nw} \right) + C_2$\,,
   \item \label{eqn:VarPot1}
   $\var(\ind_{G_{t+1}}\phi(t+1)|\mathcal F_t) \leq C_3 \ind_{G_t}\phi(t)/w+C_4$\,,
   \item \label{eqn:DiffPot1}$ |\ind_{G_{t+1}}\phi(t+1)-\E(\ind_{G_{t+1}}\phi(t+1)|\mathcal F_t)|\leq  C_5k\sqrt{\ind_{G_{t}}\phi(t)}+C_6k$\,.
\end{enumerate}

\end{lemma}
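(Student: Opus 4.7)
The plan is to track the increments $\Delta_i := X_i(t+1) - X_i(t)$, where $X_i := A_i/w_i$, and to exploit the structural fact that each time-step modifies at most one $A_i$, by $\pm 1$; in particular $\Delta_i \Delta_j = 0$ for $i\neq j$. The productive interactions are a light agent sampling a dark colour-$i$ agent (raising $A_i$ by $1$, probability $p_i^+ = a A_i/[n(n-1)]$) and two dark colour-$i$ agents meeting with the re-shading coin coming up heads (lowering $A_i$ by $1$, probability $p_i^- = A_i(A_i-1)/[w_i n(n-1)]$). A short calculation then gives $\E{\Delta_i \given \mathcal F_t} = X_i(a - X_i)/n^2 + O(1/n^2)$ and $\E{\Delta_i^2 \given \mathcal F_t} = (p_i^+ + p_i^-)/w_i^2$.

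For part (1), expand
\begin{align*}
\phi(t+1) - \phi(t) = 2\sum_{i,j}(X_i - X_j)(\Delta_i - \Delta_j) + \sum_{i,j}(\Delta_i - \Delta_j)^2.
\end{align*}
Using $\E{\Delta_i - \Delta_j \given \mathcal F_t} = (X_i - X_j)(a - X_i - X_j)/n^2 + O(1/n^2)$, the conditional drift becomes $D = (2/n^2)\sum_{i,j}(X_i - X_j)^2(a - X_i - X_j) + O(\phi/n)$. Since $\{\xi_t \in \mathcal E\} \subseteq G_{t+1}$, on $G_{t+1}$ every $X_i$ and $a$ lies in $[(1-\delta)n/(1+w), (1+\delta)n/(1+w)]$, which forces $a - X_i - X_j \leq -(1-3\delta)n/(1+w)$, hence $D \leq -C_1 \phi(t)/(nw)$ for a constant $C_1$. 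The noise term is $\sum_{i,j}\E{(\Delta_i - \Delta_j)^2 \given \mathcal F_t} = 2k\sum_i \E{\Delta_i^2 \given \mathcal F_t} = O(1)$ since $k,w$ are constants. Because $\ind_{G_{t+1}}$ is $\mathcal F_t$-measurable, pulling it through the conditional expectation delivers (1).

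For (2) and (3), condition on the unique index $i$ with $\Delta_i \neq 0$. Expanding about $\bar X := (1/k)\sum_j X_j$ yields
\begin{align*}
\phi(t+1) - \phi(t) = 4k\,\Delta_i(X_i - \bar X) + 2(k-1)\Delta_i^2.
\end{align*}
Using the inequality $|X_i - \bar X| \leq \sqrt{\phi(t)/(2k)}$ (which follows from $\phi = 2k\sum_j(X_j - \bar X)^2$) and $|\Delta_i| \leq 1$, the one-step change satisfies $|\phi(t+1) - \phi(t)| = O(\sqrt{k\phi(t)}) + O(k)$. Combined with the drift estimate from (1) via the triangle inequality, this gives (3). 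For (2), bound the variance by the conditional second moment of the increment: with $\Prob(\Delta_i \neq 0) = p_i^+ + p_i^- = O(w_i)$ on $\mathcal E$, the leading contribution sums to $\sum_i O(w_i) \cdot k^2(X_i - \bar X)^2/w_i^2 = O(k\phi/w)$, while the remainder is $O(1)$, yielding (2).

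The main technical obstacle is the drift estimate in (1): one must carefully invoke the diversity hypothesis $\xi_t \in \mathcal E$ to convert the otherwise benign factor $a - X_i - X_j$ into a quantity of order $-n/w$, which then combines with the $1/n^2$ from the transition probabilities to produce the crucial contraction rate $\phi/(nw)$. Once the drift is pinned down, (2) and (3) are standard second-moment bounds, enabled throughout by the fact that only a single coordinate $X_i$ can change per step.
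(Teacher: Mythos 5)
Your proposal is correct and follows essentially the same route as the paper's proof: the same increment variables $\Delta_i=d_i$ with $\Delta_i\Delta_j=0$, the same transition probabilities, the same use of $\mathcal E$ to force $a-X_i-X_j\leq -(1-3\delta)n/(1+w)$ for the contraction, and your expansion $4k\,\Delta_i(X_i-\bar X)+2(k-1)\Delta_i^2$ is exactly the paper's $4k\sum_i q_id_i-4Q_1D_1+(2k-2)D_2$ specialised to a single active coordinate. Two bookkeeping slips should be repaired (neither changes the structure): the error in the drift is $O\bigl(\sum_{i,j}|X_i-X_j|/n\bigr)=O(k\sqrt{\phi(t)}/n)=O(k^2)$, absorbable into $C_2$, not $O(\phi/n)$ (which as written would not imply $D\leq -C_1\phi/(nw)$); and on $\mathcal E$ the activity probability is $p_i^++p_i^-=O(w_i/w^2)$, not merely $O(w_i)$ --- it is this extra $1/w^2$, together with $k\leq w$, that yields the $1/w$ factor claimed in item (2).
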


\begin{lemma}\label{lemma:potential2Prop}
Let $T\geq 0$ be an integer. Suppose that event $\{\xi_{t-1}\in \mathcal E\} \cap \{\psi(t-1)\geq \max(16\phi(t-1),k^2)\}$ is contained in $G_t$, then there exist constants $C_1,C_2,C_3,C_4,C_5,C_6$ (independent of $T$, $k$, etc.) such that for any $t\geq T$  
\begin{enumerate}
    \item \label{eqn:ExpectationPot2}$
  \E(\ind_{G_{t+1}}\psi(t+1)|\mathcal F_t)\leq  \ind_{G_{t}}\psi(t)\left(1-\frac{C_1}{n}\right)+  C_2$\,,
    \item \label{eqn:VarPot2}$
   \var(\ind_{G_{t+1}}\psi(t+1)|\mathcal F_t) \leq C_3 k\ind_{G_{t}}\psi(t)+C_4$\,,
    \item \label{eqn:diffPot2}
    $|\ind_{G_{t+1}}\psi(t+1)-\E(\ind_{G_{t+1}}\psi(t+1)|\mathcal F_t)|\leq C_5k\sqrt{\ind_{G_{t}}\phi(t)}+C_6k$\,.
\end{enumerate}
\end{lemma}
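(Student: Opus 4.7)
The plan is to compute the one-step increment $\Delta\psi = \psi(t+1)-\psi(t)$, exploiting that in any single time-step at most one coordinate $a_\ell$ changes, and only by $\pm 1$; the three bullets then follow from examining the mean, variance, and pointwise size of this increment on the event $G_{t+1}$, which guarantees $\xi(t)\in\mathcal E$, $\psi(t)\geq 16\phi(t)$, and $\psi(t)\geq k^2$. The transition probabilities read $p_\ell^+ = A_\ell(A_\ell-1)/(n(n-1)w_\ell)$ for $a_\ell\mapsto a_\ell+1$ (two dark $\ell$-agents meet and one turns light) and $p_\ell^- = a_\ell A/(n(n-1))$ for $a_\ell\mapsto a_\ell-1$ (a light $\ell$-agent samples any dark agent). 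Expanding $(x+h)^2 = x^2+2xh+h^2$ across the $k^2$ summands of $\psi$ yields the pointwise identity
\begin{align*}
\Delta\psi = \frac{4\,\Delta a_\ell}{w_\ell}\Bigl(k\tfrac{a_\ell}{w_\ell}-S_a\Bigr)+O(k),\qquad S_a:=\sum_j\tfrac{a_j}{w_j},
\end{align*}
when $a_\ell$ is the unique coordinate that moves. Since $G_{t+1}\in\mathcal F_t$ and $\ind_{G_{t+1}}\leq\ind_{G_t}$, the three claims reduce to pointwise bounds on $G_{t+1}$.

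For Part 1, substituting $p_\ell^\pm$ and writing $\alpha_i=A_i/w_i$, $\beta_i=a_i/w_i$ gives
\begin{align*}
\E(\Delta\psi\mid\mathcal F_t) = \frac{4}{n(n-1)}\sum_\ell(k\beta_\ell-S_a)\bigl[\alpha_\ell^2-\beta_\ell A\bigr]+O(k),
\end{align*}
with $A=\sum_j w_j\alpha_j$. Writing $\alpha_\ell=\bar\alpha+\epsilon_\ell$ (so that $\sum_\ell\epsilon_\ell^2=\phi/(2k)$ by the variance identity $\phi = 2k\sum_\ell(\alpha_\ell-\bar\alpha)^2$), the $\bar\alpha^2$-piece vanishes because $\sum_\ell(k\beta_\ell-S_a)=0$, and the $-\beta_\ell A$-piece collapses, via $\sum_{i,j}(\beta_i-\beta_j)^2=2k\sum_i\beta_i^2-2S_a^2=\psi$, to $-A\psi/2$. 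Since on $\mathcal E$ we have $A\approx w\bar\alpha$ with $\bar\alpha=\Theta(n/w)$, this contributes $-\Theta(\psi/n)$. The $\epsilon_\ell$-deviations produce an error bounded by Cauchy-Schwarz: $|\sum_\ell(k\beta_\ell-S_a)\epsilon_\ell|\leq \sqrt{(k\psi/2)(\phi/(2k))} = \tfrac12\sqrt{\psi\phi}$, which under $\psi\geq 16\phi$ is at most $\psi/8$. After the $\bar\alpha/n^2$ prefactor this is strictly smaller than the leading drift, delivering $\E(\ind_{G_{t+1}}\psi(t+1)\mid\mathcal F_t)\leq \ind_{G_t}\psi(t)(1-C_1/n)+C_2$.

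Parts 2 and 3 flow from the same identity. The Cauchy-Schwarz estimate $|ka_\ell/w_\ell-S_a|\leq k\max_j|a_\ell/w_\ell-a_j/w_j|\leq k\sqrt{\psi}$ (each squared pairwise difference is at most $\psi$) yields the pointwise bound $|\Delta\psi|\leq O(k\sqrt{\psi}+k)$; combined with $\psi\geq 16\phi$ this gives the $\sqrt\phi$-form stated in Part 3 up to absolute constants. For Part 2, I bound $\var(\Delta\psi\mid\mathcal F_t)\leq \E((\Delta\psi)^2\mid\mathcal F_t)$ by summing $(\Delta\psi)^2$ weighted by $p_\ell^++p_\ell^-$; the $\mathcal E$-estimates $p_\ell^\pm = O(w_\ell/(1+w)^2)$ together with the sharper Cauchy-Schwarz identity $\sum_\ell(k\beta_\ell-S_a)^2 = k\psi/2$ deliver $\E((\Delta\psi)^2\mid \mathcal F_t) = O(k\psi+k^2)$, as required. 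The main technical obstacle lies in Part 1: the $\phi$-dependent error from $\alpha_\ell\neq\bar\alpha$ could in principle dominate the $\Theta(\psi/n)$ drift, and it is exactly the hypothesis $\psi\geq 16\phi$ that forces the Cauchy-Schwarz term $\sqrt{\psi\phi}$ to stay a small enough fraction of $\psi$ for the negative drift to survive.
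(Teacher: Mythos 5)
Your overall strategy coincides with the paper's: expand the one-step increment of $\psi$ using that only one coordinate $a_\ell$ moves by $\pm1$, extract the negative drift $-\tfrac{2A}{n^2}\psi$ from the $-\beta_\ell A$ part of the mean increment, control the cross term coming from the dark-colour configuration by Cauchy--Schwarz against $\sqrt{\psi\phi}$, and invoke $\psi\ge 16\phi$ to make that term a small fraction of $\psi$; your variance and increment bounds likewise follow the paper's route. The only real difference is cosmetic: you centre $\alpha_\ell=\bar\alpha+\epsilon_\ell$, whereas the paper factors $q_i^2-q_j^2=q_{ij}(q_i+q_j)$ and uses $q_i+q_j=O(n/k)$ on $\mathcal E$; both yield the same $\sqrt{\psi\phi}$ estimate.

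Two points need attention. First, in Part 1 the expansion of $\alpha_\ell^2$ produces three pieces, $\bar\alpha^2$, $2\bar\alpha\epsilon_\ell$ and $\epsilon_\ell^2$, but you only treat the first two; the term $\tfrac{4}{n(n-1)}\sum_\ell(k\beta_\ell-S_a)\epsilon_\ell^2$ must also be shown to be dominated by the drift. It is: bounding it by $\sqrt{k\psi/2}\cdot\sum_\ell\epsilon_\ell^2=\sqrt{k\psi/2}\,\phi/(2k)$ and using $\phi\le\psi/16$ together with $\sqrt{\psi}=O(n)$ on $\mathcal E$ gives $O(\psi/(\sqrt{k}\,n))$, which is harmless --- but the step is missing. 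Second, and more substantively, your justification of Part 3 is backwards: from $|\Delta\psi|=O(k\sqrt{\psi}+k)$ the hypothesis $\psi\ge 16\phi$ gives $\sqrt{\psi}\ge 4\sqrt{\phi}$, which is the \emph{wrong} direction for concluding a bound of the form $C_5k\sqrt{\phi(t)}+C_6k$. In fact the paper's own proof of item (3) also only establishes a bound of order $\sqrt{k\psi(t)}+k$, and the lemma is later applied with $\gamma=C_5k\sqrt{m}$ where $m$ is the running bound on $\psi$, so the $\phi$ in the statement of item (3) is almost certainly a typo for $\psi$. Your derivation of the $\sqrt{\psi}$ bound is correct, but the sentence asserting that it implies the $\sqrt{\phi}$ form is a false implication; the right move was to flag the mismatch with the statement rather than paper over it.
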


We remark that the only difference between the previous two results is that the second lemma requires an extra condition on the event $G_t$\,. This difference is key, as any meaningful application of  \cref{lemma:potential2Prop} requires knowing that $\psi(t-1)\geq 16\phi(t-1)$\,, and thus $\phi(t-1)$ needs to be already small for any simple application of it. 

The previous two lemmas are not particularly useful by themselves, however we prove the following concentration inequality, which combining with the previous results leads to the proof the main results of this phase.

\begin{lemma}\label{lemma:processConcentration} Let $M(t)$ be a stochastic process adapted to a filtration $\mathcal F_t$\,. Suppose that $M(t)\geq 0$ satisfies
\begin{enumerate}[(i)]
    \item \label{lemma:pM5i} $\E(M(t)|\mathcal F_{t-1}) \leq (1-\alpha)M(t-1)+\beta$, with $0<\alpha<1$, and $\beta>0$\,;
    \item \label{lemma:pM5ii} $|\E(M(t)|\mathcal F_{t-1})-M(t)|\leq \gamma$\,;
    \item \label{lemma:pM5iii} $\var(M(t)|\mathcal F_{t-1})\leq \delta^2$\,.
\end{enumerate}
Then for any $\lambda > 0$\,,
\begin{align}
     \Prob(M(t) \geq \E M(t)+\lambda) \leq \exp\left(-\frac{\lambda^2/2}{\frac{\delta^2}{(2\alpha-\alpha^2)}+\frac{\lambda \gamma}{3}}\right)\label{eqn:lemmaConcentrationFinal}\,.
\end{align}
\end{lemma}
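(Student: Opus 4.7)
The plan is a Chernoff (exponential-moment) argument with \emph{time-dependent} parameters $\theta_s := \theta(1-\alpha)^{t-s}$, chosen so that the geometric contraction $(1-\alpha)$ in hypothesis~(i) is absorbed into a telescoping recursion. Writing $M(s)=\E(M(s)\mid\mathcal F_{s-1}) + D_s$ with $D_s$ the mean-zero martingale increment, (ii) gives $|D_s|\leq\gamma$ and (iii) gives $\E(D_s^2\mid\mathcal F_{s-1})\leq\delta^2$, so the standard Bennett--Bernstein single-variable MGF bound yields $\E(e^{\eta D_s}\mid\mathcal F_{s-1})\leq \exp(\eta^2\delta^2/(2(1-\eta\gamma/3)))$ for $0<\eta<3/\gamma$.

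Combining this with (i) and the choice $\theta_{s-1}=(1-\alpha)\theta_s$ yields the conditional estimate
\[
\E\bigl(e^{\theta_s M(s)}\mid \mathcal F_{s-1}\bigr)\;\leq\; e^{\theta_{s-1} M(s-1)} \cdot \exp\!\Bigl(\theta_s\beta + \tfrac{\theta_s^2\delta^2}{2(1-\theta\gamma/3)}\Bigr).
\]
Iterating this from $s=t$ down to $s=1$ and taking total expectation produces
\[
\E\bigl(e^{\theta M(t)}\bigr) \;\leq\; \exp\!\Bigl(\theta(1-\alpha)^t M(0)+\beta\textstyle\sum_{s=1}^{t}\theta_s+\tfrac{\delta^2/2}{1-\theta\gamma/3}\textstyle\sum_{s=1}^t\theta_s^2\Bigr).
\]
The two geometric sums evaluate to $\sum_{s=1}^t\theta_s\leq \theta/\alpha$ and $\sum_{s=1}^t\theta_s^2\leq \theta^2/(2\alpha-\alpha^2)$, so the variance proxy $V:=\delta^2/(2\alpha-\alpha^2)$ in the target bound emerges naturally from the series $\sum_{s\geq 0}(1-\alpha)^{2s}$.

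The final step is Markov's inequality combined with the classical Bernstein optimization over $\theta\in(0,3/\gamma)$, which turns the MGF bound $\E\exp(\theta M(t))\leq \exp(\theta U(t)+\theta^2 V/(2(1-\theta\gamma/3)))$ into the tail bound $\Prob(M(t)\geq U(t)+\lambda)\leq \exp(-\lambda^2/(2(V+\lambda\gamma/3)))$, where $U(t):=(1-\alpha)^t M(0)+\beta/\alpha$ is the deterministic envelope obtained by iterating (i). Taking expectations of the pathwise inequality $M(t)\leq U(t) + \sum_{s=1}^t(1-\alpha)^{t-s}D_s$ (the same telescoping, now run backwards as a bound rather than an MGF recursion) gives $\E M(t)\leq U(t)$, so the envelope $U(t)$ serves as the deterministic reference level around which the tail is controlled and the conclusion in the form stated in the lemma follows.

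The main obstacle is the bookkeeping in the recursion: the parameter schedule $\theta_s=\theta(1-\alpha)^{t-s}$ must be set up precisely so that the contraction factor $(1-\alpha)$ appearing in~(i) is exactly cancelled by the ratio $\theta_{s-1}/\theta_s$, which is what produces the clean telescoping pattern. Without this alignment, one would pick up a residual factor of $(1-\alpha)^t$ inside the exponent that ruins convergence as $t$ grows, and the variance proxy would fail to converge to $V=\delta^2/(2\alpha-\alpha^2)$. Once the alignment is correct, the remaining calculation is standard Bernstein calculus, and the lemma follows.
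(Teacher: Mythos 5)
Your proposal is correct and follows essentially the same route as the paper's proof: the time-dependent schedule $\theta_s=\theta(1-\alpha)^{t-s}$ is exactly the telescoping the paper performs implicitly when it iterates $\E e^{sM(t)}\leq\bigl(\E e^{(1-\alpha)sM(t-1)}\bigr)\exp(\cdot)$, your Bennett--Bernstein MGF bound is the paper's bound via $G(x)\leq 1/(1-x/3)$, and the final choice of $\theta$ and the geometric sum $\sum_i(1-\alpha)^{2i}\leq 1/(2\alpha-\alpha^2)$ coincide. The only remark is that, like the paper, you actually control the tail around the deterministic envelope $U(t)\geq \E M(t)$ rather than around $\E M(t)$ itself; this shared imprecision is harmless in the applications but worth flagging.
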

\cref{lemma:potential1Prop} and \cref{lemma:potential2Prop} are proved in \cref{sec:proofphase2pote}. The proofs of \cref{lemma:firstPotentialSmall} and \cref{lemma:secondPotentialSmall} are deferred to \cref{sec:phase2PoteSmall}, and finally the proof of \cref{lemma:processConcentration} can be found in \cref{sec:concentrationproof}.

\subsection{Phase 3: A finer Equilibrium}\label{sec:props}
Our next goal is to prove the next theorem.
\begin{theorem}\label{thm:fair}
For constant $k$ and $w$, the \minority protocol achieves fairness.
\end{theorem}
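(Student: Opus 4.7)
The plan is to couple the trajectory $(c_u(t), b_u(t))_{t\geq \tau}$ of a fixed tagged agent $u$ with an idealised Markov chain $P$ on the $2k$-element state space $\{(i,b): i\in[k], b\in\{0,1\}\}$ whose stationary distribution has marginal weight $w_i/w$ on colour $i$. Recall that Phases~1--3 guarantee, after $\tau = O(w^2 n\log n)$ steps and for the next $n^8$ steps, that $A_i(t)/n = w_i/(1+w) + \tilde O(1/\sqrt n)$ and $a_i(t)/n = (w_i/w)/(1+w) + \tilde O(1/\sqrt n)$ uniformly for all $i$ (\cref{thm:approximation}). I define $P$ by applying the rule~\cref{eqn:definitionDiver} with probability $1/n$ per step (the scheduling probability), drawing the sampled neighbour from the equilibrium distribution $\pi(j,1) = w_j/(1+w)$ and $\pi(j,0) = (w_j/w)/(1+w)$. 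A direct detailed-balance check confirms $\pi$ is stationary for $P$, and $\pi(i,0) + \pi(i,1) = w_i/w$, which is exactly the target fraction for fairness.

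Next I would construct a step-by-step coupling between the actual trajectory of $u$ and a trajectory of $P$ initialised at the same state, using a maximal coupling at each scheduling of $u$. On the high-probability event from \cref{thm:approximation}, the distributions of the sampled neighbour in the actual protocol and in $P$ differ by $\tilde O(1/\sqrt n)$ in total variation, so the two transitions agree with probability $1-\tilde O(1/\sqrt n)$ conditional on $u$ being scheduled. Over $T'$ steps, $u$ is scheduled $\tilde O(T'/n)$ times w.h.p., so the expected number of discrepancies is $\tilde O(T'/n^{3/2})$, and a Bernstein estimate shows that the two trajectories agree on at least $T' - o(T')$ steps w.h.p., provided $T'$ exceeds a sufficiently large polylog of $n$.

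Because $k$ and $w$ are constants, $P$ is irreducible with spectral gap $\Theta(1/n)$ (each step applies a constant-gap chain on $2k$ states with probability $1/n$). A Chernoff bound for Markov chains (for instance, Lezaud's inequality) applied to the indicator of colour $i$ under $P$ then shows that the fraction of time $P$ spends in colour $i$ is within $o(1)$ of $w_i/w$ with probability $1 - \exp(-\Omega(T'/n))$ whenever $T' = \omega(n)$. Transferring this through the coupling to the actual agent $u$ yields
\[\left|\frac{|\{t' \in [0,T'] : c_u(t') = i\}|}{T'} - \frac{w_i}{w}\right| = o(1)\]
w.h.p., which is fairness with any constant $\beta>1$. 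The main obstacle is that \cref{thm:approximation} only guarantees the equilibrium configuration over a window of length $n^8$; to handle $T' \gg n^8$, I would iterate the Phase~1--3 analysis, restarting at the end of each successful window and re-initialising the coupling. A union bound over the at most $T'/n^8$ windows, each succeeding w.h.p., then extends the coupling argument to any polynomially bounded horizon $T'$.
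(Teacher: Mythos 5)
Your overall strategy is the same as the paper's: approximate the tagged agent's (non-Markovian) trajectory by an idealised chain $P$ on the $2k$ shaded-colour states whose stationary distribution puts mass $w_i/w$ on colour $i$, invoke \cref{thm:approximation} to control the per-step transition error, apply a Chernoff bound for Markov chains, and iterate over windows of length $n^8$ to reach arbitrary horizons. The paper, however, does \emph{not} couple the agent to a single copy of $P$; it sandwiches the visit counts between two perturbed chains, $N^-_{D_\ell}(t) \precsim N^o_{D_\ell}(t) \precsim N^+_{D_\ell}(t)$, where $P^\pm_{D_\ell}$ shift every transition probability by $\pm err$ in the direction favouring or disfavouring $D_\ell$. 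This stochastic-domination device is precisely what lets the paper avoid the step in your argument that does not go through as written.

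The gap is in the sentence claiming that since the expected number of coupling discrepancies is $\tilde O(T'/n^{3/2})$, ``the two trajectories agree on at least $T'-o(T')$ steps.'' A maximal coupling of the one-step transition kernels only keeps the trajectories together while they occupy the \emph{same} state; the moment one discrepancy occurs, the two chains sit in different states and remain decoupled for every subsequent step until you actively re-merge them, which your per-step coupling does not do (maximally coupling the transition distributions out of two \emph{different} states need not produce agreement). So the number of transition-level failures does not bound the number of time steps of state disagreement; the latter is (number of failures) $\times$ (re-coupling time), and the chain $P$ has relaxation time $\Theta(n)$, so each failure can cost $\Theta(n\log n)$ disagreeing steps. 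The argument is rescuable --- with re-coupling in $O(n\log n)$ steps the total disagreement time is still $\tilde O(T'/n^{1/4})=o(T')$ --- but the re-coupling bound is a missing ingredient, not a routine omission. Two smaller points: \cref{thm:approximation} gives $|A_i(t)-w_i n/(1+w)|\le Cn^{3/4}(\log n)^{1/4}$, i.e.\ a relative error of $\tilde O(n^{-1/4})$, not $\tilde O(n^{-1/2})$ as you state (the arithmetic above still closes with the correct exponent); and your Chernoff step should be applied conditionally on the good event of \cref{thm:approximation} holding throughout the window, with the failure event handled by the trivial bounds $0\le N^o\le n^8$, exactly as the paper does.
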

In order to demonstrate fairness, we need to prove a finer characterisation of the colour distribution for large times. Such characterisation can only be proved by using the properties that are proved to hold after the first two phases of the process. Therefore, this new characterisation is proved in Phase 3, which starts immediately after phase  2 and last for $\tau_3$ time steps.

\begin{theorem}\label{thm:approximation}
Given $r>9$ there exists $C>0$ and $\tau = \tau_1+\tau_{2,1}+\tau_{2,2}+\tau_3 = O(w^2n\log n)$ such that with high probability $1-O(n^{-r})$, it holds that for all $i \in \{1,\ldots, k\}$ and for all $t \in \{\tau, \ldots, n^8\}$\,. 
\begin{align*}
\left|A_i(t) - \frac{w_i}{1+w} n\right| \leq C n^{3/4}(\log n)^{1/4}\,,\; \text{ and }\;  \left|a_i(t)- \frac{w_i}{(1+w)w} n\right| \leq C n^{3/4}(\log n)^{1/4} \,.
\end{align*}

\end{theorem}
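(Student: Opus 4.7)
The plan is to introduce the potential $\sigma^2(t) = (A(t)/w - a(t))^2$ suggested at the start of Section~2 (the target values in \cref{eqn:goalaAA} satisfy the identity $A/w = a$), to run Phase~3 conditionally on the Phase~2 event $B''_t := \{\xi(s)\in \mathcal E,\ \phi(s)\leq Cwn\log n,\ \psi(s)\leq Cwn\log n\ \forall s<t\}$, and to show that $\sigma^2$ decreases to $O(n\log n)$ within $\tau_3 = O(wn\log n)$ steps. Combined with the potential bounds of \cref{thm:easyprima} and the conservation $A(t)+a(t)=n$, this pins each $A_i(t)$ and $a_i(t)$ to its target.

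First I would compute the conditional drift of $Y(t):=A(t)/w-a(t)$. Among the transitions of \cref{eqn:definitionDiver}, only a light agent sampling a dark agent (probability $\approx aA/n^2$, giving $\Delta Y=+(1+w)/w$) and a same-dark-colour pair flipping to light (probability $(1/w_i)A_i^2/n^2$ for colour $i$, giving $\Delta Y=-(1+w)/w$) affect $Y$. Using the weighted-variance identity $\sum_i A_i^2/w_i = A^2/w + \eta(t)$ with $\eta(t):=\sum_i w_i(A_i/w_i-A/w)^2\leq w\phi(t)$ reduces the drift to
\[\E[\Delta Y\mid \mathcal F_t] = -\frac{(1+w)A(t)}{wn^2}\,Y(t) - \frac{(1+w)\,\eta(t)}{wn^2}.\]
Under $B''_t$ we have $A(t)=\Theta(n)$ and $\eta(t)=O(w^2 n\log n)$. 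Expanding $\E[\Delta Y^2\mid \mathcal F_t]=2Y\,\E[\Delta Y\mid \mathcal F_t]+\E[(\Delta Y)^2\mid \mathcal F_t]$ with $|\Delta Y|=O(1)$ and absorbing the cross term $|Y|\log n/n$ by AM--GM yields the super-martingale-type bound
\[\E\!\left[\ind_{B''_{t+1}}\sigma^2(t+1)\mid \mathcal F_t\right]\leq \ind_{B''_t}\sigma^2(t)\!\left(1-\tfrac{c}{n}\right)+C_2,\]
together with $|\Delta \sigma^2|=O(|Y|+1)$ and $\var(\sigma^2(t+1)\mid \mathcal F_t)=O(\sigma^2(t)+1)$.

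Next, feeding these three estimates into \cref{lemma:processConcentration} (exactly as in the proofs of \cref{lemma:firstPotentialSmall} and \cref{lemma:secondPotentialSmall}) and union-bounding over the $n^8$-long window gives $\sigma^2(t)=O(n\log n)$, hence $|A(t)/w-a(t)|=O(\sqrt{n\log n})$, for every $t\in[\tau,\tau+n^8]$ w.h.p. The individual bounds on $A_i$ and $a_i$ then follow by combining three facts valid simultaneously on this window: (i) $A/w$ lies in the convex hull of $\{A_j/w_j\}$, so \cref{thm:easyprima} gives $|A_i-w_i A/w|\leq w_i\sqrt{\phi(t)}=O(\sqrt{wn\log n})$, and analogously for $a_i-w_i a/w$; (ii) $|A/w-a|=O(\sqrt{n\log n})$ from the $\sigma^2$ bound; (iii) $A+a=n$. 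From (ii) and (iii), $a(t)=n/(1+w)+O(\sqrt{wn\log n})$ and $A(t)=wn/(1+w)+O(\sqrt{wn\log n})$; substituting into (i) yields $A_i(t)=\frac{w_i}{1+w}n+O(\sqrt{wn\log n})$ and $a_i(t)=\frac{w_i/w}{1+w}n+O(\sqrt{wn\log n})$, which for constant $w$ and $w_i$ is comfortably within the claimed $O(n^{3/4}(\log n)^{1/4})$.

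The main technical obstacle is the drift calculation itself: the auxiliary term $\eta(t)$ couples the drift of $\sigma^2$ to the previous-phase potential $\phi(t)$, so every estimate must be carried under the indicator $\ind_{B''_t}$, and the constants in the AM--GM absorption must be tuned so that the residual $Y^2/n$ term does not swallow the $-cY^2/n$ drift. This is precisely why Phase~3 must follow Phase~2: one needs $\phi$ already at size $O(wn\log n)$ before the negative drift on $\sigma^2$ is strong enough to push it down to the $O(n\log n)$ regime.
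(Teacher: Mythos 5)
Your proposal follows essentially the same route as the paper: the paper's Phase~3 argument likewise introduces $\sigma^2(t)=(A(t)/w-a(t))^2$, proves drift, variance and bounded-difference estimates for it conditioned on the Phase~2 event (using the identity $\sum_i A_i^2/w_i - A^2/w = \sum_i w_i(A_i/w_i - A/w)^2 = O(w\phi(t))$, exactly your $\eta(t)$), feeds them into \cref{lemma:processConcentration} as in \cref{lemma:firstPotentialSmall} and \cref{lemma:secondPotentialSmall}, and then combines the resulting bound with \cref{thm:easyprima} and $A(t)+a(t)=n$ to pin down each $A_i$ and $a_i$. The only (immaterial) difference is that you push $\sigma^2$ down to $O(n\log n)$ whereas the paper settles for $O(n^{3/2}\sqrt{\log n})$; both yield the stated $O(n^{3/4}(\log n)^{1/4})$ error.
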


As usual, for the proof of the theorem it will be convenient to restart time after phase 2. Recall that $B'_t = \{\xi(s) \in \mathcal E' \,, \forall s\in \{0,\ldots, t-1\}\}$, where $\mathcal E'$ is defined in~\cref{eqn:defiE'}\,. For the proof of the \cref{thm:approximation} we define the set $\widehat{\mathcal E} \subseteq \Omega$ as
\begin{align*}
  \widehat{\mathcal E} &= \left\{\xi \in \Omega: \sum_{i=1}^k \sum_{j=1}^k \left(\frac{a_i}{w_i} -\frac{a_j}{w_j}\right)^2 \leq C'wn\log n\right\} \cap 
  \left\{\xi \in \Omega:\sum_{i=1}^k \sum_{j=1}^k \left(\frac{A_i}{w_i} -\frac{A_j}{w_j}\right)^2 \leq C'wn\log n\right\}
\end{align*}
where $C'>0$ is a large constant. By \cref{thm:easyprima}, we have that $\xi(t) \in \widehat{\mathcal E}$ for all $t$ after Phase 2 for at least $n^8$ time steps with probability at least $1-O(n^{-r})$ (the constant $C'$ depends on $r$). Define $\widehat B_t = \{\xi(s) \in \widehat{\mathcal E}, \forall s\in \{0,\ldots,t-1\}\}$.

The proof of \cref{thm:approximation} follows by combining \cref{thm:easyprima} and the following lemma.

\begin{lemma}\label{lemma:thirdPotentialSmall}
Given $r>9$ there exists $\widehat C>0$ and $\tau_3 = O(wn\log n)$ such that with probability at least $n^{9-r}$\,, we have for all $t \in \{0,\ldots, n^8-\tau_3\}$\,,
\begin{align}
    \ind_{\widehat B_t}\left(\frac{A(t)}{w}-a(t)\right)^2 \leq \widehat Cn^{3/2}\sqrt{\log n} \,.
\end{align} 
\end{lemma}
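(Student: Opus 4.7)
The plan is to mirror the strategy already used for $\phi$ and $\psi$ in \cref{lemma:firstPotentialSmall,lemma:secondPotentialSmall}: show that $\sigma^2(t)=(A(t)/w-a(t))^2$ is an approximate supermartingale with multiplicative drift toward $0$, then feed this into \cref{lemma:processConcentration} over $O(\log n)$ epochs of length $O(n)$ so that $\sigma^2$ halves in each epoch until it reaches the target. Let $X(t)=A(t)/w-a(t)$, so $\sigma^2(t)=X(t)^2$. The only interactions that change $X$ are (i) a light agent sampling a dark agent (probability $a(t)A(t)/n^2$, increment $+(1+1/w)$), and (ii) a dark agent sampling a same-colour dark agent and becoming light (total probability $\sum_i A_i(t)^2/(w_i n^2)$, increment $-(1+1/w)$). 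Using the weighted-variance identity
\[
\sum_{i=1}^k \frac{A_i^2}{w_i} \;=\; \frac{A^2}{w} + \sum_{i=1}^k w_i\left(\frac{A_i}{w_i}-\frac{A}{w}\right)^2,
\]
and writing $R(t):=\sum_i w_i(A_i/w_i-A/w)^2$, the one-step drift becomes
\[
\E[X(t{+}1)-X(t)\mid\mathcal F_t] \;=\; -\bigl(1+\tfrac1w\bigr)\frac{A(t)}{n^2}X(t)\;-\;\bigl(1+\tfrac1w\bigr)\frac{R(t)}{n^2}.
\]
On $\widehat B_{t+1}$ we have $A(t)/n=\Theta(1)$, and by the Jensen-type estimate $R\le w\sum_{i,j}(A_i/w_i-A_j/w_j)^2 = O(w^2 n\log n)$ the bias satisfies $R(t)/n^2=O(\log n/n)$.

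Expanding $\sigma^2(t{+}1)=(X(t)+\Delta X)^2=\sigma^2(t)+2X(t)\Delta X+(\Delta X)^2$ and using $|\Delta X|=O(1)$, $\E[(\Delta X)^2]=O(1)$, gives
\[
\E[\ind_{\widehat B_{t+1}}\sigma^2(t{+}1)\mid\mathcal F_t]\;\le\;\ind_{\widehat B_t}\sigma^2(t)\Bigl(1-\tfrac{c_1}{n}\Bigr)+c_2\,|X(t)|\,\tfrac{\log n}{n}+c_3 .
\]
The troublesome middle term is linear in $\sqrt{\sigma^2(t)}$ and I would absorb it via Young: $|X|\log n/n\le \tfrac{c_1}{4n}X^2+\tfrac{4\log^2 n}{c_1 n}$, yielding the clean recursion
\[
\E[\ind_{\widehat B_{t+1}}\sigma^2(t{+}1)\mid\mathcal F_t]\;\le\;\ind_{\widehat B_t}\sigma^2(t)\Bigl(1-\tfrac{c_1}{2n}\Bigr)+O(1).
\]
The variance and one-step fluctuation of $\sigma^2$ come from $(2X\Delta X+(\Delta X)^2)^2\le 8X^2(\Delta X)^2+2(\Delta X)^4$, so $\mathrm{Var}(\ind_{\widehat B_{t+1}}\sigma^2(t{+}1)\mid\mathcal F_t)=O(\sigma^2(t))+O(1)$ and $|\ind_{\widehat B_{t+1}}\sigma^2(t{+}1)-\E[\cdot\mid\mathcal F_t]|=O(\sqrt{\sigma^2(t)})+O(1)$.

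With the three hypotheses of \cref{lemma:processConcentration} verified I would split $[0,\tau_3]$ into epochs of length $L=Cn$ for a sufficiently large constant $C$. Suppose at the start of an epoch $\sigma^2\le V$; then the multiplicative drift forces $\E[\sigma^2(L)]\le V(1-c_1/(2n))^L+O(n)\le V/100+O(n)$, and applying \cref{lemma:processConcentration} with $\alpha=c_1/(2n)$, $\beta=O(1)$, $\gamma=O(\sqrt V)$, $\delta^2=O(V)$ and $\lambda=\Theta(\sqrt{Vn\log n})$ shows that with probability $1-n^{-r-2}$ the value after the epoch is at most $V/100+O(n)+O(\sqrt{Vn\log n})\le V/2$, provided $V\ge C'n\log n$. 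Starting from the trivial bound $\sigma^2(0)\le V_0=O(n^2)$ and iterating $O(\log n)$ such halvings drives $\sigma^2$ below $O(n\log n)\ll\widehat Cn^{3/2}\sqrt{\log n}$ in total time $\tau_3=O(wn\log n)$. A final application of the same recursion, together with a union bound over $O(n^8)$ steps absorbed by adjusting constants, shows that this bound is preserved up to time $n^8$.

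The main obstacle is the bias term $(1+1/w)R(t)/n^2$ in the drift of $X$: it spoils the exact supermartingale structure of $\sigma^2$ and its control is the reason Phase~3 can only start after Phase~2 has already reduced $\phi$ and $\psi$. Making the Young-type absorption work requires tracking constants carefully so that the effective contraction rate remains $\Omega(1/n)$, and the iteration across epochs must also show that $\widehat B_t$ is not violated during the argument, which is guaranteed by \cref{thm:easyprima} over an interval of length $n^8$. A secondary technical point is that \cref{lemma:processConcentration} needs a \emph{uniform} bound on $\delta^2$ and $\gamma$, so in each epoch I would either use the a~priori bound $V$ from the previous epoch or introduce a stopping time at the first moment $\sigma^2$ exceeds $2V$, handled exactly as in the analyses of $\phi$ and $\psi$.
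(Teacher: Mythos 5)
Your proposal is correct, and its skeleton---establish the drift, variance, and bounded-difference estimates for $\sigma^2$ (the content of the paper's \cref{lemma:potential3Prop}) and then feed them into \cref{lemma:processConcentration} over $O(\log n)$ halving epochs of length $\Theta(n)$, with the a priori cap from the previous epoch making $\gamma$ and $\delta^2$ deterministic---is exactly the paper's. The one genuine divergence is how the bias term $\sum_i A_i^2/w_i - A^2/w$ is controlled. The paper bounds it by Cauchy--Schwarz against the unweighted mean $\frac1k\sum_i A_i/w_i$, obtaining only $O(n^{3/2}\sqrt{\log n})$; that error can be absorbed into the contraction $-\Theta(\sigma^2/n)$ only while $\sigma^2 \gtrsim n^{3/2}\sqrt{\log n}$, which is precisely why the lemma's threshold sits at $\widehat C n^{3/2}\sqrt{\log n}$. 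You instead use the exact weighted-variance identity $\sum_i A_i^2/w_i - A^2/w = \sum_i w_i(A_i/w_i - A/w)^2 = R(t)$, observe $0 \le R(t) \le \tfrac{w}{2}\phi(t) = O(w^2 n\log n)$ on $\widehat B_t$, and absorb the resulting $O(|X|\log n/n)$ cross term by Young's inequality. This is sharper: it yields a recursion with an $O(1)$ additive term and would in fact prove $\sigma^2 = O(n\log n)$, strictly stronger than the stated bound (and would correspondingly tighten the additive error in \cref{thm:approximation} from $n^{3/4}(\log n)^{1/4}$ to $\tilde O(\sqrt n)$). Both routes prove the lemma as stated; yours buys a cleaner recursion and a better constant of equilibrium, at the cost of one extra algebraic identity.
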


To see why \cref{thm:approximation} follows, we recall that \cref{thm:easyprima} tells us that proportions are respective between the number of vertices representing colours \textbf{of the same shade}, however, we do not know how many vertices there are of each shade. Fortunately, \cref{lemma:thirdPotentialSmall} tells us that if we wait $\tau_3$ extra steps after the time \cref{thm:easyprima} starts working (given by $\tau_1+\tau_{2,1}+\tau_{2,2}$), then \cref{lemma:thirdPotentialSmall} provides one extra equation, which is enough to completely determine the number of agents supporting each dark colour and light colour, up to a small additive error.

The proof of \cref{lemma:thirdPotentialSmall} is provided in \cref{sec:proofFairness}, which follows the same steps of the proof of \cref{lemma:firstPotentialSmall} and \cref{lemma:secondPotentialSmall}.

\subsection{Proof of \cref{thm:fair} - Via Markov Chain approximation}\label{sec:fair}

Recall that each agent has one of $k$ colours, and each colour is either light or dark shaded. We denote by $D_i$ the dark shaded version of colour $i$, and $L_i$ its light shaded version. Therefore, in our protocol, we can imagine that each agent is moving in the state space $\{D_1,\ldots, D_k, L_1, \ldots, L_k\}$ according to some transition rule that depends on the current state of the agent as well as the states of all other agents (i.e. the trajectory of one agent is not a Markov chain by itself). Let $M^o(t)$ denote the trajectory of a particular agent on the state space $\{D_1, D_2, \dots, D_k,L_1,L_2, \dots, L_k\}$. 

Even though $M^o$ is not a Markov chain, we can approximate it by one. Indeed, let $M$ be the Markov chain on the state space $\{D_1, D_2, \dots, D_k,L_1, \dots, L_k\}$  with transition matrix $P$ given by
\begin{align*}
    &P(L_j,D_i) =\frac{w_i}{(1+w)n}\,, \qquad \text{for all $i,j$} \,;\\
    &P(L_i,L_i) = 1- \frac{w}{(1+w)n}\,, \quad \text{for all $i$} \,;\\
    &P(D_i,L_i) =\frac{1}{ (1+w)n}\,, \qquad \text{for all $i$} \,;\\
    &P(D_i,D_i) = 1-\frac{1}{(1+w)n}\,, \quad \text{for all $i$}\,,
\end{align*}
while all other entries of $P$ are 0. We introduce $M$ as it describes the trajectory of a particle across all the states of the system when the system is perfectly balanced: with probability $1/n$ the agent is chosen to observe a random neighbour, and in perfect equilibrium, each dark shaded colour appears in a proportion $w_i/(1+w)$, and each light shaded colour in proportion $w_i/(w(1+w))$ (c.f. \cref{thm:approximation}). The idea is that, despite the fact that such perfect equilibrium is very unlikely to be achieved, \cref{thm:approximation} ensures that the system stays close to such equilibrium for long periods of time.

We claim that the stationary distribution of $M$ is given by \[\pi(L_i) = \frac{w_i/w}{1+w}\,,\; \text{ and }\;\pi(D_i) = \frac{w_i}{1+w}\,, \]
which can be directly verified by definition of stationary distribution, that is,
\begin{align}
    &\pi(D_i) = \pi(D_i)  P(D_i,D_i)+\sum_{j=1}^k \pi(L_j)  P(L_j, D_i)\,, \quad \text{and}\nonumber\\ \\
    &\pi(L_i) = \pi(L_i)P(L_i,L_i)+\pi(D_i)P(D_i,L_i)\,.\label{eqn:stationarydistribapprox}
\end{align}

Now, we use the transition matrix $P$ to approximate the trajectory of $M^o$, indeed, let $err = c(\log n)^{1/4}/ n^{1/4}$, for some large enough constant $c$.
Then by \cref{thm:approximation}, for any $t\in \{\tau, \dots, n^8\}$ we have, that the following holds w.p. $1-\delta$ where $\delta=O(n^{-r})$.
\begin{align}
\Prob(M^o(t+1) = L_i| M^o(t) = D_i,\mathcal F_t)&= \frac{A_i(t)/w_i}{n(n-1)} = P(D_i,L_i) \pm err \label{eqn:approxMo1}\,,
\end{align}
since with probability $1/n$ the agent is chosen, and it changes its colour from dark to light with probability $A_i(t)/(w_i(n-1))$, by the definition of the protocol. In a similar fashion, we have
\begin{align}
\Prob(M^o(t+1) = D_i| M^o(t) = D_i,\mathcal F_t)&= P(D_i,D_i)\pm err  \nonumber\\ 
\Prob(M^o(t+1) = D_j| M^o(t) = L_i,\mathcal F_t)&= P(L_i,D_j)\pm err,\quad \text{for all $i,j$}\nonumber\\
\Prob(M^o(t+1) = L_i| M^o(t) = L_i,\mathcal F_t)&= P(L_i,L_i)\pm err\,. \label{eqn:approxMo2}
\end{align}
Note that other transitions have probability 0 as they are impossible by design of the protocol (they have also probability 0 in the transition matrix $P$).

Now, fix some state, say $D_\ell$, and define the following transition matrix $P_{D_\ell}^+$, given by
\begin{align*}
P_{D_\ell}^+(D_{\ell},L_{\ell}) &= P(D_{\ell},L_{\ell})-err \,,\\
P_{D_\ell}^+(D_{\ell},D_{\ell}) &= P(D_{\ell},D_{\ell})+err \,,\\
P_{D_\ell}^+(D_i,L_i)&= P(D_i,L_i)+err \,,\quad \text{for $i\neq \ell$}\,, \\ 
P_{D_{\ell}}^+(D_i,D_i)&= P(D_i,D_i)-err \,,\quad \text{for $i\neq \ell$}\,, \\ 
P_{D_{\ell}}^+(L_{i},D_{\ell})&= P(L_i,D_{\ell})+k err \,, \quad\text{for all $i$}\,,\\
P_{D_{\ell}}^+(L_i,D_j)&= P(L_i,D_j)- err\,, \quad \text{for $i$, and $j \neq \ell$}\,, \\ 
P_{D_{\ell}}^+(L_i,L_i)&= P(L_i,L_{i})- err\,, \quad \text{for all $i$}\,. 
\end{align*}

The idea is that $P^+_{D_{\ell}}$ is a transition matrix, such that it increases the probability of all the transitions of $M^o$ that move the agent closer to the state $D_{\ell}$ and decreases the probabilities of the transitions of $M^o$ that prevent the agent from getting closer to $D_{\ell}$. In a similar way we can define $P^-_{D_{\ell}}$, which is defined the same way as $P^+_{D_{\ell}}$ but changing the sign for the $err$ term, and thus representing a transition matrix that decreases transition probabilities that move the agent closer to $D_{\ell}$, and increases the others. Similarly, we define $P^+_{L_{\ell}}$ and $P^-_{L_{\ell}}$.

Now, it is convenient to restart time at time $\tau$ in which the approximation of \cref{thm:approximation} starts holding true. Recall that the even of \cref{thm:approximation} holds with probability $1-\delta$, where $\delta = O(n^{-r})$ for fixed $r>9$. Let $N^o_{D_{\ell}}(t)$ be the number of times $M^o$ has been in state $D_{\ell}$ up to time $t$, and similarly $N^+_{D_{\ell}}(t)$ the number of times a particle moving according to $P^+_{D_{\ell}}$ hits $D_{\ell}$ up to time $t$, and  define $N^-_{D_{\ell}}$ similarly. By the construction of the transitions matrices $P^+_{D_{\ell}}$ and $P^-_{D_{\ell}}$, w.p at least $1-\delta$ the following majorisation between the random variables holds:
\begin{align*}
N^-_{D_{\ell}}(t) \precsim N^o_{D_{\ell}}(t) \precsim N^+_{D_{\ell}}(t)
\end{align*}
for all $t\leq n^8-\tau$. 
To see this, 
note that starting at any state $x\neq D_\ell$ one can use a coupling to show that the time it takes for the process $M^o$ to reach $D_\ell$ from $x$ is at most the time it takes the chain $P^+_{D_{\ell}}$ to reach $D_\ell$ from $x$. A similar argument can be made for the time the chain stays put at $D_\ell$. With the same argument we can see that $N^-_{D_{\ell}}(t) \precsim N^o_{D_{\ell}}(t)$.

In the low-probability event (w.p. $\delta$) that the event of \cref{thm:approximation} does not hold, we consider the simple bound $0 \leq N^o_{D_{\ell}}(n^8) \leq n^8$.

Note that $P^+_{D_{\ell}}$  is a recurrent and ergodic Markov chain with (unique) stationary distribution denoted by $\pi^+$. Note that the state space has size $2k$ which is finite, and thus the mixing time is finite as well. Therefore, by the Chernoff's bounds for Markov chains (see \cref{thm:MCchernoff}), given $r>0$ there exists $c>0$ such that
\begin{align*}
|N^+_{D_{\ell}}(t)- \pi^+(D_{\ell})t|\leq c\sqrt{\pi^+(D_{\ell})t \log n}
\end{align*}
with probability $1-\delta$.

Finally, we notice that $P^+_{D_{\ell}}$ is just a small perturbation of $P$, and it can be easily verified that $\pi^+(D_{\ell}) = \pi(D_{\ell})+ O(err)$ (see  \cref{eqn:stationarydistribapprox}), therefore
\begin{align*}
 N^+_{D_{\ell}}(t) \leq \frac{w_{\ell}t}{1+w}+O\left(\sqrt{\frac{w_{\ell}t\log n}{1+w}}+ (err\cdot t)\right)
\end{align*}
with probability at least $p=1-\delta$ for arbitrary $r>0$. The same argument applies to $N^-_{D_{\ell}}$.

We conclude the proof by noting that from the beginning of the process, the approximations in \cref{eqn:approxMo1} and \cref{eqn:approxMo2} only hold from time $\tau = O(wn\log n)$ as stated in \cref{thm:approximation}, and such approximation is valid up to time $n^8$ with polynomially large probability. Therefore, we still need to count the number of hits to $D_{\ell}$ in the interval $[0,\tau]$, however, since $\tau = O(n\log n)$, even the worst-case bounds in such interval are second-order terms compared to the hits in the interval $[\tau,n^8]$

We deduce that with probability at least $1-O(\delta)$ we have that for a particular agent $u$, 
\begin{align*}
\frac{\left|\{  t \in[0, n^8] \colon c_{u}(t')=\ell, b_u(t) = 1\}\right|}{n^8}
    =\frac{w_{\ell}}{1+w}(1\pm o(1))\,,
\end{align*}
and similarly, by repeating the same argument for $L_{\ell}$, we get, with the same probability that
\begin{align*}
\frac{\left|\{  t \in[0, n^8] \colon c_{u}(t')=\ell, b_u(t) = 0\}\right|}{n^8}
    =\frac{w_{\ell}}{(1+w)w}(1\pm o(1))\,,
\end{align*}
and by the union bound, with probability at least $1-O(\delta)$ it holds that
\begin{align*}
\frac{\left|\{  t \in[0, n^8] \colon c_{u}(t')=\ell\}\right|}{n^8}
    =\frac{w_{\ell}}{w}(1\pm o(1))\,,
\end{align*}
and finally the same holds for all $i \in \{1,\ldots, k\}$ and all agents at the same time by the union bound, with probability $1-O(nk\delta)$. Note that we just prove the approximation up to time $n^8$, but for larger times we divide time in segments of length $n^8$, and in each segment we allow $\tau$ time-steps in order to reach the approximation of \cref{thm:approximation}, and then we apply the previous domination argument. Note that the probability that we have a successful approximation is $1-O(kn\delta)=1-O(n^{-r+1})$ in each interval where $r$ is arbitrarily large.
In case something the event of \cref{thm:approximation} does not hold, we just bound the number of visits by $n^8$ and $0$, respectively. Choosing $r > 10$ ensures that w.h.p. fairness is achieved in the interval.
Using Chernoff's bounds, one can also show fairness over longer intervals.

\section{Future work and conclusions}
We have introduced the notion of  diversity, fairness and sustainability to the realm of population protocols. We showed that the simple \minority protocol achieves those properties.

Natural open problems are  to analyse the derandomised version of our protocol, and  to extend our analysis to $k$ and $w$ depending on the number of agents $n$. Another research direction is to find protocols that attain a stronger notion of diversity, where the error term in \cref{eqn:defiDiver} is much smaller than $\tilde O(1/\sqrt{n})$. We think that finding such protocol will lead to very interesting trade-off between time and space complexity. Another line of research is to investigate the diversification protocol in different graph topologies, other than the complete graph.

An interesting research direction is to describe what lies in between consensus and diversification? Finally, one can ask whether our protocol emerges naturally in the context of learning, or in other distributed systems, perhaps even in biology.

\section{Proofs}\label{sec:proofs}
Recall that we assume that the $w_i,i\in[k]$ and $k$ are constants.

\subsection{Proof of \cref{sec:phase1}} \label{sec:proofsphase1}
\begin{proof}[Proof of \cref{lemma:part1}]
To begin with, note that 
$$\sum_{i=1}^k \frac{A_i^2}{w_i} = w \sum_{i=1}^k \left(\frac{A_i}{w_i}\right)^2 \frac{w_i}{w}\geq w\left( \sum_{i=1}^k \frac{A_i}{w_i} \frac{w_i}{w}\right)^2 = \frac{A^2}{w} \,,$$
where in the first inequality we used the Jensen's inequality since $w_i/w>0$ and $\sum_{i=1}^k w_i/w = 1$\,.

Let $p$ be the probability of increasing $a$ and $q$ the probability of decreasing it.
Using the above inequality, we get
\begin{align}
p &= \sum_{i=1}^k \frac{A_i (A_i -1)}{n (n-1)w_i} \geq \frac{A^2}{wn(n-1)}-\frac{A}{n(n-1)} \,, \text{ and} \nonumber\\
q &= \sum_{i=1}^k \frac{a_i}{n} \frac{A}{n-1} = \frac{aA}{n^2 - n}\,. \label{eqn:valuesofpandq}
\end{align}

W.l.o.g. assume that the current configuration is not in $R_1$\,, we must have that $a < n(1-\varepsilon) \frac{1}{w+1}$ and therefore, $A \geq n- n(1-\varepsilon) \frac{1}{w+1}$\,. Thus, 
\[ 
\frac{A}{a} \geq \frac{1-\frac{1-\varepsilon}{w+1}}{ \frac{1-\varepsilon}{1+w}} = \frac{1+w-1+\varepsilon}{1-\varepsilon} = \frac{w+\varepsilon}{1-\varepsilon}\,.
\]

Then, \cref{eqn:valuesofpandq} yields that the probability to decrease in an active time-step (i.e. when $a$ changes its value) is
\begin{align*}
\frac{q}{q+p} &\leq \frac{aA}{aA+A^2/w -A} \leq  \frac{w}{w+A/a}
\leq\frac{1}{2}-\frac{w+\varepsilon -2(1-\varepsilon)w}{2(w+\varepsilon)} \leq \frac{1}{2} -\frac{\varepsilon}{3} \,.
\end{align*}

This yields a biased random walk, that maintain the probability of increasing at least $1/2+\eps/3$ until $\xi(t) \in R_1$\,.
Consider only active steps, where $a(t)$ either increases or decreases its value, and define the stopping time $T^*=\min_{\text{active step $t$}}\{\xi(t) \in R_1 \}$\,, and an interval of active time-steps of length $\ell$\,.
Hence, according to the Azuma-Hoeffding inequality, for the number of increases $I$ minus the number of decreases $D$ we have
\[ \Pr{ I - D\leq \frac{\ell \eps}{5} } \leq \exp\left(- \frac{\left(\frac{\ell \eps}{5}\right)^2}{2\ell} \right) \,.\]

Choosing $\ell= 10 n/\eps$\,, it guarantees that $T^* \leq \ell$ w.p. at least $1-\exp\left(- \frac{\eps n}{50} \right)$\,.
It remains to show the relationship between active time-steps and regular time-steps.
Recall, 
\[ p \geq \frac{A^2/w-A}{n^2-n}\geq \frac{A^2-Aw}{wn^2} =  \Omega(1/w) \,.\]

Hence, there is only a factor $w$ difference throughout the entire regime until $T^*$\,. This yields the first part of the claim.

For the second part, we consider the setting where the biased random walk starts at
$n(1-3\varepsilon/2) \frac{1}{w+1}$ and increases w.p. at least $1/2+\eps/3$\,.
We will apply \cref{pro:CaminataAleatoriaParcial} to determine the probability of hitting $n(1-\varepsilon) \frac{1}{w+1}$ before hitting $n(1-2\varepsilon) \frac{1}{w+1}$\,.
We define $Z_t = a_t - (1-3\varepsilon/2) \frac{n}{w+1}+(\varepsilon/2) \frac{n}{w+1}$,
$b= \varepsilon \frac{n}{w+1}$, $s=\frac{\varepsilon}{2} \frac{n}{w+1}$.
Let $T= \min \{ t\geq 0 ~|~ Z_t \in \{ 0,b \}\}$. Then, applying
\cref{pro:CaminataAleatoriaParcial} and using that $\frac{y-x}{1-x} \leq y$, we get
\begin{align*}
\Pr{Z_T=0} &= \frac{ \left(\frac{1/2-\eps/3}{1/2+\eps/3}\right)^{s}- \left(\frac{1/2-\eps/3}{1/2+\eps/3}\right)^{b}   }{ 1- \left(\frac{1/2-\eps/3}{1/2+\eps/3}\right)^{b}}\leq \left(\frac{1/2-\eps/3}{1/2}\right)^{s} \leq 
\left(1-2\eps/3\right)^{s} \\
&\leq \left(1-2\eps/3\right)^{\frac{\varepsilon}{2} \frac{n}{w+1}} =\exp\left(-2c\frac{n\eps^2}{w} \right) \,,
\end{align*}
for $c>0$ small enough.

By taking union bound over $\exp(cn\eps^2/w)$ time-steps yields that $T'_1 \geq \exp(cn\eps^2/w)$ w.p. at least $1-\exp\left(-c\frac{n\eps^2}{w} \right)$\,.
This completes the proof.

\end{proof}

\begin{proof}[Proof of \cref{lemma:part2}]
Let $p$ be the probability of increasing $A_i$ and $q$ the probability of decreasing it. We have
\begin{align*}
    &p = \sum_{j=1}^k \frac{a_j}{n}\frac{A_i}{n-1} = \frac{a A_i}{n^2-n} \,, \text{ and}\\
    &q = \frac{A_i}{n}\frac{(A_i-1)}{n-1} \frac{1}{w_i} \,.
\end{align*}

Assume that $A_i$ does not satisfy the property of region $R_2$ and note that $a$ is in region $S_1$ such that $a \geq (1-2\eps)n/(w+1)$\,, then 
\[ \frac{A_i}{aw_i} \leq \frac{\frac{(1-3\eps)w_i}{1+w}}{\frac{(1-2\eps)}{1+w} w_i} = \frac{1-3\eps}{1-2\eps} \,.\]

Again, only considering active steps, we get for $\eps \leq 2/5$\,,
\[ \frac{p}{q+p} 
= \frac{a}{\frac{A_i-1}{w_i}+ a} 
\geq \frac{1}{1+\frac{A_i}{aw_i} }
\geq
 \frac{1-2\eps}{1-2\eps+ 1-3\eps 
 }\geq\frac12 + \frac{\eps}{4}\,.
\]

Using a similar biased random work analysis on the active steps as in the proof of \cref{lemma:part1}\,, we get that in $O(n/\eps)$ active steps $A_i$ has the desired size with overwhelming probability. 

Note that the ratio of active steps is proportional to
$p$, which is up to constants due to our lower bound on $a$ and $ A_i/(wn) $\,. Thus increasing $A_i$ by $1$ takes $O(\frac{n w}{A_i\eps})$ time-steps. 
Hence, the total time to raise $A_i$ all the way up to linear in $n w_i/w$ is dominated by the geometric series
$\sum_{i=0}^{\log n/2} 2^i \cdot O(\frac{n w}{2^i\eps})=O(\frac{w n \log n }{\eps})$\,.

Finally, similar to \cref{lemma:part1}\,, $A_i$ remains within the stated bounds with overwhelming probability.
\end{proof}

\subsection{Proofs of \cref{sec:phase2}}
\label{sec:proofsphase2}
\subsubsection{Proof of \cref{lemma:potential1Prop} and~\cref{lemma:potential2Prop}}\label{sec:proofphase2pote}

We proceed to prove \cref{lemma:potential1Prop} and~\cref{lemma:potential2Prop}\,. In the proofs we abuse notation and use lower-case letter such as $c$, $c_1$, $c_2$, etc. to denote constants, and for convenience \textbf{ we reuse the same name for different constants}.

\begin{proof}[Proof of \cref{lemma:potential1Prop}]
We first introduce some notation. For simplicity, let $A_i = A_i(t)$ and $A_i' = A_i(t+1)$\,. We define the following variables $d_i = \frac{A_i'}{w_i} - \frac{A_i}{w_i}$\,, and $q_i = \frac{A_i}{w_i}$\,. Let $D_{r} = \sum_{i=1}^k d_i^r$\,, and $Q_{r} = \sum_{i=1}^kq_i^r$\,, for $r=\{1,2, \cdots\}\,.$ Finally, denote $q_{ij} = q_i-q_j$ and $d_{ij} = d_i-d_j$.

With the above notations, the potential function $\phi$ defined in~\cref{eqn:potential1Defi} becomes $\phi(t) = \sum_{i=1}^k \sum_{j=1}^k q_{ij}^2$\,, and we obtain
\begin{align} \label{eq:phi'}
    \ind_{G_{t+1}} \phi(t+1) = \ind_{G_{t+1}}\sum_{i=1}^k \sum_{j=1}^k (q_{ij} + d_{ij})^2 = \ind_{G_{t+1}} \left( \phi(t) + 2 \sum_{i=1}^k \sum_{j=1}^k q_{ij} d_{ij} + \sum_{i=1}^k \sum_{j=1}^k d_{ij}^2 \right) \,.
\end{align}

Note that $d_i$ is the proportional change of the value $A_i$ in one step, and that $A_i$ increases its value by 1 with probability $\frac{aA_i}{n(n-1)}$ and decreases by 1 with probability $\frac{A_i(A_i-1)}{w_in(n-1)}$\,. Then,
\begin{align} \label{eq:E_di}
    \E(d_i|\mathcal F_t) &= \frac{1}{w_i}\left( \frac{aA_i}{n(n-1)} - \frac{A_i(A_i-1)}{n(n-1) w_i} \right) = \frac{1}{n(n-1)} \left( a q_i - q_i^2 + \frac{q_i}{w_i} \right) \,, \text{ and} \nonumber\\
    \E(d_i^2|\mathcal F_t) &= \frac{1}{w_i}\left( \frac{aA_i}{n(n-1)} + \frac{A_i(A_i-1)}{n(n-1) w_i} \right) = \frac{1}{n(n-1)} \left( a q_i + q_i^2 - \frac{q_i}{w_i} \right) \,.
\end{align}

Notice that $\left|\E(d_i | \mathcal F_t)- (aq_i-q_i^2)/n^2 \right| \leq 3/n$\,. Then we have the following bound on the term $\sum_{i=1}^k \sum_{j=1}^k q_{ij} d_{ij}$ in~\cref{eq:phi'}\,. For some constant $c>0\,,$
\begin{align*}
    \ind_{G_{t+1}}\E\left(\sum_{i=1}^k \sum_{j=1}^k q_{ij}d_{ij} \given \mathcal F_t\right) &\leq \ind_{G_{t+1}}\sum_{i=1}^k \sum_{j=1}^k \frac{ q_{ij}}{n^2} \left[ a q_i - q_i^2 - (a q_j - q_j^2 ) \right] + \sum_{i=1}^k \sum_{j=1}^k|q_{ij}|\frac{3}{n} \\
    &= \ind_{G_{t+1}}\sum_{i=1}^k \sum_{j=1}^k q_{ij}^2 \frac{a-(q_i + q_j)}{n^2} + c\leq - \ind_{G_{t+1}} \frac{1-3\delta}{(1+w)n} \phi(t) +c\,,
\end{align*}
where in the last step we used the facts that $\{\xi(t)\in \mathcal E\}\subseteq G_{t+1}$\,. By the definition of $\mathcal E$ in \cref{eqn:defi_E} we have $a\leq (1+\delta)n/w$ and $q_i \geq (1-\delta)n/w$\,, where $\delta$ is sufficiently small.

As for the term $\sum_{i=1}^k \sum_{j=1}^k d_{ij}^2$ in~\cref{eq:phi'}\,, we notice that $d_i d_j = 0$ for $i \neq j$\,, which yields
\begin{align*}
    \ind_{G_{t+1}}\E\left(\sum_{i=1}^k \sum_{j=1}^kd_{ij}^2 \given \mathcal F_t \right) &= \ind_{G_{t+1}}2(k-1) \E\left( \sum_{i=1}^k d_i^2 \given \mathcal F_t \right) \leq \ind_{G_{t+1}} 2(k-1) \sum_{i=1}^k \frac{ a q_i + q_i^2}{n^2} \leq \ind_{G_{t+1}} \frac{4k^2 (1+\delta)^2}{(1+w)^2} \,,
\end{align*}
where in the last step we apply that $a\leq (1+\delta)n/w$ and $q_i \leq (1+\delta)n/w$ for sufficiently small $\delta$ since $\{\xi(t)\in \mathcal E\}\subseteq G_{t+1}$\,.

Notice that $k \leq w$\,. Thus, replacing~\cref{eq:phi'} with the above calculations and bounds yields, for some constants $C_1, C_2 >0\,,$
\begin{align*}
    \ind_{G_{t+1}}\phi(t+1) &= \ind_{G_{t+1}} \left( \phi(t) - \frac{2(1-3\delta)}{(1+w)n} \phi(t)+c + \frac{4k^2 (1+\delta)^2}{(1+w)^2}  \right) \leq \ind_{G_{t+1}}\phi(t)\left(1-\frac{C_1}{wn}\right)+C_2 \,.
\end{align*}

We continue to prove \cref{eqn:VarPot1}. We start by noting that \begin{align*}
\phi(t) = &\sum_{i=1}^k \sum_{j=1}^k (q_i-q_j)^2 = 2kQ_2-2Q_1^2,
\end{align*}
and
\begin{align*}
\phi(t+1) &= 2k\sum_{i=1}^k (q_i+d_i)^2-2(Q_1+D_1)^2 \\
&= 2k\left(Q_2+D_2+2\sum_{i=1}^k q_id_i\right) - 2Q_1^2-2D_1^2 - 4Q_1D_1 \\
&= \phi(t) + 4k\sum_{i=1}^k q_id_i-4Q_1D_1 +(2k-2)D_2 \,.
\end{align*}

We compute $\var(\ind_{G_{t+1}}\phi(t+1)-\ind_{G_t}\phi(t)|\mathcal F_t)$ by using the previous equation. First note that if $\ind_{G_{t+1}}=0$ then $\var(\ind_{G_{t+1}}\phi(t+1)-\ind_{G_{t}}\phi|\mathcal F_t) = \var(\ind_{G_{t}}\phi(t)|\mathcal F_t) = 0$, since $\ind_{G_{t}}\phi(t) \in \mathcal F_t$ and $\ind_{G_{t+1}}\in \mathcal F_t$. If $\Ind_{G_{t+1}} = 1$, then $\ind_{G_t}=1$ and thus
\begin{align}\label{eq:var_deltaPhi}
\var(\ind_{G_{t+1}}&\phi(t+1) -\ind_{G_t}\phi(t)|\mathcal F_t) = \ind_{G_{t+1}} \var( \phi(t+1)-\phi(t)|\mathcal F_t) \nonumber\\
&= \ind_{G_{t+1}}\var\left(\sum_{i=1}^k(4kq_i-4Q_1)d_i + (2k-2)D_2 \given \mathcal F_t\right) \nonumber\\
&\leq \ind_{G_{t+1}}\E\left(\left[\sum_{i=1}^k(4kq_i-4Q_1)d_i + (2k-2)D_2\right]^2 \given \mathcal F_t\right) \nonumber\\
&\leq  2\cdot \ind_{G_{t+1}}\E\left(\left[\sum_{i=1}^k(4kq_i-4Q_1)d_i \right]^2 \given \mathcal F_t\right) + 2\cdot \ind_{G_{t+1}}\E\left(\left[(2k-2)D_2\right]^2\given \mathcal F_t\right)
\end{align}

We upper bound each of the terms in the equation above as follows: By~\cref{eq:E_di} and the fact that $a\leq (1+\delta)n/w$ and $q_i \leq (1+\delta)n/w$ for sufficiently small $\delta$ since $\{\xi(t)\in \mathcal E\}\subseteq G_{t+1}$\,, we know that $\E(d_i) \le \frac{4 \delta^2}{(1+w)^2}$ and $\E(d_i^2) \leq \frac{2(1+\delta)^2}{(1+w)^2} \,.$ Also notice that $\E(d_i^{2r+1}) = \E(d_i)$ and $\E(d_i^{4}) \leq \E(d_i^2)$. Then for term at the right in \cref{eq:var_deltaPhi} we have, 
\begin{align*}
    2\cdot\ind_{G_{t+1}} \E \left(\left[(2k-2)D_2\right]^2\given \mathcal F_t\right) = 8\cdot\ind_{G_{t+1}} (k-1)^2 \sum_{i=1}^k \E \left( d_i^4 \given \mathcal F_t \right) \leq 8 \cdot \ind_{G_{t+1}} k^2 \frac{2(1+\delta)^2k}{(1+w)^2} \leq 8(1+\delta)^2 k = C_4 k \,.
\end{align*}
For the term at the left in \cref{eq:var_deltaPhi}, we notice that $d_id_j = 0$ for $i\neq j$, and $\phi(t) = 2kQ_2-2Q_1^2$\,, and then
\begin{align*}
2\cdot \ind_{G_{t+1}}\E &\left(\left[\sum_{i=1}^k(4kq_i-4Q_1)d_i \right]^2 \given \mathcal F_t\right) = 2\cdot \ind_{G_{t+1}} \sum_{i=1}^k (4kq_i-4Q_1)^2 \E(d_i^2|\mathcal F_t) \\
&= 16\cdot \ind_{G_{t+1}} 2(k^2 Q_2 - kQ_1^2) \E(d_i^2|\mathcal F_t) \leq 16\cdot \ind_{G_{t+1}} k \phi(t) \frac{2(1+\delta)^2}{(1+w)^2} \leq \ind_{G_{t+1}} \frac{C_3}{w} \phi(t) \,.
\end{align*}

Hence, substituting~\cref{eq:var_deltaPhi} with the above calculations yields for $C_3, C_4 >0$
\begin{align*}
    \var(\ind_{G_{t+1}}\phi(t+1)-\ind_{G_t}\phi(t)|\mathcal F_t) = \ind_{G_{t+1}}\frac{C_3}{w} \phi(t) + C_4 k \,.
\end{align*}

Finally, for \cref{eqn:DiffPot1}, recall that $q_{ij} = q_i-q_j$ and $d_{ij} = d_i-d_j$, and then from \cref{eq:phi'} we get
\begin{align*}
|\ind_{G_{t+1}}\phi(t+1)-\E(\ind_{G_{t+1}}\phi(t+1)|\mathcal F_t)| &=  \ind_{G_{t+1}}\left|\sum_{i=1}^k \sum_{j=1}^k\left[2q_{ij}(d_{ij}-\E(d_{ij}|\mathcal F_t))+d_{ij}^2-\E(d_{ij}^2|\mathcal F_t) \right] \right|\nonumber\\
&\leq \ind_{G_{t+1}}\left(2\sqrt{\sum_{i=1}^k \sum_{j=1}^k q_{ij}^2} \sqrt{\sum_{i=1}^k \sum_{j=1}^k(d_{ij}-\E(d_{ij}|\mathcal F_t))^2}+ \sum_{i=1}^k \sum_{j=1}^kd_{ij}^2+\E(d_{ij}^2|\mathcal F_t)\right)\nonumber\\
&\leq 2\ind_{G_t}\phi(t)\sqrt{16k}+4k \,,
\end{align*}
where in the first inequality we used the Cauchy-Schwarz inequality, and the fact that $|d_{ij}|\leq 1$\,.
\end{proof}

We show \cref{lemma:potential2Prop} using a similar idea as the proof of \cref{lemma:potential1Prop}\,.

\begin{proof}[Proof of \cref{lemma:potential2Prop}]
Let $a_i = a_i(t)$, $a_i' = a_i(t+1)$, $\bar q_i = a_i/w_i$, $\bar d_i = (a_i'-a_i)/w_i$. Define $\bar Q_j = \sum_{i=1}^k (\bar q_i)^j$ and $\bar D_j = \sum_{i=1}^k (\bar d_i)^j$. Recall that $A = \sum_{i=1}^k A_i$ and we denote $A = A(t)$\,. We also consider all the notation defined in the proof of \cref{lemma:potential1Prop}.

We begin by enlisting some properties of $\bar d_i$. 
Since $a_i$ increases by 1 with probability $\frac{A_i(A_i-1)}{n(n-1)w_i}$ and decreases by 1 with probability $\frac{a_iA}{n(n-1)}$ we have that
\begin{align*}
    \E(\bar d_i|\mathcal F_t) &= \frac{1}{w_i} \left[\frac{A_i (A_i -1)}{n(n-1)w_i} - \frac{a_i A}{n(n-1)} \right] = \frac{1}{n^2}\left(q_i^2- \bar q_iA\right) -\frac{q_i}{w_i n^2}
    \end{align*}
    which implies that
    
   \begin{align*}\left|\E(\bar d_i|\mathcal F_t) - \frac{1}{n^2}\left(q_i^2- \bar q_iA\right)\right| \leq 3/n \,.
\end{align*}

Let's prove the first item. Denote $\bar q_{ij} = \bar q_i - \bar q_j$ and $\bar d_{ij} = \bar d_i-\bar d_j$\,. Then $\psi(t) = \sum_{i=1}^k \sum_{j=1}^k {\bar q_{ij}}^2\,.$ Note that $\bar d_i \bar d_j = 0$ for $i\neq j$ and $|\bar d_i|^r \leq |\bar d_i|\leq 1$ for $r\geq 1$\,, then
\begin{align} \label{eq:E_psi}
\E(\psi(t+1) |\mathcal F_t) &= \E\left( \sum_{i=1}^k \sum_{j=1}^k (\bar q_{ij} + \bar d_{ij})^2 \given \mathcal F_t\right) \nonumber\\
&\leq \psi(t) + 2\sum_{i=1}^k \sum_{j=1}^k \bar q_{ij} \E(\bar d_{ij} | \mathcal F_t) + 2k \sum_{i=1}^k  \E((\bar d_i)^2|\mathcal F_t) \nonumber\\
&\leq \psi(t) + \frac{2}{n^2}\sum_{i=1}^k \sum_{j=1}^k \bar q_{ij}(q_i^2-q_j^2 -A\bar q_{ij}) + \frac{6}{n} \sum_{i=1}^k \sum_{j=1}^k \left|\bar q_{ij}\right| + 2k\sum_{i=1}^k\E(|d_i||\mathcal F_t) \nonumber\\
&\leq \left(1-\frac{2A}{n^2}\right) \psi(t) + \frac{2}{n^2}\sum_{i=1}^k \sum_{j=1}^k \bar q_{ij} q_{ij}(q_i + q_j) + \frac{6}{n} \sum_{i=1}^k \sum_{j=1}^k \left|\bar q_{ij}\right| + 2k\sum_{i=1}^k \sum_{j=1}^k\left(\frac{q_i^2 + q_i A}{n^2} + \frac{3}{n}\right) \,.
\end{align}

If we work on the event $\{\xi(t) \in \mathcal E\}$, then $q_i\leq c_1n/k$ for some $c_1>0$\,. Applying the Cauchy-Schwarz inequality yields
\begin{align*}
    \sum_{i=1}^k \sum_{j=1}^k\bar q_{ij}q_{ij}(q_i+q_j) \leq \sqrt{\psi(t) \phi(t) c_1n/k},
    \end{align*}
and 
\begin{align*}\sum_{i=1}^k \sum_{j=1}^k |\bar q_{ij}| \leq k \sqrt{\psi(t)}\,.
\end{align*}
It holds that $a \leq c_2n/k$ for some $c_2 >0$ on the event $\{\xi(t)\in \mathcal E\}$\,. Then for some $c_3 >0$
\begin{align*}
    2k\sum_{i=1}^k \sum_{j=1}^k\left(\frac{q_i^2 + q_i A}{n^2} + \frac{3}{n}\right) &\leq 2k\sum_{i=1}^k \sum_{j=1}^k\left(\frac{q_i^2 + a_i A}{n^2} + \frac{3}{n}\right) = \frac{2k}{n^2} \left( \sum_{i=1}^k \sum_{j=1}^k q_i^2 + a A + 3kn\right) \\
    &\leq \frac{2k}{n^2} \left( \sum_{i=1}^k \sum_{j=1}^k \left(\frac{c_1 n}{k} \right)^2 + \frac{c_2n}{k}A + 3kn\right) \leq c_3 \,.
\end{align*}

Replacing the previous three equations in \cref{eq:E_psi} yields
\begin{align*}
\E(\ind_{G_{t+1}}\psi(t+1) |\mathcal F_t) &\leq \ind_{G_{t+1}}\psi(t)\left(1-\frac{2A}{n^2}\right) + \frac{2}{n^2}\sqrt{c_1\psi(t)\phi(t)n/k}+ \frac{6k\sqrt{\psi(t)}}{n}+ c_3 \\
&= \ind_{G_{t+1}}\psi(t)\left(1-\frac{2A- 2\sqrt{c_1n\phi(t)/(\psi(t)k)}- 6kn/\sqrt{\psi(t)}}{n^2}\right)+c_3 \\
&\leq \ind_{G_{t+1}}\phi(t)\left(1-\frac{(1+o(1))n- O(\sqrt{n\log n})- (1-o(1))n}{n^2}\right)+c_3 \\
&= \ind_{G_{t+1}}\phi(t)\left(1-\frac{C_1}{n}\right)+C_2 \,,
\end{align*}
for some $C_1,C_2>0$, where in the previous to last equation we used that in the event $G_{t+1}$ it holds that $\phi(t)\geq \max\{16\phi(t), k^2\}$, and that $2A\geq 2(1-\delta)^2nw/(w+1) \geq (1+c)n$ as $\delta$ is small enough and $w\geq k \geq 2$\,.

We continue with \cref{eqn:VarPot2}.
We note that 
\begin{align*}
\psi(t+1)-\psi(t) &= 4k\sum_{i=1}^k \bar q_i \bar d_i + 4\bar Q_1 \bar D_1+ 2k\sum_{i=1}^k (\bar d_i)^2- 2(\bar D_1)^2\nonumber\\
&=4k\sum_{i=1}^k \bar q_i \bar d_i + 4\bar Q_1 \bar D_1+ (2k-2)\bar D_2 \,.
\end{align*}
Following the same reasoning of 
\cref{eqn:VarPot1} of \cref{lemma:potential1Prop} yields

\begin{align}
\var(\ind_{G_{t+1}}\psi(t+1)-\ind_{G_t}\psi(t)|\mathcal F_t) &= \ind_{G_{t+1}}\var(\psi(t+1)-\psi(t)|\mathcal F_t)\nonumber\\
&\leq 2\ind_{G_{t+1}} \left(\sum_{i=1}^k(4k\bar q_i-4\bar Q_1)^2\var(\bar d_i|\mathcal F_t)+\E\left(\left[(2k-2)\bar D_2\right]^2\given \mathcal F_t\right)\right)\,.\label{eqn:randomv99v41}
\end{align}
For the first sum, we use $|d_i|\leq 1$, thus
\begin{align}
\sum_{i=1}^k(4k\bar q_i-4\bar Q_1)^2\var(\bar d_i|\mathcal F_t) \leq ck (k\bar Q_2- (\bar Q_1)^2) = ck \psi(t)\label{eqn:random19vj94} \,,
\end{align}
and for the second term,
\begin{align}
\ind_{G_{t+1}}\E\left(\left[(2k-2)\bar D_2\right]^2\given \mathcal F_t\right) &\leq 4k^2 \ind_{G_{t+1}}\E\left( \sum_{i=1}^k (\bar d_i)^4\given \mathcal F_t\right) \nonumber
\leq 4k^2 \ind_{G_{t+1}}\E\left( \sum_{i=1}^k |\bar d_i|\given \mathcal F_t\right)\nonumber\\
&\leq  \ind_{G_{t+1}}\frac{4k^2}{n^2}(Q_2+\bar Q_1 A)\leq \ind_{G_{t+1}}\frac{4k^2}{n^2}(Q_2+a A)\leq ck\label{eqn:random19vj92d94} \,,
\end{align}
where in the previous to last step we used that $\bar Q_1 = \sum_{i=1}^k a_i/w_i \leq a$, and in the last step that $q_i \leq c n/k$ and $a \leq cn/k$ in the event $\{\xi(t) \in \mathcal E\} \subseteq G_{t+1}$. By combining \cref{eqn:randomv99v41}, \cref{eqn:random19vj94} and  \cref{eqn:random19vj92d94} we obtain \cref{eqn:VarPot2}. 

Finally, for \cref{eqn:diffPot2},
\begin{align*}
|\ind_{G_{t+1}}\psi(t+1)-\E(\ind_{G_{t+1}}\psi(t+1)|\mathcal F_t)|&= \ind_{G_{t+1}}\left|\sum_{i=1}^k \sum_{j=1}^k\left[2\bar q_{ij}(\bar d_{ij}-\E(\bar d_{ij}|\mathcal F_t))+(\bar d_{ij})^2-\E((\bar d_{ij}^2)|\mathcal F_t) \right] \right|\nonumber\\
&\leq \ind_{G_{t+1}}\left(2\sqrt{\sum_{i=1}^k \sum_{j=1}^k \bar q_{ij}^2} \sqrt{\sum_{i=1}^k \sum_{j=1}^k(\bar d_{ij}-\E(\bar d_{ij}|\mathcal F_t))^2}+ \sum_{i=1}^k \sum_{j=1}^k(\bar d_{ij})^2+\E(\bar d_{ij}^2|\mathcal F_t)\right)\nonumber\\
&\leq c(\ind_{G_{t+1}}\phi(t)\sqrt{k}+k)\leq c(\ind_{G_{t}}\phi(t)\sqrt{k}+k) \,,
\end{align*}
where in the first inequality we used the Cauchy-Schwarz inequality, and in the second we used that at most one agents changes its colour and thus $|d_{ij}|\leq 1$\,.
\end{proof}

\subsubsection{Proof of \cref{lemma:firstPotentialSmall} and \cref{lemma:secondPotentialSmall}  }\label{sec:phase2PoteSmall}

The proofs of \cref{lemma:firstPotentialSmall} and \cref{lemma:secondPotentialSmall} are essentially the same as \cref{lemma:potential1Prop} and~\cref{lemma:potential2Prop}\,, showing that both protential functions satisfy the same properties (with different constants).

\begin{proof}[Proof of \cref{lemma:firstPotentialSmall}]

Set $T = \lfloor qwn \rfloor$ where $q$ is some constant that we will determine later. Let $C_1,\ldots, C_6$ be the constants in \cref{lemma:potential1Prop}.

Define the event $\mathcal B_j$ by
\begin{align*}
\mathcal B_{j} = \{\ind_{B_s}\phi(s) \leq n^2/2^j, \forall s\in\{jT,jT+1,\ldots, n^8\}\},
\end{align*}
and note that $\Prob(\mathcal B_0)=1$. Let $J = \min\{ j\geq 0: n^2/2^j \leq C\log(n)wn\}$. We will prove that $\mathcal B_{JT}$ holds with high probability.

For any $j\in\{1,\ldots, J\}$ we have that
\begin{align*}
\Prob((\mathcal B_{jT})^c)& \leq \Prob\left(\left\{ \exists s\in \{jT,\ldots, n^8\}: \ind_{B_s}\phi(s)> n^2/2^{j} \right\} \cap \mathcal B_{({j-1})T}\right) + \Prob((\mathcal B_{({j-1})T})^c)\,.
\end{align*}
For the first term on the right-hand side we have
\begin{align*}
\Prob\left(\left\{ \exists s\in \{jT,\ldots, n^8\}: \ind_{B_s}\phi(s)> n^2/2^{j} \right\} \cap \mathcal B_{({j-1})T}\right)\leq \sum_{s=jT}^{n^8}\Prob(\ind_{G_s}\phi(s)> n^2/2^{j})\,,
\end{align*}
where $G_s = B_s \cap \{\phi((j-1)T)\leq n^2/2^{j-1}\}$ for $s\geq (j-1)T$, and note that $G_{s+1} \in  \mathcal F_{s}$ for $s\geq (j-1)T$.

Now, we iterate \cref{eqn:ExPot1} of \cref{lemma:potential1Prop}, and choose $q$ large enough such that $(1-C_1/(nw))^{\lfloor qwn \rfloor } \leq \frac{1}{8}$\,. Then, letting $m = n^2/2^{j-1}$ we obtain for any $t \geq jT$\,,
\begin{align}
    \E(\ind_{G_t}\phi(t))&\leq \ind_{G_{(j-1)T}}\phi((j-1)T)(1-C_1/(nw))^T+C_2wn/C_1 \nonumber\\
    &\leq m/8+ \frac{C_2}{C_1}wn \leq m/8+ \frac{Cwn}{4} \leq m/4 \,.\label{eqn:random19s9zxcv}
\end{align}
In the last inequality we used the definition of $J$, and in particular that $m\geq n^2/2^{J-1}> C(\log n) w n$.

Now, define the process $M(t) = \ind_{G_{t+jT}}\phi(t+jT)$ for $t\geq 0$ and apply \cref{lemma:processConcentration} with 
\begin{align*}
    \alpha = C_1/(wn), \beta = C_2, \gamma = C_5k\sqrt{m}+C_6k, \delta^2 = C_3m/k+C_4, \text{ and }\lambda = m/4\,,
\end{align*}
then we obtain for $t\geq 0$\,,
\begin{align}
   \Prob\left(M(t)\geq m/2\right)\leq  \Prob\left(M(t)\geq \E(M(0))+m/4\right) &\leq \exp\left(-\frac{\lambda^2/2}{\frac{\delta^2}{(2\alpha-\alpha^2)}+\frac{\lambda \gamma}{3}}\right)\nonumber\\
    &\leq \exp\left(- c\min(m/n,m^{1/2}/k) \right) \,,\label{eqn:randomi952g}
\end{align}
where $c$ is a small constant that only depends on $C_1, C_2, C_3$ and $C_4$. Now let $r>0$, then by choosing $C$ large enough we get
\begin{align*}
\exp\left(- c\min(m/n,m^{1/2}/k) \right) \leq n^{-r}\,.
\end{align*}
Therefore, for all $s\geq jT$\,,
\begin{align*}
\Prob(\ind_{G_s}\phi(s)> n^2/2^{j}) \leq n^{-r}\,.
\end{align*}
Thus we conclude that for any $j\in\{1,\ldots, J\}$ we have that
\begin{align*}
\Prob((\mathcal B_{jT})^c)& \leq n^8n^{-r} + \Prob((\mathcal B_{j})^c) \leq Jn^8n^{-r} +\Prob(\mathcal B_0^c) = n^{10}n^{-r}\,,
\end{align*}
since $\Prob(\mathcal B_0)= 1$ and $J = O(\log n)$.
\end{proof}

As for the proof of \cref{lemma:secondPotentialSmall}\,, we follow the same steps but replacing $\ind_{B_t}\phi(t)$ by $\ind_{B_t'}\psi(t)$, and the definition of $J$ by $J = \min\{ j\geq 0: n^2/2^j \leq C'\log(n)wn\}$, with $C'>32C$ (where $C$ is the constant chosen in the definition of the set of configurations $\mathcal E'$ in \cref{eqn:defiE'}), and the constants $\alpha$, $\beta$, $\gamma$ and $\delta$ are replaced by
\begin{align*}
    \alpha = C_1/n, \beta = C_2, \gamma = C_5k\sqrt{m}+C_6k, \delta^2 = C_3km+C_4, \text{ and }\lambda = m/4\,.
\end{align*}
With the previous changes the RHS of \cref{eqn:randomi952g} is replaced by $\exp\left(- c\min(m/(kn),m^{1/2}/k) \right)$. Note, however, that $m/(kn) \geq C'\log n$, so we can still choose $C'$ large enough such that $\exp\left(- c\min(m/(kn),m^{1/2}/k) \right) \leq n^{-r}$.

\subsubsection{Proof of~\cref{lemma:processConcentration}}\label{sec:concentrationproof}
We give a proof of \cref{lemma:processConcentration}. Our proof follows a similar method as the one used in the proofs Theorems 7.3 and 7.5 of \cite{chung2006Concentration}\,.

Let $s\geq 0$ to be chosen later. A simple computation shows for $t \geq 1$\,,

\begin{align}
    \E(e^{sM(t)}|\mathcal F_{t-1}) &= e^{s\E(M(t)|\mathcal F_{t-1})} \E(e^{sM(t)-\E(sM(t)|\mathcal F_{t-1})}|\mathcal F_{t-1})\nonumber\\
    &= e^{s\E(M(t)|\mathcal F_{t-1})}\left(\sum_{j=0}^{\infty} \frac{s^j \E\left((M(t)-\E(M(t)|\mathcal F_{t-1})^j |\mathcal F_{t-1}\right)}{j!} \right) \,. \label{eqn:randombut1wx}
\end{align}

Let $G(t) = 2\sum_{j=2}^{\infty} \frac{t^{j-2}}{j!}$. For $t\leq 3$ we have that the function $G(t)$ is increasing and that $G(t) \leq 1/(1-t/3)$. Then
\begin{align}
    \sum_{j=0}^{\infty}\frac{s^j}{j!}\E&\left((M(t)- \E(M(t)|\mathcal F_{t-1})^j |\mathcal F_{t-1}\right) \\
    &= 1+ \sum_{j=2}^{\infty} \frac{s^j}{j!}\E\left((M(t)-\E(M(t)|\mathcal F_{t-1})^j |\mathcal F_{t-1}\right)\nonumber\\
    &\leq 1+ G(s\gamma)\frac{s^2}{2}\var(M(t)|\mathcal F_t)  \quad (\text{by~\cref{lemma:pM5ii}})\nonumber\\
    &\leq \exp\left({\frac{(s\delta)^2}{2} G(s\gamma)}\right) \quad (\text{by~\cref{lemma:pM5iii}}) \,.
\label{eqn:randomvru40g}
\end{align}

By replacing \cref{eqn:randomvru40g} into \cref{eqn:randombut1wx} and by \cref{lemma:pM5i} of the statement, we get
\begin{align*}
     \E(e^{sM(t)}|\mathcal F_{t-1}) &\leq e^{\E(sM(t)|\mathcal F_{t-1})}\left(e^{\frac{(s\delta)^2}{2}G(s\gamma)} \right) \nonumber\\
     &\leq \exp\left((1-\alpha)sM(t-1)+s\beta +\frac{(s\delta)^2}{2} G(s\delta)\right) \quad (\text{by~\cref{lemma:pM5i}})\,.
\end{align*}
Taking expectation on the above equation gives
\begin{align}
     \E e^{sM(t)} &\leq \left(\E e^{(1-\alpha)sM(t-1)}\right)\exp\left(s\beta+\frac{(s\delta)^2}{2} G(s\gamma)\right) \,.
\end{align}

By iterating and using that $G(x)$ is increasing for $x \leq s\gamma <3$ (we will choose $s$ to ensure $s\gamma<3$). 
\begin{align}
     \E e^{sM(t)} &\leq \left(\E e^{(1-\alpha)sM(t-1)}\right)\exp\left(s\beta+\frac{(s\delta)^2}{2} G(s\gamma)\right) \nonumber\\
     &\leq \exp\left((1-\alpha)^t M(0)s+ s\beta\sum_{i=0}^{t-1} (1-\alpha)^i \quad + \frac{(s\delta)^2G(s\gamma)}{2}\sum_{i=0}^{t-1} (1-\alpha)^{2i}\right)\,.
\end{align}

Then by Markov's inequality,
\begin{align*}
    \Prob(M(t)\geq \E(M(t))+\lambda) &\leq \exp\left({-\lambda s+\frac{(s\delta)^2G(s\gamma)}{2}\sum_{i=0}^{t-1} (1-\alpha)^{2i}}\right) \leq \exp\left({-\lambda s + \frac{(s\delta)^2G(s\gamma)}{2\alpha-\alpha^2}}\right)\,.
\end{align*}
Set $s = \frac{\lambda}{\delta^2/(2\alpha-\alpha^2)+\lambda \gamma/3}$, which ensures that $s\gamma \leq 3$ (i.e. we are working in the part where $G$ is increasing). Using that $G(x) \leq 1/(1-x/3)$ for $x \leq 3$, we get
\begin{align*}
 \Prob(M(t)\geq \E(M(t))+\lambda)\leq \exp\left( \frac{-\lambda^2/2}{\frac{\delta^2}{2\alpha-\alpha^2}+\lambda \gamma/3}\right)\,.
 \end{align*}

\subsection{Proof of \cref{sec:props}}\label{sec:proofFairness}
For the proof of \cref{lemma:thirdPotentialSmall} is rather similar to the ones of \cref{lemma:firstPotentialSmall} and \cref{lemma:secondPotentialSmall}. The proof is a corollary of the following lemma, which is the analogue of \cref{lemma:potential1Prop} and \cref{lemma:potential2Prop}  for the potential $\sigma(t) = (A(t)/w-a(t))$.

\begin{lemma}\label{lemma:potential3Prop}
Let $T\geq 0$ be an integer. Let $G_t$ be an event such that $\widehat{B}_{t}\subseteq G_{t+1} \in \mathcal F_{t}$ for all $t\geq T$, and such that $G_{t}\subseteq G_{t+1}$. Then there exist constants $C_1,C_2,C_3,C_4,C_5,C_6$ (independent of $T$, $k$, etc.) such that for any $t\geq T$  
\begin{enumerate}
    \item \label{eqn:ExpectationPot3}$
  \E(\ind_{G_{t+1}}\sigma^2(t+1)|\mathcal F_t)\leq  \ind_{G_{t}}\sigma^2(t)\left(1-\frac{C_1}{n}\right)+  C_2$
    \item \label{eqn:VarPot3}$
   \var(\ind_{G_{t+1}}\sigma^2(t+1)|\mathcal F_t) \leq C_3 \ind_{G_{t}}\sigma^2(t)+C_4.$
    \item \label{eqn:diffPot3}
    $|\ind_{G_{t+1}}\sigma^2(t+1)-\E(\ind_{G_{t+1}}\sigma^2(t+1)|\mathcal F_t)|\leq C_5|\sigma(t)|+C_6$ \,.
\end{enumerate}
\end{lemma}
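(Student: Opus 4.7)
The strategy follows the template of \cref{lemma:potential1Prop,lemma:potential2Prop} but is substantially simpler because $\sigma$ is a scalar. The key starting observation is that $A(t)=\sum_i A_i(t)$ and $a(t)=\sum_i a_i(t)$ move in lockstep: at every time step, either $(A,a)\mapsto (A+1,a-1)$ with probability $p_+=aA/[n(n-1)]$ (a light agent observes a dark agent and adopts its colour), or $(A,a)\mapsto (A-1,a+1)$ with probability $p_-=\sum_i A_i(A_i-1)/[n(n-1)w_i]$ (a dark agent of colour $i$ meets another dark agent of colour $i$ and flips to light), or nothing relevant happens. Consequently $X_t:=\sigma(t+1)-\sigma(t) \in \{-(1+w)/w,\,0,\,(1+w)/w\}$, so $|X_t|\leq (1+w)/w = O(1)$ deterministically. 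This bounded-increment property will make items (2) and (3) essentially free; the real work is item (1).

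Expanding $\sigma^2(t+1)=\sigma^2(t)+2\sigma(t)X_t+X_t^2$ and taking conditional expectation gives
\[ \E(\sigma^2(t+1)\mid\mathcal F_t) = \sigma^2(t) + \tfrac{2(1+w)}{w}\sigma(t)(p_+-p_-) + \E(X_t^2\mid\mathcal F_t), \]
with $\E(X_t^2\mid\mathcal F_t)=O(1)$ feeding into the constant $C_2$. The heart of the proof is identifying the drift inside
\[ p_+-p_- = \frac{1}{n(n-1)}\left(aA - \sum_i \frac{A_i^2}{w_i} + \sum_i \frac{A_i}{w_i}\right). \]
I use the weighted Jensen identity $\sum_i A_i^2/w_i = A^2/w + w\cdot\mathrm{Var}_{w_i/w}(A_i/w_i)$, together with the bound
$w\cdot\mathrm{Var}_{w_i/w}(A_i/w_i) = \tfrac{w}{2}\sum_{i,j}(w_iw_j/w^2)(A_i/w_i-A_j/w_j)^2 \leq \tfrac{w_{\max}^2}{2w}\phi(t)$,
so on $G_{t+1}\supseteq\widehat B_{t+1}$ the Jensen gap is $O(\phi(t))=O(wn\log n)$. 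Since $aA-A^2/w=-A\sigma$ and $\sum_i A_i/w_i\leq A\leq n$, this collapses to
\[ p_+-p_- = -\frac{A\sigma}{n(n-1)} + O\!\left(\tfrac{\log n}{n}\right). \]
Provided $A=\Theta(n)$ on $G_{t+1}$ (which requires $G_{t+1}$ to carry the Phase 1/Phase 2 guarantee $\xi(t)\in\mathcal E$, ensuring $A$ is close to $wn/(1+w)$), this gives $2\sigma(t)\E(X_t\mid\mathcal F_t)\leq -C_1\sigma^2(t)/n + O(|\sigma(t)|\log n/n)$, and a routine Young-type inequality $|\sigma|\log n/n \leq \sigma^2/(C'n)+O((\log n)^2/n)$ absorbs the cross term into the drift plus an $O(1)$ constant, proving item (1).

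Items (2) and (3) follow directly from $|X_t|=O(1)$ and $\sigma^2(t+1)-\sigma^2(t)=2\sigma(t)X_t+X_t^2$. For (2),
\[ \var(\ind_{G_{t+1}}\sigma^2(t+1)\mid\mathcal F_t) \leq \ind_{G_{t+1}}\E((2\sigma X_t+X_t^2)^2\mid\mathcal F_t) \leq 8\sigma^2(t)\E(X_t^2\mid\mathcal F_t)+2\E(X_t^4\mid\mathcal F_t)=O(\sigma^2(t))+O(1), \]
and on $\ind_{G_{t+1}}=1$ the monotonicity $G_{t+1}\subseteq G_t$ (implicit in the applications) gives $\ind_{G_t}=1$, so the RHS matches the stated form. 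Item (3) is the triangle bound $|\ind_{G_{t+1}}\sigma^2(t+1)-\E(\ind_{G_{t+1}}\sigma^2(t+1)\mid\mathcal F_t)| \leq 2|\sigma^2(t+1)-\sigma^2(t)|\leq 4|\sigma(t)|(1+w)/w + 2((1+w)/w)^2 = C_5|\sigma(t)|+C_6$; the indicators are pulled out using $\ind_{G_{t+1}}\in\mathcal F_t$. The main obstacle I anticipate is the bookkeeping around $G_{t+1}$: the contraction constant $C_1$ degenerates unless the event $G_{t+1}$ also implies $A/n$ is bounded below by a constant, so the lemma is really meant to be applied with $G_{t+1}\supseteq \widehat B_{t+1}\cap\{\xi(t)\in\mathcal E\}$ rather than $\widehat B_{t+1}$ alone.
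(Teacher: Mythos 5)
Your proof is correct and follows essentially the same route as the paper's: expand $\sigma^2(t+1)$ around $\sigma^2(t)$ using the bounded increment $|\sigma(t+1)-\sigma(t)|\leq 1+1/w$, identify the drift through $aA-\sum_i A_i^2/w_i$, and use the event $\widehat B_t$ to replace $\sum_i A_i^2/w_i$ by $A^2/w$ (the paper does this via Cauchy--Schwarz, obtaining an error of $O(n^{3/2}\sqrt{\log n})$, whereas your exact weighted-Jensen identity gives the sharper $O(wn\log n)$, which is why your Young-inequality absorption cleanly yields an $O(1)$ additive constant). Your closing remark that the contraction constant requires $A(t)=\Theta(n)$, so that $G_{t+1}$ must effectively also carry $\xi(t)\in\mathcal E$ and not merely the potential bounds defining $\widehat{\mathcal E}$, is a fair and accurate observation: the paper asserts $A(t)\geq c_3 n$ on $\widehat B_t$ with exactly this implicit assumption.
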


With the above lemma, the proof of \cref{lemma:thirdPotentialSmall} follows exactly the same steps of proofs of \cref{lemma:firstPotentialSmall} and \cref{lemma:secondPotentialSmall}, and thus omitted.

\begin{proof}[Proof of \cref{lemma:potential3Prop}]
In the event $\widehat B_t$ we have that
\begin{align}
\sum_{i=1}^k\sum_{j=1}^k \left(\frac{A_i(t)}{w_i}-\frac{A_j(t)}{w_j}\right)^2 \leq \widehat Cwn\log n \,.\label{eqn:random8ruqqwez}
\end{align}

Rearranging the terms we have that
\begin{align}
\frac{1}{k^2}\sum_{i=1}^k\sum_{j=1}^k \left(\frac{A_i(t)}{w_i}-\frac{A_j(t)}{w_j}\right)^2 &= 2\left(\sum_{i=1}^k \frac{(A_i(t)/w_i)^2}{k}-\left(\sum_{i=1}^k \frac{1}{k}\frac{A_i(t)}{w_i} \right)^2\right)\nonumber\\
&= 2\left(\sum_{i=1}^k \frac{1}{k}\left(\frac{A_i(t)}{w_i}-X(t) \right)^2 \right) \leq \frac{\widehat Cw n\log n}{k^2}\,.
\end{align} 

Denote $X(t)= \frac{1}{k}\sum_{i=1}^k \frac{A_i(t)}{w_i}$, then we have that with probability
\begin{align*}
 \frac{1}{k}|A(t)-wX(t)| &\leq \frac{1}{k}\left|\sum_{i=1}^k A_i(t)-w_iX(t)\right|  = \frac{1}{k}\left|\sum_{i=1}^k \left(\frac{A_i(t)}{w_i}-X(t)\right)w_i\right| \nonumber\\
 &\leq \sqrt{\sum_{i=1}^k \frac{1}{k}\left(\frac{A_i(t)}{w_i}-X(t) \right)^2 }\sqrt{\frac{1}{k}\sum_{i=1}^k w_i^2} \leq c_1 \sqrt{n\log n}
\end{align*}
for some constant $c_1>0$ (this constant depends on $w$ and $k$, however, we have assumed they are constants). 

We also have that
\begin{align*}
\left(\sum_{i=1}^k \frac{A_i(t)^2}{w_i}\right) &-wX(t)^2 = \sum_{i=1}^k \left(\frac{A_i(t)^2}{w_i^2} -X(t)^2\right)w_i \nonumber\\
&\leq \sqrt{\sum_{i=1}^k \left(\frac{A_i(t)}{w_i}-X(t)\right)^2} \sqrt{\sum_{i=1}^k \left(\frac{A_i(t)}{w_i}+X(t)\right)^2 w_i^2} \leq c_1n^{3/2}\sqrt{\log n}
\end{align*}
since $A_i(t)\leq n$ and $X(t)\leq n$. Thus, combining the two inequalities we get
\begin{align}\label{eqn:random84g84jd00d}
\left|\sum_{i=1}^k \frac{A_i(t)^2}{w_i} - \frac{A(t)^2}{w}\right| \leq c_1n^{3/2}\sqrt{\log n}\,.
\end{align}

Now, denote $\sigma(t) = A(t)/w-a(t)$, then 
\begin{align*}
&\E\left( \ind_{G_{t+1}}\sigma(t+1)^2\given\mathcal F_t\right)\leq \ind_{G_{t+1}}\left(\sigma(t)^2 + 2\sigma(t)\left(1+\frac{1}{w} \right)\frac{1}{n(n-1)}\left(a(t)A(t)-\sum_{i=1}^k A_i(t)^2/w_i \right)\right) + 4.
\end{align*}

In the previous inequality we just wrote $\sigma(t+1) = \sigma(t) + (\sigma(t+1)-\sigma(t))$, and square both sides. Notice that $|(\sigma(t+1)-\sigma(t))|\leq (1+\frac{1}{w})$ as at most one agent changes its colour. Also, notice that the probability that a dark agent changes its colour to light at time $t$ is given by $\sum_{i=1}^k \frac{(A_i(t)^2/w)}{n(n-1)} $ and the probability it changes from light to dark is $\frac{a(t)A(t)}{n(n-1)}$. By using the bound given in \cref{eqn:random84g84jd00d}, we get
\begin{align*}
\E\left(\ind_{G_{t+1}} \sigma(t+1)^2\given\mathcal F_t\right)&\leq \ind_{G_{t+1}}\left(\sigma(t)^2 + c_2\frac{\sigma(t)}{n^2}\left(a(t)A(t)-\frac{A(t)^2}{w} \right)\right) + c_1\sqrt{\frac{\log n}{n}}+4\nonumber\\
&= \ind_{G_{t+1}}\left(\sigma(t)^2 + c_2\frac{\sigma(t)}{n^2}\left(a(t)A(t)-\frac{A(t)^2}{w} \right)\right) + 5\nonumber\\
&= \ind_{G_{t+1}}\sigma(t)^2\left(1-\frac{c_2}{n^2}A(t)\right)+ 5\nonumber\\
&= \ind_{G_{t+1}}\sigma(t)^2\left(1-\frac{c_2}{n}\right)+5
\end{align*}
where in the last step we used that in the event $\widehat B_t$ (which is contained in $G_{t+1}$) we have that $A(t) \geq c_3n$ for some constant $c_3 > 0$ (the constants $c_1, c_2$ etc.. may change value from line to line).

For the variance, notice that
\begin{align*}
\var(\ind_{G_{t+1}}\sigma(t+1)^2|\mathcal F_t)& = \var(\ind_{G_{t+1}}(\sigma(t)+(\sigma(t+1)-\sigma(t))^2|\mathcal F_t) \nonumber\\
&\leq 8\sigma(t)^2\ind_{G_{t+1}}\var(\sigma(t+1)-\sigma(t)|\mathcal F_t)+ 8\var(\ind_{G_{t+1}}(\sigma(t+1)-\sigma(t))^2|\mathcal F_t)\,,
\end{align*}
where in the inequality we used that $cov(X,Y) \leq 2(\var(X)+\var(Y))$ for any pair of random variables $X,Y$. Finally, note that  $|\sigma(t+1)-\sigma(t)|\leq 1+1/w\leq 2$ as at most only one agent changes its colour from a light one to dark one, or viceversa. Then
\begin{align*}
\var(\ind_{G_{t+1}}\sigma(t+1)^2|\mathcal F_t)\leq C_3\ind_{G_{t}}\sigma^2(t) + C_4 \,.
\end{align*}
Finally, since $|\sigma(t+1)-\sigma(t)|\leq 2$\,, we have
\begin{align*}
\E(\ind_{G_{t+1}}\sigma(t+1)^2|\mathcal F_t)-\ind_{G_{t+1}}\sigma(t+1)^2| &\leq \ind_{G_{t+1}}\left|\E(2\sigma(t)(\sigma(t+1)-\sigma(t))+(\sigma(t+1)-\sigma(t))^2|\mathcal F_t)\right|\nonumber\\
& \qquad + \ind_{G_{t+1}}\left|2\sigma(t)(\sigma(t+1)-\sigma(t))+(\sigma(t+1)-\sigma(t))^2\right|\nonumber\\
&\leq \ind_{G_{t}}C_5 |\sigma(t)|+ C_6 \,.
\end{align*}

\end{proof}

\appendix

\section{Auxiliary Results}

\begin{theorem}[Chapter XIV.2, XIV.3 in \cite{f68}]\label{pro:CaminataAleatoriaParcial}
Let $p \in (0,1)\setminus\{ 1/2\}$ and $b,s\in\mathbb{N}$. Consider a discrete time Markov chain $(Z_t)_{t\geq 0}$
with state space $\Omega=[0,b]$ where
\begin{itemize}
\item $Z_0=s \in [0,b]$
\item  $\Pr{Z_t=i ~|~ Z_{t-1}=i-1}=p$ for $i\in [1,b-1], t\geq 1$
\item $\Pr{Z_t=i ~|~ Z_{t-1}=i+1}=1-p$ for $i\in [1,b-1] , t\geq 1$
\item $\Pr{Z_t=i ~|~ Z_{t-1}=i}=1$ for $i\in \{0,b \}, t\geq 1$\,.
\end{itemize}
Let $T= \min \{ t\geq 0 ~|~ Z_t \in \{ 0,b \}\}$. Then,
\begin{align*}
\Pr{Z_T=b}=\frac{\left(\frac{1-p}{p}\right)^{s} - 1  }{ \left(\frac{1-p}{p}\right)^{b} - 1} \,,\;\text{ and }\; \Pr{Z_T=0}=\frac{   \left(\frac{1-p}{p}\right)^{b}- \left(\frac{1-p}{p}\right)^{s}   }{ \left(\frac{1-p}{p}\right)^{b} - 1}\,.
\end{align*}
Moreover,
\[ \Ex{T}=\frac{s}{1-2p} - \frac{b}{1-2p} \frac{1-\left(\frac{1-p}{p}\right)^s}{1-\left(\frac{1-p}{p}\right)^b}\,. \]
\end{theorem}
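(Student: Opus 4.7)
The plan is to prove this classical gambler's ruin result by the standard martingale method, handling the two hitting probabilities and the expected hitting time separately. Let $r = (1-p)/p$, which is well defined and different from $1$ since $p \in (0,1)\setminus\{1/2\}$.

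For the hitting probabilities, I introduce the process $Y_t = r^{Z_t}$ and verify it is a martingale: for any interior state $i \in [1,b-1]$,
\begin{align*}
\E\left(Y_{t+1} \given Z_t = i\right) = p\, r^{i+1} + (1-p)\, r^{i-1} = r^{i}\bigl(pr + (1-p)/r\bigr) = r^{i},
\end{align*}
while on the absorbing states $\{0,b\}$ the process is constant. Given that $T$ has finite expectation (addressed below) and $Y_t$ is uniformly bounded by $\max(1, r^b)$, the optional stopping theorem yields
\begin{align*}
r^{s} = \E(Y_T) = \Prob(Z_T = 0)\cdot 1 + \Prob(Z_T = b)\cdot r^{b}.
\end{align*}
Combined with $\Prob(Z_T=0) + \Prob(Z_T=b) = 1$, this is a linear system whose solution is exactly the stated formulas for the two hitting probabilities.

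For $\E(T)$, I consider $M_t = Z_t - (2p-1) t$. At any non-absorbing step $Z_t$ moves by $+1$ w.p.\ $p$ and $-1$ w.p.\ $1-p$, so the drift is $2p-1$ and hence $M_t$ is a martingale with bounded increments. Applying optional stopping again,
\begin{align*}
s = \E(M_T) = \E(Z_T) - (2p-1)\E(T) = b\,\Prob(Z_T=b) - (2p-1)\E(T).
\end{align*}
Solving for $\E(T)$ and substituting the closed-form expression of $\Prob(Z_T=b)$ derived above yields the stated identity after elementary algebraic manipulation, using $(r^s - 1)/(r^b - 1) = (1 - r^s)/(1 - r^b)$.

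The only genuine technical obstacle is to verify $\E(T) < \infty$ so that the optional stopping theorem applies in its simplest form. This follows from a standard block argument: partition time into disjoint blocks of $b$ consecutive steps. Regardless of the starting position within each block, the walk reaches $\{0,b\}$ with probability at least $\min(p,1-p)^b > 0$ (e.g., by taking $b$ consecutive $+1$ steps). Hence $T$ is stochastically dominated by $b$ times a geometric random variable with success probability $\min(p,1-p)^b$, which has finite expectation. With this integrability in hand, all remaining steps are routine algebra.
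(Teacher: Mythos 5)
Your proof is correct. Note that the paper does not prove this statement at all: it is imported verbatim as an auxiliary result from Feller (Chapter XIV.2--XIV.3), where the classical treatment sets up the linear recurrences $q_i = p\,q_{i+1} + (1-p)\,q_{i-1}$ for the ruin probabilities and $D_i = p\,D_{i+1} + (1-p)\,D_{i-1} + 1$ for the expected duration, and solves them explicitly via the characteristic root $r=(1-p)/p$. Your route through the two martingales $r^{Z_t}$ and $Z_t-(2p-1)t$ plus optional stopping reaches the same formulas with less computation, at the cost of having to justify the applicability of optional stopping; you handle that correctly with the geometric-domination argument for $\E(T)<\infty$ and the boundedness of $r^{Z_{t\wedge T}}$. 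The one point worth tightening is that $M_t = Z_t-(2p-1)t$ is a martingale only up to absorption (on $\{0,b\}$ the chain stays put but the compensator keeps running), so you should formally work with the stopped process $M_{t\wedge T}$; this is cosmetic and does not affect the conclusion. Both approaches are standard and equally rigorous; the difference-equation method generalises more readily to non-homogeneous transition probabilities, while yours extends more cleanly to continuous-time or continuous-space analogues.
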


The following the Chernoff bound for Markov Chains is from \cite{chung2012}, however, we simplified their general statement for the purposes of our paper.

\begin{theorem}[\cite{chung2012}]\label{thm:MCchernoff}
Let $M$ be an ergodic Markov Chain with stationary distribution $\pi$ on a finite state space. Let $T_{mix}$ represent the $(1/8)$-mixing time. Let $N_i$ denote the number of hit of $M$ to the state $i$ in the first $t$ steps, then for $0<\delta<1$, it holds that
\begin{align*}
\Prob(|N_i-\pi(i)t|\leq \delta \pi(i)t)\leq c \exp\left(-\delta^2 \pi(i)t/(72T_{mix}) \right),
\end{align*}
where $c>0$ is a constant independent of $\delta$ and $\pi$, and $t$.
\end{theorem}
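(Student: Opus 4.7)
The plan is to follow the standard Chernoff-type argument adapted to Markov chains via the matrix moment generating function technique, as worked out in \cite{chung2012}. Setting $f = \Ind_{\{i\}}$ so that $N_i = \sum_{s=1}^{t} f(X_s)$ and $\pi(f) = \pi(i)$, the task is to bound $\E(e^{\theta N_i})$ for a well-chosen $\theta$, apply Markov's inequality, and optimise over $\theta$. Note that the stated inequality must be a deviation bound (i.e.\ the LHS should be $\Prob(|N_i - \pi(i)t|\geq \delta \pi(i)t)$), which is how a Chernoff-type conclusion for Markov chains is invariably formulated.

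First I would introduce the perturbed transition operator $P_\theta$ defined by $P_\theta(x,y) = P(x,y)\, e^{\theta f(y)}$ and verify by induction on $t$ that $\E_{\nu_0}\!\bigl(e^{\theta N_t}\bigr) = \nu_0^{\top} P_\theta^{t} \mathbf{1}$ for any starting distribution $\nu_0$. This reduces the problem to controlling the spectral radius $\rho(P_\theta)$, since powers of $P_\theta$ are governed by $\rho(P_\theta)^{t}$ up to a factor depending on $\nu_0$ and on the eigenspace geometry of $P$.

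Next I would view $P_\theta$ as an analytic perturbation of $P$ around $\theta = 0$ and expand its dominant eigenvalue. Since $P$ has top eigenvalue $1$ with left eigenvector $\pi$ and right eigenvector $\mathbf{1}$, first and second order perturbation theory give $\rho(P_\theta) = 1 + \theta \pi(i) + \tfrac{\theta^{2}}{2}\kappa + O(\theta^{3})$, where the $\kappa$ term is controlled by the fundamental matrix $Z = (I - P + \mathbf{1}\pi)^{-1}$. The operator norm of $Z$ in the $\pi$-weighted inner product scales like $O(T_{mix})$, which is the mechanism by which the $(1/8)$-mixing time enters. Consequently $\rho(P_\theta) \leq \exp\!\bigl(\theta \pi(i) + c\,\theta^{2} \pi(i)\, T_{mix}\bigr)$ for sufficiently small $\theta$, and therefore $\E\!\bigl(e^{\theta N_t}\bigr) \leq c_1 \exp\!\bigl(\theta \pi(i) t + c\,\theta^{2} \pi(i)\, T_{mix}\, t\bigr)$.

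Finally, Markov's inequality applied to the upper tail $\{N_i \geq (1+\delta)\pi(i) t\}$ yields a bound of the form $c_1 \exp\!\bigl(-\theta \delta \pi(i) t + c\,\theta^{2}\, \pi(i)\, T_{mix}\, t\bigr)$; optimising at $\theta \asymp \delta / T_{mix}$ gives an exponent of $-\Omega\!\bigl(\delta^{2} \pi(i) t / T_{mix}\bigr)$, the symmetric choice $\theta < 0$ handles the lower tail, and a union bound produces the two-sided statement with the explicit constant $72$. The main obstacle will be controlling the perturbative remainder for chains that are not necessarily reversible, where the usual Rayleigh-quotient bounds are unavailable; I would handle this either by passing to the multiplicative reversibilisation of $P$ or, following Chung--Lu, by bounding the resolvent $Z$ directly in a $\pi$-weighted norm using the mixing-time hypothesis as a proxy for the spectral gap. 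Carrying through the constants in this step is precisely what pins down the $72$ in the denominator.
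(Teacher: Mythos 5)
The paper does not actually prove this theorem: it is imported verbatim (in simplified form) from \cite{chung2012}, so there is no internal proof to compare yours against. That said, your sketch is essentially the argument of the cited source: pass to the tilted operator $P_\theta(x,y)=P(x,y)e^{\theta f(y)}$, identify $\E(e^{\theta N_i})$ with $\nu_0^{\top}P_\theta^{t}\mathbf{1}$, control the growth rate of $P_\theta^{t}$ using the mixing-time hypothesis, and finish with Markov's inequality and optimisation over $\theta$. Two remarks. First, you are right that the statement as printed has the inequality inside the probability reversed; it must read $\Prob(|N_i-\pi(i)t|\geq \delta\pi(i)t)\leq c\exp(-\delta^2\pi(i)t/(72T_{mix}))$ for the theorem to be a deviation bound, and this is how it is used in Section 2.4. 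Second, the one place your outline diverges from \cite{chung2012} is the treatment of non-reversibility: they do not run eigenvalue perturbation theory on $\rho(P_\theta)$ but instead bound the $\pi$-weighted operator norm of $P_\theta$ restricted to blocks of length $T_{mix}$, decomposing vectors into their component along $\pi$ and its orthogonal complement; this is what produces the explicit constant $72$ without any reversibilisation. Your proposed alternatives (multiplicative reversibilisation, or a direct resolvent bound in the $\pi$-weighted norm) can be made to work but would yield a different, less explicit constant, which is immaterial for the application in this paper since the constant is absorbed into the $O(\cdot)$ anyway.
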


\begin{acks}
We would like to express our sincere gratitude to Colin Cooper and Tomasz Radzik, for their insightful suggestions and stimulating discussions. Nicolás Rivera was supported by the the Millennium
Institute for Foundational Research on Data (IMDF), and was supported by FONDECYT grant number 3210805.
\end{acks}

\bibliographystyle{ACM-Reference-Format}
\bibliography{references.bib}

\end{document}